\documentclass[11pt]{article}
  \usepackage[utf8]{inputenc} 
  \usepackage[T1]{fontenc}
  \usepackage{fullpage}
  \usepackage{amsthm, latexsym,amssymb,amsfonts,amsmath,mathrsfs,float}
  \usepackage[font=small,labelfont=bf, width=.75\textwidth]{caption} 
  \usepackage[dvipsnames]{xcolor}
  \usepackage{mathtools}
  \usepackage{graphicx}
  \usepackage{dsfont}
  \usepackage{subcaption}
  \usepackage{makecell}
  \usepackage{qcircuit}
  \usepackage{tikz}
  \usepackage{mleftright}
  \mleftright
  \usepackage[toc,page]{appendix}
  \usepackage{bbm}
  \usetikzlibrary{decorations.pathmorphing}
  \usepackage[linktocpage=true, linkbordercolor=red, ocgcolorlinks]{hyperref}
  \hypersetup{colorlinks=true, allcolors=QuSoft}
  \usepackage{enumitem}
  \usepackage[affil-it]{authblk}
  \makeatletter
  \def\namedlabel#1#2{\begingroup
	#2%
	\def\@currentlabel{#2}%
    	\phantomsection\label{#1}\endgroup
  }   

  \definecolor{UltramarineBlue}{RGB}{18,10,143}
  \definecolor{QuSoft}{RGB}{179, 16, 32}

  \newcommand{\ignore}[1]{}	


  
  \usepackage{braket}
 \newcommand{\ketbra}[2]{\ket{#1}\!\bra{#2}}

  \DeclareMathOperator{\Tr}{Tr}

  \newcommand{\Prob}{\mathbb{P}}
  \newcommand{\eps}{\varepsilon}


  \newtheorem{theorem}{Theorem}[section]

  \newtheorem{proposition}[theorem]{Proposition}
  \newtheorem{lemma}[theorem]{Lemma}

  \makeatletter
\renewcommand*{\@fnsymbol}[1]{\ensuremath{\ifcase#1\or *\or \mathparagraph\or \ddagger\or
    \mathsection \or \dagger\or \|\or **\or \dagger\dagger
    \or \ddagger\ddagger \else\@ctrerr\fi}}
\makeatother
  
\begin{document}
\numberwithin{equation}{section}

\title{\textbf{Towards Practical and Error-Robust Quantum Position Verification}}

\author[1]{Rene Allerstorfer\footnote{Email: \href{mailto:rene.allerstorfer@cwi.nl}{rene.allerstorfer@cwi.nl}}}
\author[1,2]{Harry Buhrman\footnote{Email: \href{mailto:harry.buhrman@cwi.nl}{harry.buhrman@cwi.nl}}}

\author[2]{Florian Speelman\footnote{Email: \href{mailto:f.speelman@uva.nl}{f.speelman@uva.nl}}}
\author[1]{Philip Verduyn Lunel\footnote{Email: \href{mailto:philip.verduyn.lunel@cwi.nl}{philip.verduyn.lunel@cwi.nl}}}
\affil[1]{QuSoft, CWI Amsterdam, Science Park 123, 1098 XG Amsterdam, The Netherlands}
\affil[2]{QuSoft, University of Amsterdam, Science Park 904, 1098 XH Amsterdam, The Netherlands}

\date{Dated: \today}
\maketitle

\begin{abstract}
\noindent 
Loss of inputs can be detrimental to the security of quantum position verification (QPV) protocols, as it may allow attackers to not answer on all played rounds, but only on those they perform well on. In this work, we study \textit{loss-tolerant} QPV protocols. We propose a new fully loss-tolerant protocol QPV$_{\textsf{SWAP}}$, based on the SWAP test, with several desirable properties. The task of the protocol, which could be implemented using only a single beam splitter and two detectors, is to estimate the overlap between two input states. By formulating possible attacks as a semi-definite program (SDP), we prove full loss tolerance against unentangled attackers restricted to local operations and classical communication, and show that the attack probability decays exponentially under parallel repetition of rounds. We show that the protocol remains secure even if unentangled attackers are allowed to quantum communicate, making our protocol the first fully loss-tolerant protocol with this property. A detailed analysis under experimental conditions is conducted, showing that QPV$_{\textsf{SWAP}}$ remains fairly robust against equipment errors. We identify a necessary condition for security with errors and simulate one instance of our protocol with currently realistic experimental parameters, gathering that an attack success probability of $\leq10^{-6}$ can be achieved by collecting just a few hundred conclusive protocol rounds.
\end{abstract}

\section{Introduction}
Geographical position is an important contributor to trust---for example, a message which provably comes from a secure location in a government institution, has automatic credence to actually be sent by that government.
Position-based cryptography is the study of using position as a cryptographic credential.
The most basic task here is to certify someone's position, but this can be extended to messages that can only be read at a certain location, or to \emph{authenticating} that a message came (unaltered) from a certain location.

We will focus on the task of \emph{position verification}, which can be used as the building block for tasks like position-based authentication. For simplicity, the focus will be on the one-dimensional case, i.e.\ verifying one's position on a line, but the relevant ideas generalize readily to more dimensions. In our case, protocols will have the form of two \emph{verifiers}, $\mathsf{V_0}$ and $\mathsf{V_1}$, attempting to verify the location of a \emph{prover} $\mathsf{P}$.
An adversary to a scheme will take the form of a coalition of attackers, while the location of $\mathsf{P}$ is empty.
Notationally, we'll use $\mathsf{A}$ (or Alice) for the attacker located between $\mathsf{V_A}$ and the location of $\mathsf{P}$, and $\mathsf{B}$ (or Bob) for the attacker location between the location of $\mathsf{P}$ and $\mathsf{V_B}$. 

It was shown by Chandran, Goyal, Moriarty, and Ostrovsky~\cite{chandran_position_2009} that without any additional assumptions, position verification is an impossible task to achieve classically.
The quantum study of quantum position verification (QPV) was first initiated by Beausoleil, Kent, Munro, and Spiller resulting in a patent pusblished in 2006 \cite{KentPatent2006}. The topic first appeared in the academic literature in 2010 ~\cite{Malaney2010, MalaneyNoisy2010}, followed by various proposals and ad-hoc attacks ~\cite{kent_quantum_2011,lau_insecurity_2011}.
A general attack on quantum protocols for this task was presented by Buhrman, Chandran, Fehr, Gelles, Goyal, Ostrovsky, and Schaffner~\cite{buhrman_position-based_2011}, requiring a doubly-exponential amount of entanglement.
This attack was further improved to requiring an exponential amount of entanglement by Beigi and K\"onig~\cite{beigi_simplified_2011} -- much more efficient but still impractically large. (See also~\cite{gao2013,dolev2019} for generalizations of such attacks to different settings, with similar entanglement scaling.)

A natural question is therefore whether some QPV protocols can be proven secure against attackers that share a limited amount of entanglement or even none at all.

Since it's so hard to generate entanglement, it is already interesting to study whether protocols are secure against adversaries that are very limited in their access to pre-shared entangled states.
For instance, the QPV$_{\text{BB84}}$ protocol \cite{kent_quantum_2011}, inspired by the BB84 quantum key-distribution protocol, involves only a single qubit sent by  $\mathsf{V_A}$, in the state $\ket{0}$, $\ket{1}$, $\ket{+}$, or $\ket{-}$, and the choice of basis sent by  $\mathsf{V_B}$. 
Even though this protocol is insecure against attackers sharing a single EPR pair~\cite{lau_insecurity_2011}, security can be proven against unentangled attackers \cite{buhrman_position-based_2011}, so that $\Theta({n})$ entanglement is required to break the $n$-fold parallel repetition \cite{tomamichel_monogamy--entanglement_2013,ribeiro_tight_2015}.
At the current technological level, such protocols are very interesting to analyze, and would already give a super-classical level of security if implemented in practice.

Additionally, other protocols have been proposed \cite{kent_quantum_2011,chakraborty_practical_2015, unruh_quantum_2014,junge2021geometry,bluhm2021position}, that combine classical and quantum information in interesting ways, sometimes requiring intricate methods to attack~\cite{buhrman2013garden, speelman2016,olivo_breaking_2020}. 

Unfortunately, implementing any of the mentioned protocols would run into large obstacles: the quantum information involved would have to be sent at the speed of light, i.e., using photons, and in realistic experimental setups a large fraction of photons will be lost and errors occur.
Compensating for this in the most natural way, by ignoring rounds whenever the prover claims that a photon was lost in transmission, lets attackers break all these protocols because they are not fully loss tolerant.
In our contribution, we study loss-tolerant QPV by presenting a new fully loss-tolerant protocol, together with a comprehensive security analysis both in the theoretical and experimental setting.

\paragraph{Loss-tolerance in QPV.} Throughout, we will use $\eta$ as rate of transmission, i.e., the probability that an quantum message arrives -- in realistic protocols.
We will distinguish two types of loss tolerance that we might require schemes to satisfy.

The first, \emph{partial loss tolerance}, refers to a protocol which is secure for some values $\eta \geq \eta_{\mathrm{threshold}}$, meaning that the honest parties have a maximum level of allowed loss. Security is only guaranteed in a situation where a high enough fraction of the rounds are played.
If significantly more photons than this threshold are lost, then the protocol will have to abort. Examples of partial loss tolerant schemes are extensions of QPV$_{\text{BB84}}$ to more bases \cite{qi_loss-tolerant_2015,speelman_position-based_2016}, that are secure against unentangled attackers in an environment with some loss.\footnote{This notion will be satisfied to a small level even by schemes that are not designed to be loss tolerant, simply by having some error-robustness. The basic QPV$_{\text{BB84}}$ scheme can directly be seen to be partially loss tolerant for loss below $\frac{1}{2}-\frac{1}{2\sqrt{2}}$, and the simplest attack that uses loss only works when the loss is above $\frac{1}{2}$.}

\emph{Full loss tolerance} is achieved when a protocol is secure, irrespective of the loss rate. In particular, the protocol stays secure when conditioning on those rounds where the prover replied, fully ignoring rounds where a photon is lost. The protocol by Lim, Xu, Siopsis, Chitambar,
Evans, and Qi \cite{qi_free-space_2015,lim_loss-tolerant_2016}, the first fully loss-tolerant protocol, consists of  $\mathsf{V_A}$ and $\mathsf{V_B}$ both sending a qubit, and having the prover perform a Bell measurement on both, broadcasting the measurement outcome.
This protocol is secure against unentangled attackers, no matter the loss rate.

In the current work, we advance the study of loss-tolerant QPV with the following results:
\begin{itemize}
	\item We present a new fully-loss tolerant protocol: QPV$_{\textsf{SWAP}}$, which is based on the SWAP test~\cite{buhrman_quantum_2001}. The new protocol compares favorably to Lim et al.'s protocol~\cite{lim_loss-tolerant_2016} in terms of ease of implementation using linear optics, by requiring only a single, non-polarizing beam splitter -- the Hong-Ou-Mandel effect can be viewed as equivalent to the SWAP test \cite{jex2004comparing, garcia-escartin_swap_2013} so that, physically speaking, our protocol is based on two-photon interference.\footnote{The protocol uses two input photons, one generated by each verifier.}
	\item We prove fully loss tolerant security by formulating possible attacks as a semi-definite program (SDP), and show that the protocol is secure against unentangled attackers who can communicate only classically.
	
	Additionally, we show that the attack probability decays exponentially under \emph{parallel repetition}: when attackers respond to a size $k$ subset out of $n$ parallel rounds, pretending photon loss on the other inputs, their probability of a successful attack still decays exponentially in~$k$.
	Such a parallel repetition is not known for the protocol of~\cite{lim_loss-tolerant_2016}, and this is the first parallel repetition theorem for fully loss tolerant QPV. We obtain this result by constructing an SDP formulation of the $n$-fold parallel repetition of the problem, constructing a dual of this SDP for variable $n$, and then finding a point in the generalized dual problem.
	\item We show that the SWAP-test can be perfectly simulated with local operations and one round of classical communication if one maximally entangled state is pre-shared. Hence $\tilde{O}(n)$ EPR pairs are sufficient for an entanglement attack on our $n$-round protocol. We also show that at least $\sim 0.103n$ EPR pairs are necessary.
	\item Using an argument based on the monogamy of entanglement from \cite{AllBuhSpeVer22}, security of QPV$_{\textsf{SWAP}}$ in the setting where unentangled attackers can quantum communicate is shown, making our protocol the first fully loss-tolerant QPV protocol with this property.
	\item We provide a detailed analysis of our protocol under experimental conditions, treating all equipment errors that can occur in the setup -- from source to detection. We show that QPV$_{\textsf{SWAP}}$ remains fairly robust against equipment errors, making it a great candidate for practical QPV. A necessary condition for security with errors of our protocol is identified and we simulate a specific QPV$_{\textsf{SWAP}}$ with either identical or orthogonal inputs using currently realistic experimental parameters. We gather that an attack success probability of $\leq 10^{-6}$ can be achieved by collecting just a few hundred conclusive protocol rounds.
\end{itemize}

\subsection{Structure of the paper}
In Section~\ref{sec: qpvswap} we present the protocol QPV$_{\textsf{SWAP}}$, with primary security analysis in Section~\ref{sec:qpvswapsecure}, extension of the security analysis to the loss-tolerant setting in Section~\ref{sec:qpvswaploss}, and an upper bound to the entanglement required for an attack in Section~\ref{sec:qpvswapattack}. In Section~\ref{sec:exp_analysis} we analyse our protocol under realistic experimental conditions. Finally, in Section~\ref{sec:simulationres} we show the results of simulating our protocol under experimental conditions.

\section{Preliminaries}
\label{sec:prelim}\subsection{Notation}
We denote parties in QPV protocols by letters \textsf{A}, \textsf{B}, etc. and their quantum registers as $A_1 \cdots A_n$, $B_1 \cdots B_n$ and so on, respectively. Sometimes we may refer to ``all registers party $\mathsf{X}$ holds'' just by \textsf{X}, giving expression like $\operatorname{Pos}(\mathsf{A} \otimes \mathsf{B})$, for example. Cumulative distribution functions are written as $F_X$, where $X$ is either a random variable or explicitly the distribution. Unless otherwise indicated, $\lVert \cdot \rVert_p$ is the usual $p$-norm. Partial transposition of an operator $P$ with respect to party $\mathsf{B}$ is denoted $P^{T_\mathsf{B}}$. The set of PPT-measurements\footnote{I.e.\ sets of positive semi-definite operators adding up to the identity, whose partial transposes are positive semi-definite as well.} on two subsystems held by parties \textsf{A} and \textsf{B}, respectively, is PPT$(\mathsf{A}:\mathsf{B})$.

\subsection{The SWAP test}
The SWAP test was first introduced in \cite{buhrman_quantum_2001} for quantum fingerprinting as a useful tool to determine if two unknown states are identical or not. The quantum circuit of it is depicted in Figure~\ref{fig:swaptest}.

\begin{figure}[ht]
	\centering
	\includegraphics[width=0.5\textwidth]{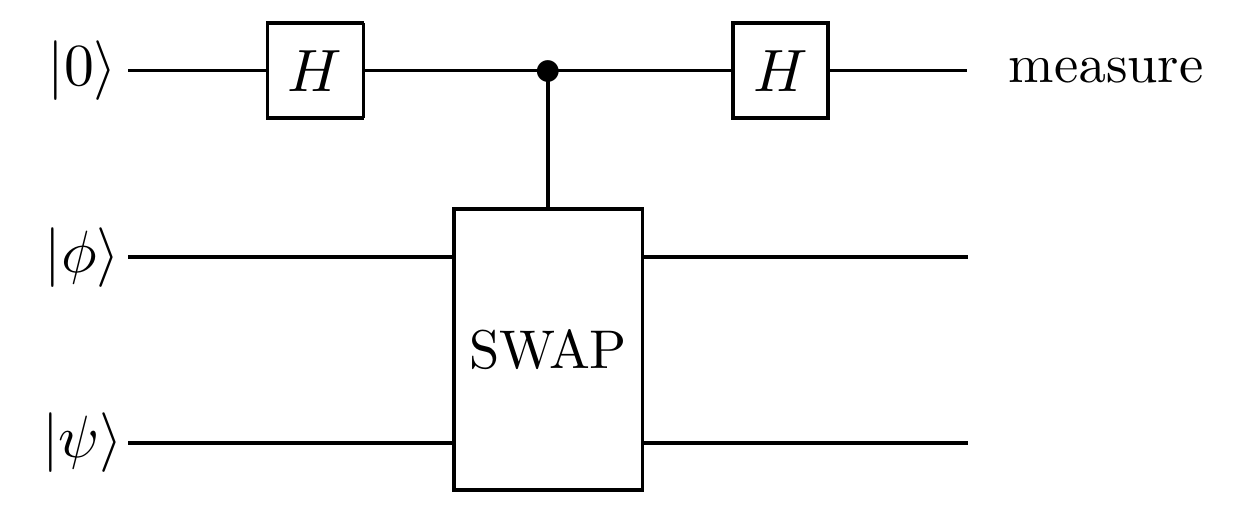}
	\caption{The SWAP test, taken from \cite{buhrman_quantum_2001}. $H$ denotes the Hadamard gate.}
	\label{fig:swaptest}
\end{figure}

\noindent The state to be measured in the computational basis is
\begin{align}
    (H \otimes \mathbbm{1}) \text{c-SWAP} (H \otimes \mathbbm{1}) \ket{0}\ket{\phi}\ket{\psi} = \frac{1}{2} \ket{0} (\ket{\phi}\ket{\psi} + \ket{\psi}\ket{\phi}) + \frac{1}{2} \ket{1} (\ket{\phi}\ket{\psi} - \ket{\psi}\ket{\phi}).
\end{align}
Therefore we have the measurement statistics
\begin{align}
    \Prob(0) = \frac{1+\lvert \braket{\psi|\phi}\rvert^2}{2} \qquad \text{and} \qquad \Prob(1) = \frac{1-\lvert\braket{\psi|\phi}\rvert^2}{2}.
    \label{eq: swap_distr}
\end{align}
The output distribution only depends on the overlap $\lvert \braket{\psi|\phi} \rvert$ between the input states. One notable special case is that for $\ket{\phi} = \ket{\psi}$ the SWAP operation has no effect and we get $\Prob(0) = 1$. Another advantage of the SWAP test is that it is easily implemented experimentally with a single beam splitter and two photon detectors \cite{jex2004comparing, garcia-escartin_swap_2013}. Its flexibility concerning input states and the simplicity of its experimental realization make it a good candidate for QPV.

\section{The \texorpdfstring{QPV$_{\textsf{SWAP}}$}{QPVswap} protocol}
\label{sec: qpvswap}
We define the protocol QPV$_{\textsf{SWAP}}(\beta_1, \dots, \beta_k)$, depicted in the space-time diagram in Figure \ref{fig:qpvswap}, as follows.

\begin{enumerate}
    \item By means of local and shared randomness or a secure private channel verifiers $\mathsf{V_A}$ and $\mathsf{V_B}$ uniformly draw a random overlap $\beta \in \{ \beta_1, \dots, \beta_k \}$ and agree on two uniformly random states $\ket{\psi}, \ket{\phi}$ such that $\lvert \braket{\psi|\phi} \rvert = \beta$. Then $\mathsf{V_A}$ prepares the state $\ket{\psi}$ and $\mathsf{V_B}$ prepares $\ket{\phi}$. Each verifier sends their state to $\mathsf{P}$ such that they arrive there simultaneously.
    \item The honest party $\mathsf{P}$ applies the SWAP test on the two quantum inputs as soon as they arrive at $\mathsf{P}$. This yields an output bit $z \in \{ 0, 1, \varnothing \}$, indicating $\mathsf{P}$'s measurement result or possibly a ``loss'' event. In particular, $\Prob(z=0\mid \beta, \text{not loss})=(1 + \beta^2)/2$ and $\Prob(z=1\mid \beta, \text{not loss})=(1 - \beta^2)/2$. Then $\mathsf{P}$ immediately sends $z$ to both verifiers $\mathsf{V_A}$ and $\mathsf{V_B}$.
    \item The verifiers closely monitor if they receive an answer in time and compare what they received. If they got different bits, or if at least one of their bits arrived too early/late, they abort and reject. Otherwise both verifiers add $z$ to their (ordered) lists of answers $L_\beta$.
    \item After having completed $R_\beta \geq R$ rounds with a conclusive answer $z \in \{0,1\}$, sequentially or in parallel for each $\beta$, they stop sending inputs, check if the rate of $\varnothing$ symbols is close enough\footnote{Say a rate $1-\eta$ is expected from \textsf{P}. The verifiers can apply an analogous statistical test around $1-\eta$ as described for the conclusive answers to check for any suspicious actions.} to what is expected from \textsf{P}, discard any rounds with answer $\varnothing$ and proceed to the statistical analysis on the sets of conclusive answers $C_\beta = L_\beta - \{ \varnothing \}$ for each $\beta$. They test if the sample $\hat{p}_\beta = \# \{ z \in C_\beta : z=0 \} / R_\beta$ on conclusive answers is contained in the $(1-\alpha)$-quantile around the expected $p_\beta = (1 + \beta^2)/2$.
    \item Only if they have received the same answer in time in every single round and if the statistical test was passed on all $L_\beta$, they accept. Otherwise, they reject.
\end{enumerate}

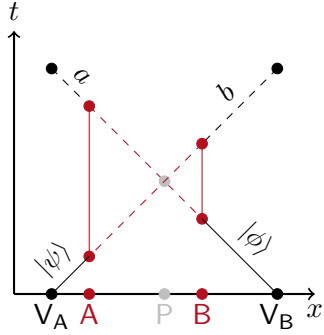
\begin{figure}[ht]
\centering
\begin{tikzpicture}
	\draw[->,black,thick] (-0.5,-3) -- (-0.5,0.5) node[above]{$t$};
	\draw[->,black,thick] (-0.5,-3) -- (3.5,-3) node[below]{$x$};
	
    \filldraw [black] (0,0) circle (2pt);
    \draw[black,dashed]  (0,0) -- (0.5,-0.5) node[midway, above, sloped]{$a$};
    \filldraw [QuSoft] (0.5,-0.5) circle (2pt);
    \draw[QuSoft,dashed]  (0.5,-0.5) -- (2,-2);
    \draw[QuSoft]  (0.5,-0.5) -- (0.5,-2.5);
    \filldraw [QuSoft] (0.5,-2.5) circle (2pt);
    \draw[black]  (0.5,-2.5) -- (0,-3) node[midway, above, sloped]{\small $\ket{\psi}$};
    \filldraw [black] (0,-3) circle (2pt) node[below]{$\mathsf{V_A}$};
    
    \filldraw [lightgray] (1.5,-1.5) circle (2pt);

    \filldraw [black] (3,0) circle (2pt);
    \draw[black,dashed]  (3,0) -- (2,-1) node[midway, above, sloped]{$b$};
    \filldraw [QuSoft] (2,-1) circle (2pt);
    \draw[QuSoft,dashed]  (2,-1) -- (0.5,-2.5);
    \draw[QuSoft]  (2,-1) -- (2,-2);
    \filldraw [QuSoft] (2,-2) circle (2pt);
    \draw[black]  (2,-2) -- (3,-3) node[midway, above, sloped]{\small $\ket{\phi}$};
    \filldraw [black] (3,-3) circle (2pt) node[below]{$\mathsf{V_B}$};
    
    \filldraw [QuSoft] (0.5,-3) circle (2pt) node[below]{$\mathsf{A}$};
    \filldraw [lightgray] (1.5,-3) circle (2pt) node[below]{$\mathsf{P}$};
    \filldraw [QuSoft] (2,-3) circle (2pt) node[below]{$\mathsf{B}$};
\end{tikzpicture}
\caption{Space-time diagram of the QPV$_\textsf{SWAP}$ protocol. We assume all information, quantum (---) and classical (-\,-\,-), travels at the speed of light. For graphical simplicity we have put $\mathsf{P}$ exactly in the middle of $\mathsf{V_A}$ and $\mathsf{V_B}$ (which is not necessary for the purposes of QPV). The attackers, not being at position $\mathsf{P}$, would like to convince the verifiers that they are at $\mathsf{P}$. Note that to have any chance of winning, attackers need to produce $a=b$.}
\label{fig:qpvswap}
\end{figure}
\noindent
Note that in essence the task in this protocol is to estimate the overlap $\beta$ of the input states. This is independent of the dimensionality/nature of the input states, making the protocol very flexible. To attack this protocol, it is evident that there need to be at least two attackers due to the timing constraint. A coalition of attackers has to position at least one party $\mathsf{A}$ between $\mathsf{V_A}$ and $\mathsf{P}$ and one party $\mathsf{B}$ between $\mathsf{P}$ and $\mathsf{V_B}$. Since the SWAP test is a joint operation on two quantum states, spatially separated attackers cannot apply the SWAP test, unless they have access to pre-shared entanglement, as we will show. Since any QPV protocol can be perfectly attacked if the attackers have access to enough pre-shared entanglement \cite{buhrman_position-based_2011}, we assume that the attackers have no pre-shared entanglement and we also restrict them to classical communication only for now.

To assess the security of this protocol, we consider the following. As the individual rounds are independent, the subsets $L_\beta$ of answers given input $\rho_\beta$ will be samples of a binomial distribution with parameters $R_\beta$ and some $q_{\beta}$\footnote{Here and in the following the parameter describes the fraction of ``0'' answers and we abbreviate $q_\beta(0)=q_\beta$}. The verifiers can then test if what they received matches closely enough with what they expect from an honest party. We define the statistical test to be done by the verifiers as follows:
\begin{enumerate}
    \item For each overlap $\beta$, they calculate the $(1-\alpha)$-quantile\footnote{In order to capture \textsf{P} with high probability, $\alpha$ can be set to a small number, e.g.\ $10^{-6}$.} around the ideal $p_\beta = (1+\beta^2)/2$, which gives a lower and an upper bound
    \begin{align}\label{equ:accreg_ideal}
    \begin{split}
        L_{\alpha, \beta} &\coloneqq z_{\frac{\alpha}{2}}(\beta, R_\beta) / R_\beta= F^{-1}_{\text{Bin}(R_\beta, p_\beta)}\left(\frac{\alpha}{2}\right) / R_\beta \\ U_{\alpha, \beta} &\coloneqq z_{1-\frac{\alpha}{2}}(\beta, R_\beta) / R_\beta= F^{-1}_{\text{Bin}(R_\beta, p_\beta)}\left(1-\frac{\alpha}{2} \right) / R_\beta,
    \end{split}
    \end{align}
    with $F^{-1}$ being the inverse cumulative distribution function. This defines an acceptance interval
    \begin{align}
        \mathsf{acc}_\beta(\alpha, R_\beta) \coloneqq [L_{\alpha, \beta}, U_{\alpha, \beta}].
    \end{align}
    \item For each overlap $\beta$, they check if the sample $\hat{p}_\beta \in \mathsf{acc}(\alpha, R_\beta)$. If this is the case for all $\beta$, they accept. Otherwise, they reject.
\end{enumerate}
By definition, the honest party will return a sample $\hat{p}^\mathsf{P}_\beta \in \mathsf{acc}_\beta(\alpha, R_\beta)$ with probability $1-\alpha$ and thus the test will accept \textsf{P} with high probability $(1-\alpha)^k = 1 - O(k\alpha)$. To optimize the overlap between their distribution and the acceptance regions, the attackers will attempt to respond as close to each $p_\beta$ as possible, with a binomial parameter of $p^\mathsf{AB}_\beta = p_\beta - \Delta_\beta$, defining a vector of differences 
\begin{align}
    \Delta = \begin{pmatrix}
		\Delta_{\beta_1} \\ \vdots \\ \Delta_{\beta_k}
	\end{pmatrix}.
\end{align}
If rounds are run in parallel, attackers could also decide to just respond a deterministic list with some fraction $\hat{p}^\mathsf{AB}$ of ``0'' answers. They could perfectly break the protocol if $\hat{p}^\mathsf{AB} \in \bigcap_\beta \mathsf{acc}_\beta(\alpha, R_\beta)$. This, however, we can always prevent by choosing the $R_\beta$'s large enough so that the acceptance regions for different overlaps get disjoint. Thus we need to evaluate
\begin{align}
    \Prob(\mathsf{acc} | \mathsf{attack}) \coloneqq \Prob \left( \hat{p}^\mathsf{AB}_\beta \in \mathsf{acc}_\beta(\alpha, R_\beta) \quad \forall \beta \right) = \prod_\beta \Prob \left( \hat{p}^\mathsf{AB}_\beta \in \mathsf{acc}_\beta(\alpha, R_\beta) \right) \eqqcolon \prod_\beta \Prob(\mathsf{acc}_\beta | \mathsf{attack}).
\end{align}
Now there are several cases to consider:
\begin{enumerate}
    \item[(1)]$\mathbf{\lVert \Delta \rVert_1 = 0.}$ Then $\Delta_\beta = 0$ for all $\beta$ and the attackers respond with the identical distribution as \textsf{P}, therefore $\Prob(\mathsf{acc} | \mathsf{attack}) = (1-\alpha)^k = 1 - O(k\alpha)$.
    \item[(2)] $\mathbf{p_{\boldsymbol{\beta}} \neq 1 \textbf{ and } \hat{p}^\mathsf{AB}_{\boldsymbol{\beta}} = 1}.$ Then $\Prob(\mathsf{acc} | \mathsf{attack}) =0$ as the $(1-\alpha)$-quantile around $p_\beta$ will exclude the value 1 (for sufficiently large $R_\beta$).
    \item[(3)] $\mathbf{p_{\boldsymbol{\beta}} = 1 \textbf{ and } \hat{p}^\mathsf{AB}_{\boldsymbol{\beta}} \neq 1}.$ Then $\Prob(\mathsf{acc}_\beta | \mathsf{attack}) = \left(p^\mathsf{AB}_\beta \right)^{R_\beta} = O\left(2^{-R_\beta}\right).$
    \item[(4)] $\mathbf{\lVert \Delta \rVert_1 \neq 0 \textbf{ and } p_{\boldsymbol{\beta}}, \hat{p}^\mathsf{AB}_{\boldsymbol{\beta}} \in \big[\frac{1}{2}, 1\big).}$ Then there exists a $\beta \in \{ \beta_1, \dots, \beta_k \}$ such that $\Delta_\beta \neq 0$. By using the Gaussian approximation for the binomial distributions (which we may apply as we can always make the number of rounds sufficiently large), one can show (cf. appendix \ref{AppendixExpSuppr}) that
    \begin{align}\label{equ:psuccattgen}
        \Prob(\mathsf{acc}_\beta | \mathsf{attack}) \lesssim \frac{\sqrt{2}f_{\beta}^\mathsf{AB}}{\sqrt{\pi R_\beta} \Delta_\beta} e^{-\left(\sqrt{R_\beta}\Delta_\beta - f_{\beta}^\mathsf{P} c_\alpha \right)^2 / \left(f_{\beta}^\mathsf{AB}\right)^2} = O \left( \frac{2^{-\Delta_\beta^2 R_\beta}}{\Delta_\beta \sqrt{R_\beta}}\right)
    \end{align}
    for functions $c_\alpha, f_\beta$ that are independent of $R_\beta$ and $\Delta_\beta$. Hence in this scenario the success probability of attackers is also exponentially suppressed.
\end{enumerate}
So unless $\Delta_\beta = 0$ for all $\beta$, we have exponential suppression in the attacker success probability $\Prob(\mathsf{acc} | \mathsf{attack})$. In the end, we can set a threshold $R_\text{threshold}$ for the number of rounds and the protocol is to be run until $R_\beta \geq R_\text{threshold}$ for all $\beta$. This will guarantee that any desired security level can be achieved by increasing $R_\text{threshold}$ ``uniformly'' over all $\beta$. We end up with a protocol that accepts an honest party with high probability and rejects (unentangled) attackers with high probability. A sketch is depicted in Figure~\ref{fig:acc_regions}.

\begin{figure}[ht]
	\centering
	\includegraphics[width=0.618\textwidth]{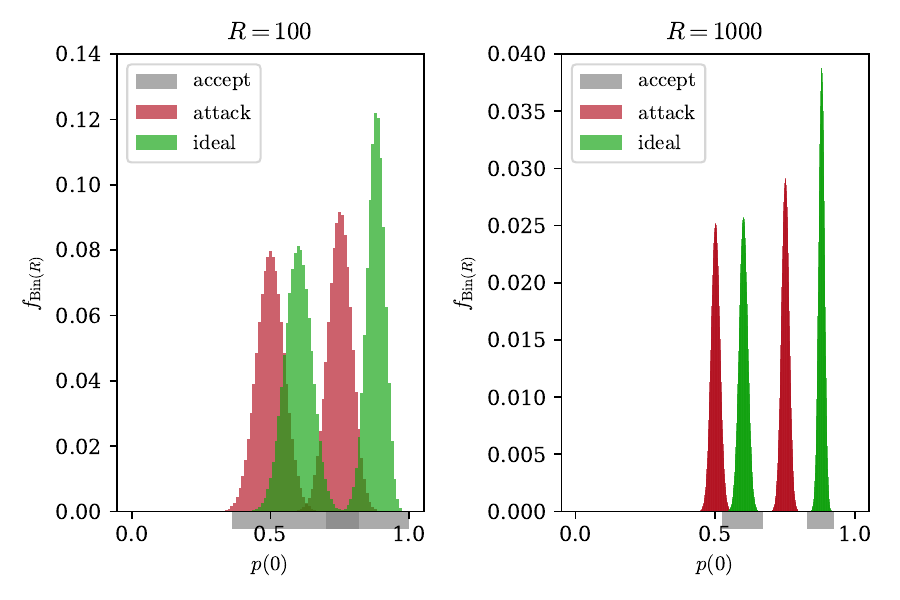}
	\caption{Sketch of the idea behind the statistical test. Acceptance regions around the expected honest $p_\beta$'s is defined such that \textsf{P} will be captured with high probability. Attackers trying to spoof verifiers by minimizing all $\Delta_\beta$ as well as possible have exponentially low (in $R$) probability of returning a sample contained in the acceptance regions for all $\beta$.}
	\label{fig:acc_regions}
\end{figure}

\noindent The analysis also suggests that optimally attackers want to minimize all $|\Delta_\beta|$ simultaneously\footnote{Note that minimizing $|p_\beta - p_\beta^{\mathsf{AB}}|$ also minimizes the Kullback-Leibler divergence $D_{\text{KL}}(P \parallel Q)$ between the corresponding binomial distributions $P$ and $Q$.}. We therefore choose to minimize $\lVert \Delta \rVert_1$. As LOCC $\subset$ PPT \cite{chitambar2014everything}, the following optimization program will provide a lower bound on $\lVert \Delta \rVert_1$ for LOCC restricted attackers. To account for imperfect quantum channel transmittance, we include a parameter $\eta \in (0, 1]$ and a third answer option $\varnothing$ (``loss''). Then $\Delta_\beta$ is to be evaluated conditioned on conclusive answers, i.e., $\Delta_\beta = p_\beta - \Tr[\Pi_0 \rho_\beta]/\eta$, where
\begin{align}
    \rho_\beta = \int_{\text{U}(2)} U \otimes U \ket{\psi \phi} \bra{\psi\phi} U^\dagger \otimes U^\dagger d\mu(U) = \frac{1}{3} \frac{1+\beta^2}{2} \Pi_\text{sym} + \frac{1-\beta^2}{2} \Pi_\text{a-sym}
\end{align}
is the mixed state the verifiers produce for overlap $\beta$ \cite{watrous_theory_2018}. Here $\Pi_\text{sym}, \Pi_\text{a-sym}$ are the projectors onto the symmetric and antisymmetric subspace, respectively, and $\mu$ is the Haar measure on the unitary group U$(2)$. These considerations lead us to the optimization:

\begin{align}\label{equ:opti}
\begin{split}
        \textbf{minimize: } & \lVert \Delta \rVert_1 \\
    \textbf{subject to: } &\Pi_0 + \Pi_1 + \Pi_\varnothing = \mathbbm{1}_{4} \\
    &\Pi_k \in \text{PPT}(\mathsf{A}:\mathsf{B}), \qquad k \in \{0,1, \varnothing \} \\
    &\Tr[\Pi_\varnothing \rho_\beta] = 1-\eta, \qquad \beta \in \{ \beta_1, \dots, \beta_k \}.
\end{split}
\end{align}
The constraints involving $\eta$ stem from the fact that \textsf{P} will produce the same inconclusive-rate on all overlaps $\beta$ and the attackers need to mimic that. An analogous statistical test with a $(1-\alpha)$-quantile around $\eta$ can be performed to check for this and it is clear that it is optimal for attackers to choose to reply inconclusive at the exact same rate as \textsf{P} would do for each $\beta$. The above program can be solved with conventional conic optimization libraries, e.g.\ \texttt{MOSEK} \cite{aps2020mosek}, and any example $\{ \beta_1, \dots, \beta_k \}$ we have tried yielded an optimal $\lVert \Delta \rVert_1 > 0$ independent of $\eta$, indicating the loss-tolerance of the protocol.

Security of QPV$_{\textsf{SWAP}}$ in the setting where attackers are allowed to use one round of simultaneous quantum communication follows from \cite{AllBuhSpeVer22}, as shown in Section~\ref{sec:qcsecurity}. We will now proceed and analyze the special case of overlaps $\{ 0, 1 \}$, i.e., sending orthogonal or identical states, in more detail and show analytically and numerically that it has desirable properties. 

\subsection{Security of \texorpdfstring{QPV$_{\textsf{SWAP}}(0,1)$}{QPVswap(0,1)} Protocol}
\label{sec:qpvswapsecure}

In this setting, there is the notion of a correct answer. On equal inputs the verifiers \textit{always} expect the answer `0'. This allows for a SDP formulation for maximizing the average success probability of identifying if the input states were equal/unequal. In appendix \ref{AppendixPsuccDelta} it is shown that the relation between the success probability $p_\text{succ}$ of correctly identifying equal/orthogonal and $\lVert \Delta \rVert_1$ is \begin{align}
    p_\text{succ} \leq u \implies \lVert \Delta \rVert_1 \geq \frac{3}{2} - 2u. 
\end{align} 
Having drawn the connection between $p_\text{succ}$ and $\lVert \Delta \rVert_1$, we will now proceed to show that there is a finite gap in the success probability of testing for equality between adversaries restricted to LOCC operations and an honest prover who can apply entangling measurements. Extending this single round protocol to $n$ rounds played in parallel, we will also show that the best strategy for adversaries is to simply apply the optimal single round strategy to every round individually, which shows \textit{strong parallel repetition} for QPV$_{\textsf{SWAP}}(0,1)$. Furthermore we show that in both cases there is no advantage for the attackers if they have the ability to declare loss on rounds, i.e.\ the probability of success conditioned on answering is independent of loss. The security of the protocol lies in the fact that an honest prover at his claimed position can apply entangling operations to the two incoming qubits and has a strictly higher probability of answering the question correctly than spatially separated adversaries who are restricted to single round LOCC operations. The fact that the protocol is also loss-tolerant comes from the fact that all inputs are quantum and there is no way to guess any classical information beforehand.

In general, the operation that has the highest probability of generating the correct answer is the SWAP test \cite{montanaro_survey_2018} and it gives a success probability $p_\text{succ}(\text{SWAP-test}) = 3/4$. We will show that the best strategy for LOCC adversaries gives at most a success probability of $p^\text{max}_\text{succ}(\text{LOCC}) = 2/3$. Since attackers return only a classical bit and they discard their post-measurement state, the most general type of measurement the attackers do is a $\textit{positive-operator-valued meaure}$ (POVM). The attackers' success probability for a given admissible POVM strategy $\Pi = \{ \Pi_0, \Pi_1 \}$ is then given by
\begin{align}
    p_\text{succ}(\Pi) := \frac{1}{2} \Tr[ \Pi_0 \rho_0 + \Pi_1 \rho_1 ].
    \label{avg_psucc}
\end{align}
Maximizing over all two-qubit LOCC measurements $\Pi^{\text{LOCC}}$ would give us the best probability of success of the attackers. However characterizing and maximizing over LOCC strategies is a mathematically complex task. We follow the method used in \cite{lim_loss-tolerant_2016}, and maximize our problem over the set of all positive partial transpose (PPT) operations. Since PPT measurements are a proper superset of LOCC measurements, any maximal success probability optimized over PPT measurements immediately upper bounds the success probability of all LOCC measurements. Furthermore, the PPT condition can be represented by a set of linear and positive semidefinite conditions \cite{cosentino_ppt-indistinguishable_2013} which enables us to write down the maximization problem as a semidefinite program (SDP) \cite{vandenberghe1996semidefinite}. This allows us to find exact solutions to the optimization problem if the values of the primal program and dual program coincide. In our case the SDP is as follows:

\begin{minipage}{0.48\textwidth}
\begin{align*}
    &\textbf{Primal Program}\\
    \textbf{maximize: } &\frac{1}{2} \Tr[ \Pi_0 \rho_0 + \Pi_1 \rho_1] \\
    \textbf{subject to: } &\Pi_0 + \Pi_1 = \mathbbm{1}_{2^2} \\
    & \Pi_k \in \text{PPT}(\mathsf{A}:\mathsf{B}), \ \ \ k \in \{0,1\} \\[1ex]
\end{align*}
\end{minipage}
\begin{minipage}{0.48\textwidth}
\begin{align*}
    &\textbf{Dual Program} \\
    \textbf{minimize: } &\Tr[Y] \\
    \textbf{subject to: } &Y - Q^{T_\mathsf{B}}_{i} - \rho_i / 2 \succeq 0, \ \ \ i \in \{0,1\} \\
    & Y\in \text{Herm}(\mathsf{A} \otimes \mathsf{B}) \\
    & Q_i \in \text{Pos}(\mathsf{A} \otimes \mathsf{B}), \ \ \ i \in \{0,1\}, \\
\end{align*}
\end{minipage}

\noindent
Note that the primal program implies a lower bound and the dual program an upper bound to $p^{\text{max}}_\text{succ}(\Pi^{\text{PPT}})$. We find an exact optimal solution to the SDP of 2/3 (see Appendix \ref{AppendixSingleRound}), giving an upper bound of the success probability optimized over all LOCC measurements of
\begin{align} \label{eqref:LOCC_upperbound}
    p^{\text{max}}_\text{succ}(\Pi^{\text{LOCC}}) \leq \frac{2}{3}.
\end{align}
The input states $\rho_0$ and $\rho_1$ have the exact same mixed state matrices as the result of uniformly choosing a mutually unbiased basis and sending either equal or orthogonal states (from the chosen basis) to \textsf{P}. This indicates an optimal LOCC strategy. Assume the incoming qubits are encoded in MUB $b$, and that the attackers choose a random MUB $b'$, measure both incoming qubits in the basis $b'$, send the measurement outcome to each other, and return equal if the measurement outcomes are equal and unequal otherwise. Then their probability of success is exactly $2/3$, since 
\begin{align}
    \mathbb{P}(\text{success}) = \mathbb{P}(b' = b)\mathbb{P}(\text{success} | b' = b) + \mathbb{P}(b' \neq b)\mathbb{P}(\text{success}| b' \neq b)
    = \frac{1}{3} \cdot 1 + \frac{2}{3} \cdot \frac{1}{2} = \frac{2}{3}.
\end{align}
This attack strategy uses only local measurements and a single round of communication, so it is a valid single round LOCC operation. Thus we find that the upper bound in \eqref{eqref:LOCC_upperbound} over LOCC measurements is in fact a tight bound attained by LOCC. 

We have shown that the probability of success for identifying if the given inputs were equal or not for the QPV$_{\textsf{SWAP}}(0,1)$ protocol is strictly lower for attackers restricted to LOCC measurements than for an honest verifier who can apply entangling operations (2/3 versus 3/4 respectively). Over sequential multi-round protocols, where we only perform a new run of the protocol after the previous is finished, the verifiers can increase the precision of detecting LOCC attackers to any limit they desire.

An important question to ask is whether we can extend the single round protocol to a general $n$-round parallel protocol, where the verifiers send $n$ qubits from both sides to form the density matrix $\rho_s = \rho_{s_0} \otimes \rho_{s_1} \otimes \dots \otimes \rho_{s_{n-1}}$ for $s \in \{0,1\}^n$. Note that this does not follow naively from the single round security proof since attackers can now in principle take blocks of inputs and apply entangling operations on them. We will prove that for the QPV$_{\textsf{SWAP}}$ protocol strong parallel repetition does indeed hold, i.e.\ the probability of success of winning $n$ rounds decreases as $(2/3)^n$, implying that the best strategy for attackers is to simply attack each round individually. Again we can write down the problem as a SDP optimization task where we optimize over all PPT operations on the $2n$ qubits the attackers receive. 

\begin{minipage}{0.48\textwidth}
\begin{align*}
    &\textbf{Primal Program}\\
    \textbf{maximize: } &\frac{1}{2^n} \sum_{s \in \{0,1\}^n} \Tr[ \Pi_{s} \rho_{s}] \\
    \textbf{subject to: } &\sum_{s \in \{0,1\}^n} \Pi_{s} = \mathbbm{1}_{2^{2n}} \\
    & \Pi_{s} \in \text{PPT}(\mathsf{A}:\mathsf{B}), \ \ \ s \in \{0,1\}^n\\
\end{align*}
\end{minipage}
\begin{minipage}{0.48\textwidth}
\begin{align*}
    &\textbf{Dual Program}\\[1ex]
    \textbf{minimize: } &\Tr[Y] \\[3ex]
    \textbf{subject to: } &Y - Q^{T_\mathsf{B}}_{s} - \rho_{s} / 2^n \succeq 0, \ \ \ s \in \{0,1\}^n \\
    & Y \in \text{Herm}(\mathsf{A} \otimes \mathsf{B}) \\
    & Q_{s} \in \text{Pos}(\mathsf{A} \otimes \mathsf{B}). \\
\end{align*}
\end{minipage}

\noindent
In Appendix \ref{AppendixParallel} we find an explicit analytical solution to the dual problem. The solution is non-trivial and depends on the specifics of the QPV$_{\textsf{SWAP}}$ protocol, so it does not generalize naturally to strong parallel repetition results for other protocols. The solution yields a value of $(2/3)^n$, which bounds the probability of success under LOCC measurements by $(2/3)^n$. A feasible solution to the primal problem is to fill in the single round solution $n$ times and this has success probability $(2/3)^n$. Since this strategy coincides with the previously mentioned single round LOCC measurement applied to each of the individual rounds of $\rho_{s_i}$, we find that the upper bound of $(2/3)^n$ is again attained by an LOCC measurement and tight. Thus we show strong parallel repetition for the  QPV$_{\textsf{SWAP}}$ protocol against attackers restricted to LOCC operations. 

Strong parallel repetition is a useful result for the practical implementation of QPV protocols. First of all it implies that when playing multiple rounds we don't have to wait until a single round is finished, thus simplifying the timing constraints of multiple rounds. Secondly, it implies a linear lower bound on the entanglement adversaries need to attack the protocol perfectly as shown in section \ref{sec:qpvswapattack}.

\subsection{Loss-Tolerance of \texorpdfstring{QPV$^n_{\textsf{SWAP}}$}{repeated QPVswap} Protocol}
\label{sec:qpvswaploss}
In the previous section we have shown that the QPV$_{\textsf{SWAP}}$ protocol is secure against attackers restricted to LOCC attackers in the case where attackers have to answer in every round. However, in practice an honest prover will only answer on a fraction of the rounds played due to channel loss and imperfect measurements. In order to prove security against any coalition of attackers in the setting with channel loss, we must assume that attackers will never suffer any loss when they attack a protocol\footnote{They could position themselves very close to the verifiers and have perfect communication channels, for example.}. When classical information is sent, such as in the QPV$_{\text{BB84}}$ protocol \cite{kent_quantum_2011, buhrman_position-based_2011}, attackers may guess the classical information that is being sent. If they guess incorrectly they discard the round and declare a loss ($\varnothing$), if they guess correctly they can continue and successfully attack the protocol since the classical information is known to both attackers after communication. If the loss rate is high enough, attackers can hide their incorrect guesses in the loss declarations and the verifiers cannot distinguish the attackers from an honest prover. Note that in order to pretend a loss without being detected, attackers must declare a loss with equal probability on every input. To prove loss tolerance, we can incorporate loss in the SDP setting and show that the optimal solution of the SDP is independent of the loss, similar to the method in \cite{lim_loss-tolerant_2016}.

We can relatively straightforwardly add the condition that attackers must mimic a certain loss rate $(1 - \eta)$ on all inputs. We first show that in the parallel repetition case $p_\text{succ}$  is independent of $\eta$ when attackers either answer conclusively on all inputs or don't answer at all. In the following proposition, we use this simple result to show that this property implies that $p_\text{succ}$  stays independent of $\eta$ when declaring a loss on any \textit{subset} of rounds is allowed.
\begin{proposition}\label{prop:lossy_swap_allornothing}
Any multi-round QPV protocol that fulfills strong parallel repetition security against adversaries restricted to LOCC operations and is tolerant against declaring loss on all $n$ rounds, is also tolerant against declaring loss on any subset of rounds.
\end{proposition}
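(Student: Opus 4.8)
The plan is to reduce a subset-loss attack on the $n$-round protocol to a whole-block-loss attack on a protocol with fewer rounds, and then feed this into the two hypotheses. Fix a target subset $S \subseteq \{1,\dots,n\}$ of size $k$ on which the adversaries give conclusive answers, declaring loss on $T := \{1,\dots,n\}\setminus S$, and write $p$ for the single-round LOCC value (so strong parallel repetition gives the $k$-round value $p^k$). The quantity to control is the conditional success probability on $S$: the probability that all $S$-answers are correct, given that the attack declares loss on exactly $T$.

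First I would build, from any $n$-round LOCC attack, an associated whole-block attack on the $k$-round protocol formed by the rounds in $S$. The $k$-round attackers receive the genuine $2k$ qubits for the rounds in $S$ and locally prepare fake inputs for the rounds in $T$, sampling the round types from the true input distribution by pre-shared classical randomness (free in the LOBC model, since it carries no entanglement); because the per-round inputs are independent product states, these fake inputs are distributed exactly as genuine verifier inputs. They then run the given $n$-round attack on the full bundle of $2n$ qubits, using its single round of simultaneous communication. If the attack answers exactly the subset $S$, the simulators output those $k$ answers (a conclusive answer on all $k$ rounds); otherwise they declare loss on the entire block. This is a legal whole-block-loss attack on the $k$-round protocol, and it stays within LOCC (equivalently PPT across the $\mathsf{A}\!:\!\mathsf{B}$ cut), since we have only added local state preparation with shared randomness and local classical post-processing.

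Next I would verify faithfulness. Because the fake $T$-inputs match the genuine distribution and the attack cannot distinguish them from real inputs, the joint law of (inputs, answer vector) is identical to that of the real $n$-round game; equivalently, by linearity of the attack in its input state, feeding the round-averaged $T$-input reproduces the average over genuine $T$-inputs. Hence the event ``answer exactly $S$'' and the event ``all $S$-answers correct'' carry the same probabilities in the simulation as in reality, so the conditional success of the constructed $k$-round whole-block attack equals the conditional success on $S$ of the original $n$-round subset attack. Whole-block loss tolerance for $k$ rounds bounds this conditional success by the no-loss $k$-round optimum, and strong parallel repetition identifies that optimum as $p^k$; thus the conditional success on $S$ is at most $p^k$. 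As $S$ was arbitrary, this yields tolerance against declaring loss on any subset, and shows both hypotheses are genuinely used---whole-block loss tolerance to discharge the ``else'' branch, strong parallel repetition to name the value.

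The main obstacle is the faithful-simulation step: ensuring the fake inputs on $T$ can be prepared by $\mathsf{A}$ and $\mathsf{B}$ without any extra communication and without entanglement. Here I would use that each round's verifier input is a product state across the $\mathsf{A}\!:\!\mathsf{B}$ cut, so its round-type average is separable and hence preparable locally from pre-shared classical randomness; in the concrete QPV$_{\textsf{SWAP}}(0,1)$ case the average $\tfrac{1}{2}(\rho_0+\rho_1)$ is in fact maximally mixed, a product state across $\mathsf{A}$ and $\mathsf{B}$, so no shared randomness is needed at all. Finally I would note that allowing the answered subset to be chosen adaptively from the measurement outcomes causes no difficulty: fixing a target subset $S$ and conditioning on it reduces the adaptive case to the fixed-subset reduction above, and independence across rounds guarantees that the discarded $T$-qubits carry no information about the correct $S$-answers.
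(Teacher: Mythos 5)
Your proof is correct and takes essentially the same route as the paper's: both reduce a subset-loss attack to a whole-block-loss attack on the $k$-round protocol by letting the attackers locally fabricate the discarded rounds, declare loss on the whole block unless the answered subset matches the real rounds, and then invoke whole-block loss tolerance together with strong parallel repetition to bound the conditional success by $p_1^k$. The differences are only in presentation: you argue directly by conditioning on a fixed subset $S$ (and spell out the shared-randomness preparation, linearity, and adaptivity details), whereas the paper runs the same reduction as a proof by contradiction with a $1/\binom{n}{k}$ subset-matching probability.
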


\begin{proof}
Suppose we have a secure $n$-round QPV protocol with strong parallel repetition. Then the $n$-round success probability for attackers is $p_{n} = p_1^n$ for some single round probability $p_1$. Suppose we perform $n$ rounds and we allow adversaries to only answer on $k$ rounds and to declare a loss on the remaining $(n-k)$ rounds, and suppose that there is some attacking strategy $S$ restricted to LOBC measurements that has a probability $p_S > p_1^k$ of being correct on this subset. We will show that this leads to a contradiction. Consider a protocol like the $k$-round protocol QPV$^k_{\textsf{SWAP}}$, which is secure and loss tolerant on all rounds by assumption and has success probability $p_k = p_1^k$. Since individual rounds are product states, attackers may create $n-k$ independent extra rounds locally of which they can forget the answer. This creates a $n$-round protocol. The attackers can now apply their strategy $S$. With probability $1/\binom{n}{k}$ they get an answer on their initial $k$ rounds that is correct with success probability $p_S$. And with probability $1 - 1/\binom{n}{k}$ they receive the wrong subset of $k$ rounds, in which case the attackers declare a loss (on all rounds). This defines an LOCC attack with a conditional winning probability $p_S > p_1^k$ and loss rate of $1 - 1/\binom{n}{k}$, which contradicts our assumption that the maximal success probability of being correct on the $k$-round protocol is $p_1^k$ for any loss. Therefore, for any subset of $k$ rounds out of the total of $n$ rounds, the maximal success probability $p_k$ on this subset is $p_1^k$.
\end{proof}

\noindent Next, we formulate an SDP to maximize the probability of success conditioned on a conclusive answer $p_\text{succ}^{\text{max}}(n, \eta)$ in the $n$-round parallel repetition case ($n=1$ corresponds to the single round protocol). 

\begin{minipage}{0.45\textwidth}
\begin{align*}
    &\textbf{Primal Program}\\
    \textbf{maximize: } &\frac{1}{2^n \eta} \sum_{s \in \{0,1\}^n} \Tr[ \tilde{\Pi}_{s} \rho_{s}] \\
    \textbf{subject to: } & \left(\sum_{s \in \{0,1\}^n} \tilde{\Pi}_{s} \right) + \tilde{\Pi}_\varnothing = \mathbbm{1}_{2^{2n}} \\
    & \Tr[\tilde{\Pi}_\varnothing \rho_s] = 1 - \eta, \ \ \ s \in \{0,1\}^n \\
    & \tilde{\Pi}_{s} \in \text{PPT}(\mathsf{A}:\mathsf{B}), \ \ \ s \in \{0,1\}^n \cup \varnothing \\
\end{align*}
\end{minipage}
\begin{minipage}{0.45\textwidth}
\begin{align*}
    &\textbf{Dual Program}\\
    \textbf{minimize: } & \frac{\Tr[\tilde{Y}] - (1 - \eta) \gamma}{\eta} \\
    \textbf{subject to: } &\tilde{Y} - \tilde{Q}^{T_\mathsf{B}}_{s} - \rho_{s} / 2^n \succeq 0, \ \ \ s \in \{0,1\}^n \\
    & 2^{2n} ( \tilde{Y} - \tilde{Q}^{T_\mathsf{B}}_\varnothing ) - \gamma \mathbbm{1}_{2^{2n}} \succeq 0 \\
    & \tilde{Y} \in \text{Herm}(\mathsf{A} \otimes \mathsf{B}) \\
    & \tilde{Q}_{s} \in \text{Pos}(\mathsf{A} \otimes \mathsf{B}), \ \ \ s \in \{0,1\}^n \cup \varnothing \\
    & \gamma \in \mathbb{R}. \\
\end{align*}
\end{minipage}
\noindent
From the analysis in Appendix \ref{AppendixLossTolerantParallelRepetition}, we see that the solution of the SDP is again $(2/3)^n$, independent of $\eta$, upper bounding the attackers restricted to LOCC measurements. The strategy in which attackers apply with probability $\eta$ the regular $n$-round parallel repetition attack and with probability $(1-\eta)$ discard everything again has conditional success probability $(2/3)^n$ so the bound is tight. By Proposition~\ref{prop:lossy_swap_allornothing}, we have that QPV$^n_{\textsf{SWAP}}$ is tolerant against loss on any subset of rounds, establishing full loss tolerance.

\subsection{Security with quantum communication}\label{sec:qcsecurity}


We now show that, no matter the transmission rate $\eta$, the SWAP-test cannot be perfectly simulated by unentangled attackers even if they have access to quantum communication. Our argument relies on the same fact for the Bell measurement, as is proven in \cite{AllBuhSpeVer22}. 

Abstractly, the SWAP-test implements the POVM $\{ \Pi_\text{sym}, \Pi_\text{a-sym} \}$ of projecting onto either the symmetric or the anti-symmetric subspace. In particular, it allows to perfectly distinguish $\ket{\Psi_-}$ from the other Bell states, in particular this shows that there is a finite gap between the best quantum communication attack without loss. With loss we will show that if the SWAP-test could be implemented perfectly with local actions and one round of quantum communication for some $0 <\eta \leq 1$, then so could the Bell measurement with some different $\eta' < \eta$, contradicting our result in \cite{AllBuhSpeVer22}, thus implying that there exist no perfect lossy attack of the SWAP-test.

\begin{proposition}
    QPV$_\textsf{SWAP}$ cannot be perfectly attacked if attackers $A$, $B$ can use quantum communication between them, no matter the loss rate $1-\eta$. That is, $\lVert \Delta \rVert_1 > 0$ for any $\eta \in (0,1]$.
\end{proposition}

\begin{proof}
    Assume there is a procedure, using only local actions and one round of simultaneous quantum communication, perfectly simulating $\{ \Pi_\text{sym}, \Pi_\text{a-sym} \}$ with probability $0 <\eta \leq 1$. Then, conditioned on their procedure giving a conclusive result (which happens with probability $\eta$), $\mathsf{A}, \mathsf{B}$ could do the following in QPV$_\textsf{Bell}$ with a Bell measurement at \textsf{P} and an input chosen uniformly at random from $\{ \ket{\Phi_+}, \ket{\Phi_-}, \ket{\Psi_+}, \ket{\Psi_-}\}$:
\begin{itemize}
    \item Whenever their procedure returns ``anti-symmetric'', return $\ket{\Psi_-}$
    \item Whenever it returns ``symmetric'', return the loss symbol $\varnothing$
\end{itemize}
However, this would be suspicious, because the only conclusive answers would be $\ket{\Psi_-}$. In order to achieve $\Prob(\varnothing \, | \, B_i) = 1-\eta$ for all Bell states $\ket{B_i}$ and $\Prob(B_i \, | \, \text{concl.}) = 1/4$, as the honest \textsf{P} would do in QPV$_\textsf{Bell}$, they could apply $\mathds{1}_A \otimes (X^a Z^b)_B$ with $a,b \in \{ 0, 1 \}$ chosen uniformly at random in each round as soon as they receive the inputs. This just transfers the input to a different Bell state. If they adjust their responses to
\begin{itemize}
    \item Whenever their procedure returns ``anti-symmetric'', answer $\mathds{1}_A \otimes (Z^b X^a)_B \ket{\Psi_-}$
    \item Whenever it returns ``symmetric'', answer the loss symbol $\varnothing$
\end{itemize}
They achieve $\Prob(\varnothing \, | \, B_i) = 1-\eta$ as well as $\Prob(B_i \, | \, \text{concl.}) = 1/4$ and whenever they do answer conclusively, they will be correct (by assumption). But this would give them a perfect attack on QPV$_\textsf{Bell}$ with some probability $\eta' < \eta$ (because they throw away the ``symmetric'' measurement results). This contradicts the the fact that $p_\text{succ}(\eta) < 1$ for all $\eta$ in QPV$_\textsf{Bell}$.
\end{proof}

\subsection{Entanglement attack}
\label{sec:qpvswapattack}
It turns out that there is a perfect attack on QPV$_{\textsf{SWAP}}(\beta_1, \dots, \beta_k)$ using one pre-shared maximally entangled state between the attackers. This gets apparent if one looks at the \textit{purified} version of the protocol. In this setting, the attackers do not receive mixed states from the verifiers, but rather halves of the corresponding purification. In QPV$_{\textsf{SWAP}}(\beta_1, \dots, \beta_k)$ this purification is a maximally entangled state on each side. This does not change anything about the input/output distributions of the attackers, as already noted in \cite{buhrman_position-based_2011}. Entanglement swapping is captured in the identity
\begin{align}
    \ket{\Phi_+}_{12}\ket{\Phi_+}_{34} = \frac{1}{2} \left( \ket{\Phi_+}_{14}\ket{\Phi_+}_{23} + \ket{\Phi_-}_{14}\ket{\Phi_-}_{23} + \ket{\Psi_+}_{14}\ket{\Psi_+}_{23} + \ket{\Psi_-}_{14}\ket{\Psi_-}_{23} \right).
    \label{equ:eswap}
\end{align}
Applying a Bell state measurement (BSM) on registers (23) swaps entanglement from registers (12) and (34) to (14) and (23). The intriguing aspect is that registers (14) could be causally separated, yet are entangled after the BSM. We use this to prove the following statement.

\begin{theorem}
The \emph{SWAP-}test can be perfectly simulated using one pre-shared maximally entangled state and one round of classical communication between $\mathsf{A}$ and $\mathsf{B}$. Thus $n$ pre-shared EPR pairs are sufficient to attack \emph{QPV}$_\mathsf{SWAP}^n(\beta_1, \dots, \beta_k)$, and $\sim 0.103 n$ pre-shared EPR pairs are necessary to attack \emph{QPV}$_\mathsf{SWAP}^n(0, 1)$.
\end{theorem}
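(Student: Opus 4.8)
The plan is to treat the three claims separately and to lean on the purified picture together with the strong parallel repetition value $(2/3)^n$ and the honest SWAP-test value $3/4$ established above. For the first (and central) claim, I would work in the purified protocol described around \eqref{equ:eswap}, where each verifier sends one half of a maximally entangled pair. The attackers pre-share a single maximally entangled pair; upon receiving their input qubit, $\mathsf{A}$ performs a Bell-basis measurement jointly on its input qubit and its half of the shared pair, and $\mathsf{B}$ does the same on its side. This is a double entanglement swap governed by \eqref{equ:eswap}: it consumes all four qubits and leaves only the two $2$-bit Bell outcomes $\sigma_\mathsf{A},\sigma_\mathsf{B}$. In the single simultaneous round each attacker broadcasts its outcome, so that afterwards both hold all four classical bits and both compute the same output bit $z=f(\sigma_\mathsf{A},\sigma_\mathsf{B})$ for a fixed function $f$. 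The content of the proof is to exhibit $f$ (assigning $z=1$ exactly to the outcome pairs consistent with the singlet/antisymmetric projection) and to verify, using the teleportation identity together with $\sum_{i=0}^{3}\sigma_i\otimes\sigma_i = 2\,\mathrm{SWAP}$, that the induced distribution of $z$ matches \eqref{eq: swap_distr} exactly. I would also explain why the naive one-directional ``teleport qubit $1$ to $\mathsf{B}$, run a local SWAP test, broadcast the result'' fails: $\mathsf{B}$ cannot apply the Pauli correction before receiving $\mathsf{A}$'s bits, and applying the correction to only one of the two qubits does not commute with the symmetric/antisymmetric measurement (e.g.\ $Z\otimes\mathbbm{1}$ maps the singlet into a triplet), so $\mathsf{A}$ can never produce a correctly distributed $z$ in the same round. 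This is what forces the symmetric double-Bell-measurement design.

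The second claim is then an immediate corollary: since the $n$ parallel rounds are product inputs and the single-round simulation is exact, the attackers run it independently on each round using one pre-shared pair per round, bundling all $2n$-bit broadcasts into the one permitted simultaneous round. All $n$ SWAP tests are reproduced exactly, so the attack passes every statistical test, and $n$ pre-shared EPR pairs suffice for \emph{QPV}$_\mathsf{SWAP}^n(\beta_1,\dots,\beta_k)$ in general.

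The hard part is the third claim, the necessity of $\sim 0.103\,n$ pairs for \emph{QPV}$_\mathsf{SWAP}^n(0,1)$. Here I would argue by comparing achievable success probabilities. Strong parallel repetition gives that PPT (hence LOCC) attackers identify equal/orthogonal on $n$ rounds with probability at most $(2/3)^n$, whereas the honest prover achieves $(3/4)^n$; to avoid detection the attackers' output distribution must converge to the honest one as $R\to\infty$, so any successful attack must win with probability approaching $(3/4)^n$. The crux is a key lemma bounding how much shared entanglement can raise the PPT value: sharing $E$ ebits (a maximally entangled state of Schmidt rank $2^{E}$) increases the optimal PPT-SDP value by at most a multiplicative factor $2^{cE}$, proved by absorbing the shared state into the measurement operators and tracking its effect through the partial-transpose/PPT constraints — most naturally as a perturbation of the dual-feasible point already constructed in Appendix~\ref{AppendixParallel} for the value $(2/3)^n$. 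Combining the lemma with the detectability requirement yields $(3/4)^n \le 2^{cE}(2/3)^n$, i.e.\ $E \ge \tfrac{1}{c}\,n\log_2(9/8)$.

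The main obstacle, and the step I expect to require the most care, is pinning down the exact penalty constant $c$ in this lemma, since it is precisely $c$ (together with the precise value attackers must match) that converts the crude bound $n\log_2(9/8)$ into the stated $\sim 0.103\,n$; I would obtain it from the explicit dual solution via a continuity/SDP-perturbation estimate rather than from a generic dimension count. A secondary point requiring rigor is the implication ``evading the $(1-\alpha)$-statistical test $\Rightarrow$ matching the honest value,'' which I would make precise by noting that disjoint acceptance regions force the conditional ``$0$''-rate on each overlap to concentrate at the honest $p_\beta$, so that the attackers' asymptotic winning probability is bounded below by the honest $(3/4)^n$ up to vanishing corrections.
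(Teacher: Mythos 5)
Your treatment of the first two claims matches the paper's proof in essence: the paper also works in the purified protocol, has each attacker perform a Bell-basis measurement on the received qubit together with their half of the pre-shared EPR pair, and lets both attackers infer, after the single round of classical communication, which Bell state the verifiers' registers $V_0V_1$ collapsed to, answering `0' for a symmetric and `1' for the antisymmetric state. The paper then verifies, exactly as you propose, that this reproduces the measurement $\{\Pi_\text{sym}, \Pi_\text{a-sym}\}$ on $V_0V_1$ and hence the SWAP-test statistics, and the $n$-round sufficiency follows by running the simulation independently per round.

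Your third part, however, has a genuine gap. You reduce ``passing the verifiers' statistical test'' to ``winning the $n$-round identification game with probability approaching $(3/4)^n$,'' but this implication is false: the test only constrains empirical marginal rates (all `0's on equal-input rounds, a `0'-rate near $1/2$ on orthogonal rounds), whereas the identification game demands joint correctness on all rounds. An attacker whose answers on the orthogonal rounds are correlated --- for instance one who outputs `0' on roughly half of them in a fixed pattern --- can pass the test with probability essentially $1$ while winning the identification game with probability far below $(3/4)^n$, even $0$. Moreover, even granting the reduction and your dimension-penalty lemma with the natural constant $c=2$ (total dimension $2^{2E}$ for $E$ ebits), the inequality $(3/4)^n \le 2^{2E}(2/3)^n$ yields only $E \ge \frac{n}{2}\log_2\left(\frac{9}{8}\right) \approx 0.085\,n$, not the claimed $0.103\,n$, and there is no reason an SDP-perturbation estimate of $c$ would land on the required value. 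The paper's argument avoids both problems by applying the Beigi--K\"onig penalty $p_{\text{succ}|\tau_{\mathsf{AB}}} \leq d\, p_{\text{succ}|\emptyset}$ of Lemma V.3 in \cite{beigi_simplified_2011} directly to the \emph{acceptance} probability: since the verifiers expect the deterministic answer `0' on every equal-input round ($p_1 = 1$), the optimal unentangled attack (perfect on orthogonal inputs, error $\Delta_1 = 1/4$ on equal inputs) is accepted with probability $(3/4)^{n_=}$, where $\mathbb{E}[n_=] = n/2$, so a perfect entangled attack requires $d = 2^{2m} \geq (4/3)^{n/2}$, i.e.\ $m \geq \frac{1}{4}\log_2\left(\frac{4}{3}\right) n \approx 0.103\,n$.
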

\begin{proof}
In the purified protocol, verifiers do not send out their respective mixed states $\rho_{V_0}, \rho_{V_1}$ but rather halves of the corresponding purified pure state. Note that for a given list of overlaps $\mathcal{O} = \{ \beta_1, \dots, \beta_k \}$ the total input state is $\rho = \frac{1}{k} \sum_{\beta \in \mathcal{O}} \rho_\beta$. One can then calculate
\begin{align*}
    \Tr_{V_0}[\rho] = \Tr_{V_1}[\rho] = \frac{\mathbbm{1}_2}{2}
\end{align*}
and therefore the purifications at $\mathsf{V_0}, \mathsf{V_1}$ are maximally entangled states, say $\ket{\Phi_+}$. The resulting entanglement structure between the verifiers and attackers throughout the attack is as depicted in Figure~\ref{fig:eattack}.
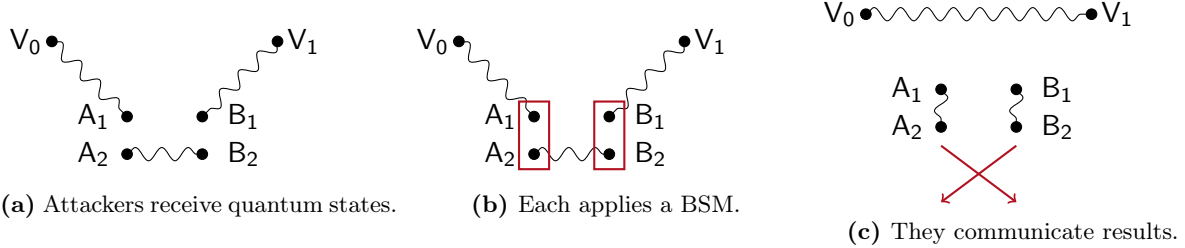
\begin{figure}[ht]
\centering
\begin{subfigure}{0.32\textwidth}
    \begin{tikzpicture}
        \filldraw [black] (0,0) circle (2pt) node[left] {$\mathsf{V_0}$};
        \draw[black] [,decorate,decoration=snake] (0,0) -- (1,-1);
        \filldraw [black] (1,-1) circle (2pt) node[left] {$\mathsf{A_1} \hspace{1mm}$};
    
        \filldraw [black] (3,0) circle (2pt) node[right] {$\mathsf{V_1}$};
        \draw[black] [,decorate,decoration=snake] (3,0) -- (2,-1);
        \filldraw [black] (2,-1) circle (2pt) node[right] {$\hspace{2mm} \mathsf{B_1}$};
    
        \filldraw [black] (1,-1.5) circle (2pt) node[left] {$\mathsf{A_2}\hspace{1mm}$};
        \draw[black] [,decorate,decoration=snake] (1,-1.5) -- (2,-1.5);
        \filldraw [black] (2,-1.5) circle (2pt) node[right] {$\hspace{2mm}\mathsf{B_2}$};
    \end{tikzpicture}
    \caption{Attackers receive quantum states.}
    \label{fig:Entangl_AB_start}
\end{subfigure}
\begin{subfigure}{0.32\textwidth}
    \begin{tikzpicture}
        \filldraw [black] (0,0) circle (2pt) node[left] {$\mathsf{V_0}$};
        \draw[black] [,decorate,decoration=snake] (0,0) -- (1,-1);
        \filldraw [black] (1,-1) circle (2pt) node[left] {$\mathsf{A_1} \hspace{1mm}$};
    
        \filldraw [black] (3,0) circle (2pt) node[right] {$\mathsf{V_1}$};
        \draw[black] [,decorate,decoration=snake] (3,0) -- (2,-1);
        \filldraw [black] (2,-1) circle (2pt) node[right] {$\hspace{2mm} \mathsf{B_1}$};
    
        \filldraw [black] (1,-1.5) circle (2pt) node[left] {$\mathsf{A_2}\hspace{1mm}$};
        \draw[black] [,decorate,decoration=snake] (1,-1.5) -- (2,-1.5);
        \filldraw [black] (2,-1.5) circle (2pt) node[right] {$\hspace{2mm}\mathsf{B_2}$};
        
        \draw[QuSoft, thick] (0.8,-0.8) rectangle (1.2,-1.7);
        \draw[QuSoft, thick] (1.8,-0.8) rectangle (2.2,-1.7);
    \end{tikzpicture}
    \caption{Each applies a BSM.}
    \label{fig:Entangl_AB_meas}
\end{subfigure}
\begin{subfigure}{0.32\textwidth}
    \begin{tikzpicture}
        \filldraw [black] (0,0) circle (2pt) node[left] {$\mathsf{V_0}$};
        \draw[black] [,decorate,decoration=snake] (0,0) -- (3,0);
        \filldraw [black] (1,-1) circle (2pt) node[left] {$\mathsf{A_1} \hspace{1mm}$};
    
        \filldraw [black] (3,0) circle (2pt) node[right] {$\mathsf{V_1}$};
        \draw[black] [,decorate,decoration=snake] (1,-1) -- (1,-1.5);
        \filldraw [black] (2,-1) circle (2pt) node[right] {$\hspace{2mm} \mathsf{B_1}$};
    
        \filldraw [black] (1,-1.5) circle (2pt) node[left] {$\mathsf{A_2}\hspace{1mm}$};
        \draw[black] [,decorate,decoration=snake] (2,-1) -- (2,-1.5);
        \filldraw [black] (2,-1.5) circle (2pt) node[right] {$\hspace{2mm}\mathsf{B_2}$};
    
        \draw[->, QuSoft, thick] (1,-1.75) -- (2,-2.5);
        \draw[->, QuSoft, thick] (2,-1.75) -- (1,-2.5);
    \end{tikzpicture}
    \caption{They communicate results.}
    \label{fig:Entangl_AB_comm}
\end{subfigure}
\caption{Entanglement structure throughout a purified protocol.}
\label{fig:eattack}
\end{figure}
Assume now that attackers pre-share one EPR pair, say also $\ket{\Phi_+}$, in registers $A_2B_2$. Then the total input state of the protocol can be rewritten as
\begin{align*}
    \ket{\Phi_+}_{V_0A_1} \ket{\Phi_+}_{V_1B_1} \ket{\Phi_+}_{A_2B_2} = &\frac{1}{2} \bigg[ \ket{\Phi_+}_{V_0V_1} \otimes \frac{1}{2} \Big( \ket{\Phi_+}\ket{\Phi_+} + \ket{\Phi_-}\ket{\Phi_-} + \ket{\Psi_+}\ket{\Psi_+} + \ket{\Psi_-}\ket{\Psi_-} \Big)_{A_1A_2B_1B_2} \\
    &+ \ket{\Phi_-}_{V_0V_1} \otimes \frac{1}{2} \Big( \ket{\Phi_+}\ket{\Phi_-} + \ket{\Phi_-}\ket{\Phi_+} + \ket{\Psi_+}\ket{\Psi_-} + \ket{\Psi_-}\ket{\Psi_+} \Big)_{A_1A_2B_1B_2} \\
    &+ \ket{\Psi_+}_{V_0V_1} \otimes \frac{1}{2} \Big( \ket{\Phi_+}\ket{\Psi_+} - \ket{\Phi_-}\ket{\Psi_-} + \ket{\Psi_+}\ket{\Phi_+} - \ket{\Psi_-}\ket{\Phi_-} \Big)_{A_1A_2B_1B_2} \\
    &+ \ket{\Psi_-}_{V_0V_1} \otimes \frac{1}{2} \Big(  \ket{\Phi_-}\ket{\Psi_+} - \ket{\Psi_+}\ket{\Phi_-} + \ket{\Psi_-}\ket{\Phi_+} - \ket{\Phi_+}\ket{\Psi_-}\Big)_{A_1A_2B_1B_2} \bigg].
\end{align*}
By separately performing a BSM on their two respective qubits in $A_1A_2$ and $B_1B_2$, the attackers will get one of the 16 measurement result combinations in the above equation and collapse the state in $V_0V_1$ into the corresponding Bell state. By communicating their results (classically) to each other, they can uniquely identify the state in the verifiers' registers $V_0V_1$. Let them then use the following strategy: answer `0', whenever they infer that the verifiers hold a symmetric state and `1', whenever it is an anti-symmetric state. Doing so, attackers effectively perform a measurement $\{ \Pi_\text{sym}, \Pi_\text{a-sym} \}$ on $V_0V_1$, which is precisely what the SWAP test does. To make this argument formal, let \textsf{A} and \textsf{B} receive registers $V_0$ and $V_1$, respectively. Inspired by the above, define
\begin{alignat*}{2}
    W &\coloneqq &&\mathbbm{1}_{V_0} \otimes \text{SWAP}_{A_2 V_1} \otimes \mathbbm{1}_{B_2}\\
    \Pi_0 &\coloneqq \Big( &&\ketbra{\Phi_+ \Phi_+}{\Phi_+ \Phi_+} + \ketbra{\Phi_- \Phi_-}{\Phi_- \Phi_-} + \ketbra{\Psi_+ \Psi_+}{\Psi_+ \Psi_+} + \ketbra{\Psi_- \Psi_-}{\Psi_- \Psi_-} + \\
    &{} &&\ketbra{\Phi_+ \Phi_-}{\Phi_+ \Phi_-} + \ketbra{\Phi_- \Phi_+}{\Phi_- \Phi_+} + \ketbra{\Psi_+ \Psi_-}{\Psi_+ \Psi_-} + \ketbra{\Psi_- \Psi_+}{\Psi_- \Psi_+} +\\
    &{} &&\ketbra{\Phi_+ \Psi_+}{\Phi_+ \Psi_+} + \ketbra{\Phi_- \Psi_-}{\Phi_- \Psi_-} + \ketbra{\Psi_+ \Phi_+}{\Psi_+ \Phi_+} + \ketbra{\Psi_- \Phi_-}{\Psi_- \Phi_-} \Big)_{V_0A_2V_1B_2} \\
    \Pi_1 &\coloneqq &&\Big(\ketbra{\Phi_- \Psi_+}{\Phi_- \Psi_+} + \ketbra{\Psi_+ \Phi_-}{\Psi_+ \Phi_-} + \ketbra{\Psi_- \Phi_+}{\Psi_- \Phi_+} + \ketbra{\Phi_+ \Psi_-}{\Phi_+ \Psi_-} \Big)_{V_0A_2V_1B_2}.
\end{alignat*}
Note that $\Pi_0$ and $\Pi_1$ are positive semi-definite and $\Pi_0 + \Pi_1 = \mathbbm{1}$, so that $\{ \Pi_0, \Pi_1 \}$ is a valid POVM. It can then be explicitly checked that
\begin{align}
    \begin{split}
        \Pi_\text{sym}^{V_0V_1} \rho_\beta^{V_0V_1} &= \Tr_{A_2B_2}\Big[ W \Pi_0^{V_0A_2V_1B_2} W^\dagger \,\, \left(\rho_\beta^{V_0V_1} \otimes \ketbra{\Phi_+}{\Phi_+}^{A_2B_2} \right)\Big] \\
        \Pi_\text{a-sym}^{V_0V_1} \rho_\beta^{V_0V_1} &= \Tr_{A_2B_2}\Big[ W \Pi_1^{V_0A_2V_1B_2} W^\dagger \,\, \left(\rho_\beta^{V_0V_1} \otimes \ketbra{\Phi_+}{\Phi_+}^{A_2B_2} \right)\Big],
    \end{split}
    \label{eq:swap_eattack}
\end{align}
for any $\beta \in [0,1]$. Equation \eqref{eq:swap_eattack} shows that the above attack using one pre-shared EPR pair is exactly the same as the honest action on $\rho_\beta$. In other words, the attackers can perfectly simulate the SWAP test and thus reproduce \textsf{P}. To get a lower bound on the required entanglement resource in order to break QPV$_\mathsf{SWAP}(0, 1)$ we can use an argument already mentioned in Lemma V.3 in \cite{beigi_simplified_2011}. It says that if the attackers pre-share a $d$-dimensional resource state $\tau_{\mathsf{AB}}$ then the success probability (of the attackers achieving that the verifiers accept them) is related to the success probability without a pre-shared resource in the following way:
\begin{align}
    p_{\text{succ}|\tau_{\mathsf{AB}}} \leq d p_{\text{succ}|\emptyset}.
\end{align}
For the $\beta \in \{0, 1\}$ case, we argued at the beginning of section~\ref{sec:qpvswapsecure} that the optimal attack strategy is to produce no error on orthogonal inputs and then accept whatever error that means for the identical inputs (which turned out to be $\Delta_1 = 1/4$). Then, the probability that the verifiers accept attackers is basically $(3/4)^{n_=}$, where $n_=$ is the number of rounds with identical inputs. Since each overlap is chosen with probability $1/2$ in each round, we have for $n$ rounds that $\mathbb{E}[n_=] = n/2$. Hence, we expect $p_{n, \text{succ}|\tau_{\mathsf{AB}}} \leq d \left( \frac{3}{4} \right)^{n/2}$ and thus
\begin{align}
    p_{n, \text{succ}|\tau_{\mathsf{AB}}} < 1 \qquad \text{as long as} \qquad d < \left( \frac{4}{3} \right)^{n/2}.
\end{align}
If $m$ is the number of EPR pairs in $\tau_{\mathsf{AB}}$, so that $d = 2^{2m}$, it follows that, in expectation,
\begin{align}
    p_{n, \text{succ}|\tau_{\mathsf{AB}}} < 1 \qquad \text{as long as} \qquad m < \frac{1}{4} \log \left( \frac{4}{3} \right) n \approx 0.103n.
\end{align}
\end{proof}

\section{\texorpdfstring{QPV$_{\textsf{SWAP}}$}{QPVswapexp} with realistic experimental conditions}
\label{sec:exp_analysis}
\subsection{Practical considerations}
	The SWAP-test has been shown to be equivalent to the Hong-Ou-Mandel (HOM) interference measurement \cite{Hong-Ou-Mandel} with just one $50/50$ beam splitter and two photon detectors \cite{jex2004comparing, garcia-escartin_swap_2013}. We call this the \textbf{BS} setup, as only a single beam splitter is used. If the photons bunch into one detector arm, the answer shall be ``0'', if both detectors register a click it shall be ``1''. However, for click/no-click detectors there is a problem with this simple setup, as signal loss can convert ``1'' answers to ``0'' answers. For high loss rates one would always get $p_\beta(0) \approx 1$, irrespective of the overlap and even without further equipment errors because most of the time only one state will arrive. Hence the \textbf{BS} setup will be insecure unless one uses number-resolution (NR) detectors. With these, single clicks at one detector get filtered out instead of delivering a wrong answer. NR detectors also filter out $k>2$ click events so that the ideal SWAP-test distribution of $p_\beta(0) = \frac{1 + \beta^2}{2}$ is fairly well preserved\footnote{With a generic $(R,T)$ beamsplitter, one has $p_\beta(0) = 4 \lvert R \rvert^2 \lvert T \rvert^2 \frac{1 + \beta^2}{2}$}, even with experimental errors. Creating true NR detectors is an active field of research, but at the moment they are still at early stage and somewhat hard to operate \cite{curtis2021single, Endo_2021}. We therefore use two further beam splitters and four click/no-click detectors to achieve probabilistic NR. We call this the \textbf{3BS} setup, as depicted in figure~\ref{fig:setups}.
	\begin{figure}[h]
		\centering
		\includegraphics[width=0.66\textwidth]{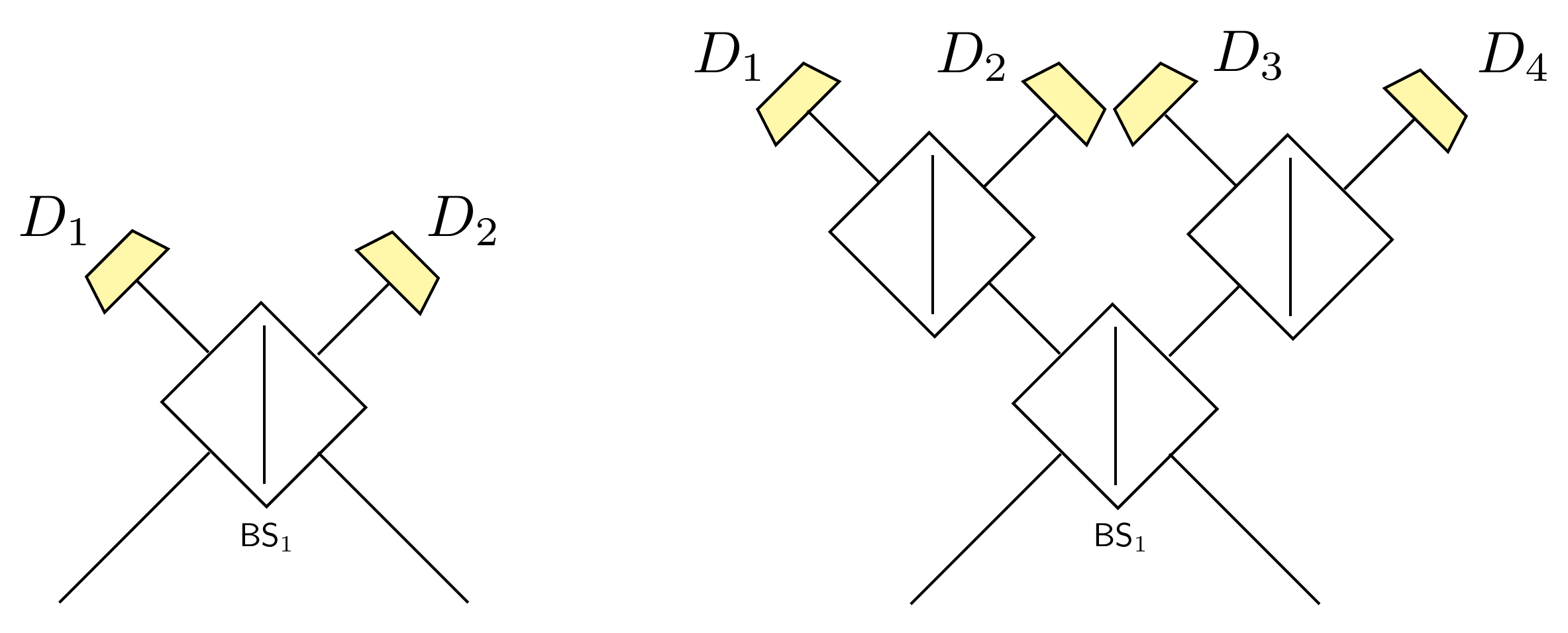}
		\caption{The detection setups \textbf{BS} (left) and \textbf{3BS} (right). The beam splitters are $(R,T)$ and non-polarizing. Unless otherwise specified, the detectors $D_i$ are conventional single photon click/no-click detectors.}
		\label{fig:setups}
	\end{figure}

	We define the following decision rules for the honest prover (for one detection window, corresponding to a round of the protocol):
	\begin{itemize}
		\item[(\textbf{BS})] Answer ``0'' if $D_1$ xor $D_2$ clicks, answer ``1'' if ($D_1$, $D_2$) click, answer ``$\varnothing$'' if no click occurs.
		\item[(\textbf{3BS})] Answer ``0'' if two clicks in one arm after $\mathsf{BS}_1$ are detected ($(D_1, D_2)$ or $(D_3, D_4)$), answer ``1'' if two clicks in different arms are detected ($(D_1, D_3)$, $(D_1, D_4)$, $(D_2, D_3)$ or $(D_2, D_4)$), else answer ``$\varnothing$''.
	\end{itemize}
	This means that in the \textbf{3BS} setup we post-select entirely on 2-click events, giving us weak NR, but only with some probability.
	
	In practice, no qubit or channel is perfect and we need to check under which conditions our protocol remains secure. To that end we will parametrise the entire setup from the single photon sources (at the verifiers) to the detection (at the prover) in terms of the errors that can appear. The setup consists of the following:
	
	\begin{itemize}
		\item Each verifier holds an imperfect single photon source, characterised by the probability that at least one photon is emitted $\eta_\text{source} = \Prob(n>0)$, the brightness $B = \Prob(n=1)$ and the accidental pair production rate $p_\text{pair}=\Prob(n=2)$, where $n$ is the number of single photons. We consider accidental multi-photon terms $\Prob(n>2)$ to be negligible.
		\item A communication channel between each verifier and the prover with a transmittance (at the prover) of $\eta_\text{BS}$\footnote{The beam splitter at \textsf{P} is where quantum interference between the incoming photons happens.}. We assume that both channels from $\mathsf{V_A}$ to \textsf{P} and from $\mathsf{V_B}$ to \textsf{P} have the same transmittance.
		\item The prover uses imperfect beam splitters with reflectance (amplitude) $R$ and transmittance (amplitude) $T$ as well as single photon detectors characterised by a detection efficiency $\eta_\text{det}$ (including loss between $\mathsf{BS}_1$ and the detectors as well as an imperfect intrinsic detection efficiency, per detector) and a dark count rate $p_\text{dark}$ (per detector).
		\item The final parameter is the overlap $\beta$ between the input states at the prover. Assuming that the equipment of both verifiers is identical, we can regard the photons leaving the sources as indistinguishable except in the degree of freedom we use to encode our quantum states in. One simple example would be the photon polarisation degree of freedom, such that $\beta = \lvert\braket{\psi|\phi}\rvert$ for polarisation qubits $\ket{\psi}$ and $\ket{\phi}$. In practice it may happen that a protocol round is started with a target overlap $\beta$ but the communication channel disturbs it to some $\tilde{\beta} = \beta + \delta$ with error $\lvert \delta \rvert > 0$.
	\end{itemize}

	\noindent We will denote the set of experimental parameters as 
	\begin{align}\label{equ:exp_tuple}
	    	\Omega_\beta = (\eta_\text{source}, B, p_\text{pair}, \eta_\text{BS}, \lvert R \rvert^2, \lvert T \rvert^2, \beta, \delta, \eta_\text{det}, p_\text{dark}).
	\end{align}
	Some comments about these parameters are to be made. From an experimental point of view the second order autocorrelation function (at zero time-delay) $g^{(2)}$, describing how bunched or anti-bunched the photons are coming from the source (see figure \ref{fig:bunching}), is easier to determine. The quantities $\eta_\text{source}$, $B$ and $g^{(2)}$ can be obtained in the lab and the latter has become a standard parameter to describe the quality of a single photon source \cite{trivedi2020generation}.
	
	
	\begin{figure}[h]
		\centering
		\includegraphics[width=0.5\textwidth]{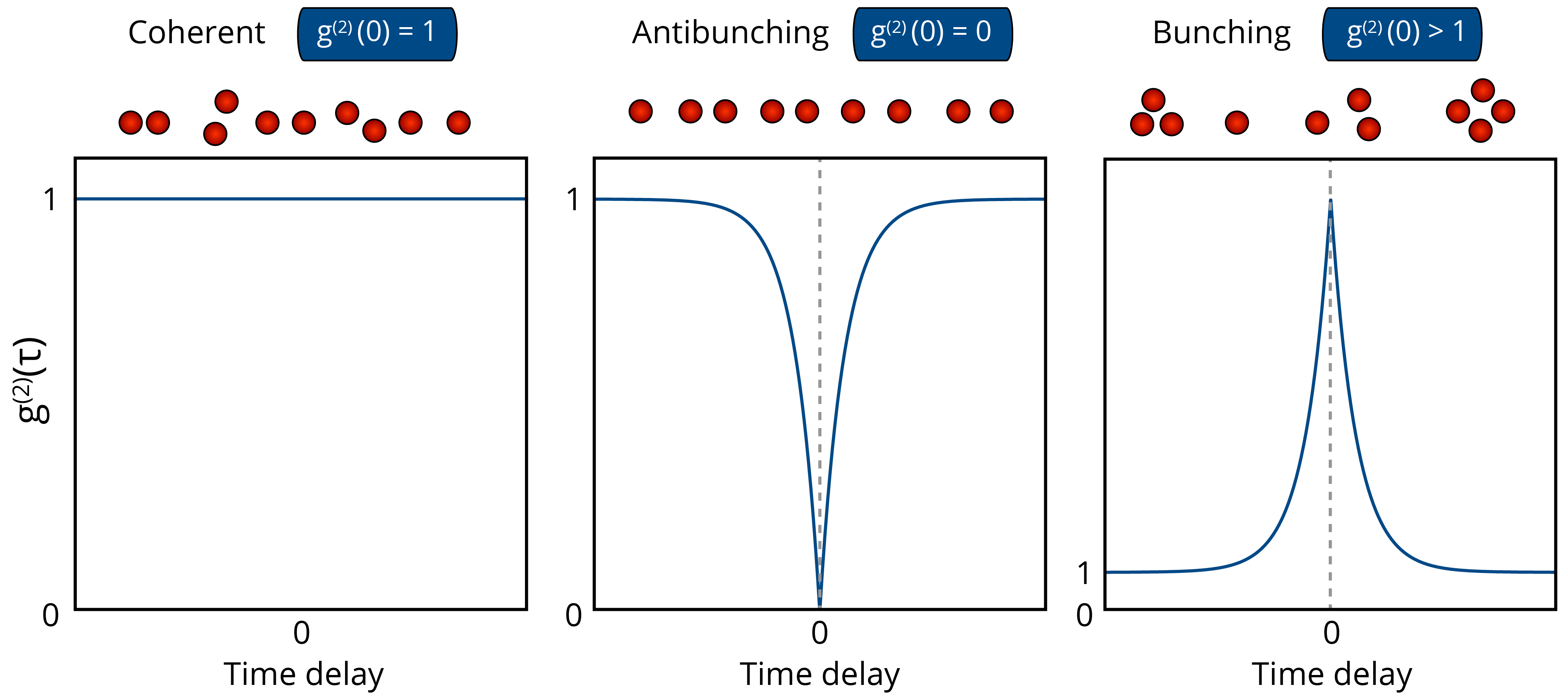}
		\caption{Bunching described by $g^{(2)}$ for photons coming from a single photon source. Taken from \texttt{https://blog.delmic.com/time-resolved-cathodoluminescence}.}
		\label{fig:bunching}
	\end{figure}
	
	 \noindent However, more important for us is the probability $p_\text{pair}$ that there accidentally \textit{are} two photons produced at one source, because this may influence the interference at the beam splitter with the photon from the other source. Assuming $\Prob(n > 2) \approx 0$, we can relate $g^{(2)}$ and $\Prob(n = 2)$ as follows, \cite{migdall2013single},
	\begin{align}
		p_\text{pair} = \Prob(n = 2) = \frac{g^{(2)}}{2} \mu^2 \approx \frac{g^{(2)}}{2} (2 \eta_\text{source} - B)^2,
	\end{align}
	with $\mu$ being the mean photon number produced by the source. Moreover, we account for errors in the communication channels as follows. If the verifiers expect a protocol with $\Omega_\beta$ and in reality the overlap between the input states changes from $\beta$ to $\tilde{\beta} = \beta + \delta$ on the way to the prover, then the honest party will run the protocol with $\Omega_{\tilde{\beta}} = \Omega_{\beta + \delta}$. That means \textsf{P} will reproduce the expected distribution a bit worse. How much error is tolerated for a given $\Omega_\beta$ will become evident in a later section. Note also, that the verifiers do not need to know/trust the detection system parameters but could in principle base their security checks simply on industry standard values of $\{ \eta_\text{det}, p_\text{dark} \}$. Moreover, $\eta_\text{BS}$ and $\delta$ can be estimated by the verifiers because they know in which way they send out the inputs (e.g. free space or in a quantum network). In this sense the verifiers have control over all parameters. Finally, we note that the protocol is based on two-photon interference and thus phase insensitive.
	
	\subsection{Imperfect honest prover}
	Here we will argue that all we are interested in are the two probability distributions $\Prob_{\Omega_\beta}(\mathsf{D}_1, \mathsf{D}_2)$ and $\Prob_{\Omega_\beta}(\mathsf{D}_1, \mathsf{D}_4)$ of these detector click patterns happening given an experimental configuration $\Omega_{\beta}$. This follows from the symmetry of the setup and the fact that photons bunch or anti-bunch into each bunch/anti-bunch output configuration with the same probability\footnote{Explicitly, we mean $\Prob((0,2)) = \Prob((2, 0))$, $\Prob((0,3)) = \Prob((3, 0))$ and $\Prob((1,2)) = \Prob((2, 1))$ for any $\lvert R \rvert^2$ and $\lvert T \rvert^2$.}, respectively, independent of the reflectance $\lvert R \rvert^2$ and transmittance $\lvert T \rvert^2$ of the beam splitter. Hence
	\begin{align}
			\begin{split}
					\Prob_{\Omega_\beta}(\mathsf{D}_1, \mathsf{D}_2) &= \Prob_{\Omega_\beta}(\mathsf{D}_3, \mathsf{D}_4), \\
					\Prob_{\Omega_\beta}(\mathsf{D}_1, \mathsf{D}_4) = \Prob_{\Omega_\beta}(\mathsf{D}_2, \mathsf{D}_4) &= \Prob_{\Omega_\beta}(\mathsf{D}_1, \mathsf{D}_3) = \Prob_{\Omega_\beta}(\mathsf{D}_2, \mathsf{D}_3).
			\end{split}
		\end{align}
	An intuitive example of this is the Hong-Ou-Mandel (HOM) effect. For overlap $\beta$ and an $(R,T)$ beam splitter, the probability to bunch to each output port is $\lvert R \rvert^2 \lvert T \rvert^2 (1+\beta^2)$ for both ports \cite{loudon2000quantum}. This uniform distribution on the output ports (given bunch or anti-bunch) also holds for the cases of 3 incoming photons as we prove later. That means
		\begin{align}
			\begin{split}
					\Prob_{\Omega_\beta}(0) &= \Prob_{\Omega_\beta}(\text{2-click in one arm}) = 2 \cdot \Prob_{\Omega_\beta}(\mathsf{D}_1, \mathsf{D}_2), \\
					\Prob_{\Omega_\beta}(1) &= \Prob_{\Omega_\beta}(\text{2-click in two arms}) = 4 \cdot \Prob_{\Omega_\beta}(\mathsf{D}_1, \mathsf{D}_4), \\
					\Prob_{\Omega_\beta}(\varnothing) &= 1 - \Prob_{\Omega_\beta}(0) - \Prob_{\Omega_\beta}(1).
			\end{split}
		\end{align}
	Finally, we post-select on conclusive answers and test the probability distributions of ``$0$'' and ``$1$'' answers there. So we are looking for
		\begin{align}\label{equ:exp_dist}
			\begin{split}
					p_{\Omega_{\beta}}(0) &\coloneqq \Prob_{\Omega_\beta}(0 \,|\, \text{concl.}) = \frac{\Prob_{\Omega_\beta}(0)}{1-\Prob_{\Omega_\beta}(\varnothing)}, \\
					p_{\Omega_{\beta}}(1) &\coloneqq \Prob_{\Omega_\beta}(1 \,|\, \text{concl.}) = \frac{\Prob_{\Omega_\beta}(1)}{1-\Prob_{\Omega_\beta}(\varnothing)}.
				\end{split}
		\end{align}
	Next, we find explicit expressions for the probability distributions of the needed detector click patterns. To that end, we expand
	\begin{equation}\label{equ: Pd1d2}
		\begin{alignedat}{3}
			&\Prob_{\Omega_\beta}(\mathsf{D}_1, \mathsf{D}_2) &&= \,\,&&(1-p_\text{dark})^2 \sum_{k} \Prob_{\Omega_\beta}(\mathsf{D}_1, \mathsf{D}_2 \,|\, k \text{ photons at } \mathsf{BS}_1) \Prob_{\Omega_\beta}(k \text{ photons at } \mathsf{BS}_1) \\
			&{}&&= \,\,&&(1-p_\text{dark})^2 \sum_k \bigg[ \Prob_{\Omega_\beta}(\mathsf{D}_1, \mathsf{D}_2 \,|\,\text{bunch}, k) \frac{\Prob_{\Omega_\beta}(\text{bunch} \,|\, k )}{2} \\
			&{}&&{}&&+ \Prob_{\Omega_\beta}(\mathsf{D}_1, \mathsf{D}_2 \,|\,\text{anti-bunch}, k)\Prob_{\Omega_\beta}(\text{anti-bunch} \,|\, k) \bigg] \Prob_{\Omega_\beta}(k \text{ photons at } \mathsf{BS}_1),
		\end{alignedat}
	\end{equation}
	where
		\begin{align}
			\Prob_{\Omega_\beta}(k \text{ photons at } \mathsf{BS}_1) = \sum_{\ell \geq k} \Prob_{\Omega_\beta}(k \text{ photons at } \mathsf{BS}_1 \,|\, \ell \text{ produced}) \Prob_{\Omega_\beta}(\ell \text{ produced}).
		\end{align}
	The factor $(1-p_\text{dark})^2$ accounts for the fact that the other two detectors $\mathsf{D}_3$ and $\mathsf{D}_4$ should not click. The factor $1/2$ in $\Prob_{\Omega_\beta}(\text{bunch} \, | \, k)/2$ stems from the fact that for a non-negligible contribution to $\Prob_{\Omega_\beta}(\mathsf{D}_1, \mathsf{D}_2)$ the photons have to bunch into the $(\mathsf{D}_1, \mathsf{D}_2)$ output arm, which happens with probability $1/2$. Otherwise, we would require both $\mathsf{D}_1$ and $\mathsf{D}_2$ to have a dark count (and loose all photons), which would dominate that contribution via $p_\text{dark}^2$\footnote{In practice values like $p_\text{dark}^2 \sim 10^{-14}$ per detection window can be achieved}. We neglect this other term. Furthermore, we consider up to $k=3$ photons incoming to $\mathsf{BS}_1$ and up to $\ell = 3$ photons in total being produced by the sources in a round of the protocol. The probabilities for $k\geq4$ incoming photons are considered negligible because assuming $\Prob(n>2) \approx 0$ for each source gives $\Prob_{\Omega_\beta}(4 \text{ photons at } \mathsf{BS}_1) \sim \eta_\text{BS}^4 p_\text{pair}^2$, which is only $\sim 10^{-12}$ for realistic conditions. Higher photon number terms are even smaller. In the process of treating the 3-photon case we had to generalise the HOM output port distribution to 3 photons, as formulated in the following lemma. The proof can be found in appendix \ref{app:3HOM_proof}.
	\begin{lemma}
		Consider photonic qubits $\ket{\psi}, \ket{\phi}$ arriving at one input port of a (symmetric) $(R,T)$ beam splitter and $\ket{\chi}$ at the other input port. Then the output port distribution is given by
		\begin{align}\label{HOM3}
			\Prob((3,0) \emph{ or } (0,3)) &= 4 \lvert R \rvert^2 \lvert T \rvert^2 \frac{\lvert \braket{\psi|\phi}\rvert^2 + \lvert \braket{\psi|\chi}\rvert^2 + \lvert\braket{\phi|\chi}\rvert^2}{2 \cdot (1 + \lvert\braket{\psi|\phi}\rvert^2)}, \\
			\Prob((2,1) \emph{ or } (1,2)) &= 1-\Prob((3,0) \emph{ or } (0,3)).
		\end{align} 
	\end{lemma}
	\noindent In the same vein we expand
\begin{equation}
	\begin{alignedat}{3}\label{equ: Pd1d4}
		&\Prob_{\Omega_\beta}(\mathsf{D}_1, \mathsf{D}_4) &&= \,\,&&(1-p_\text{dark})^2 \sum_{k} \Prob_{\Omega_\beta}(\mathsf{D}_1, \mathsf{D}_4 \,|\, k \text{ photons at } \mathsf{BS}_1) \Prob_{\Omega_\beta}(k \text{ photons at } \mathsf{BS}_1) \\
		&{}&&= \,\,&&(1-p_\text{dark})^2 \sum_k \bigg[ \Prob_{\Omega_\beta}(\mathsf{D}_1, \mathsf{D}_4 \,|\,\text{bunch}, k) \Prob_{\Omega_\beta}(\text{bunch} \,|\, k ) \\
		&{}&&{}&&+ \Prob_{\Omega_\beta}(\mathsf{D}_1, \mathsf{D}_4 \,|\,\text{anti-bunch}, k)\Prob_{\Omega_\beta}(\text{anti-bunch} \,|\, k) \bigg] \Prob_{\Omega_\beta}(k \text{ photons at } \mathsf{BS}_1).
	\end{alignedat}
\end{equation}
Here there is no factor $1/2$ after $\Prob_{\Omega_\beta}(\text{bunch} \,|\, k )$ because no matter into which output arm the photons bunch only one extra detector dark count is needed, incurring only a factor $p_\text{dark}$ instead of $p_\text{dark}^2$, which we do not neglect. \\
Given all this, one has to explicitly write out all these probability contributions in \eqref{equ: Pd1d2} and \eqref{equ: Pd1d4}, which gives very long overall expressions. These can be found in appendix \ref{app:Pd1d2_Pd1d4}. Finally, note that for example for overlaps $\beta \in \{0,1\}$ an ideal honest party with the set of parameters $\Omega_{\beta} = (1, 1, 0, 1, 1/2, 1/2, \beta, 0, 1, 0)$ will produce
	\begin{alignat}{2}
			\Prob(0 \,|\, \rho_0, \text{concl.}) &= \frac{1}{3} \qquad \Prob(0 \,|\, \rho_1, \text{concl.}) &&= 1, \\
			\Prob(1 \,|\, \rho_0, \text{concl.}) &= \frac{2}{3} \qquad \Prob(1 \,|\, \rho_1, \text{concl.}) &&= 0, \\
			\Prob(\varnothing \,|\, \rho_0) &= \frac{1}{4} \qquad \Prob(\varnothing \,|\, \rho_1) &&= \frac{1}{2}
	\end{alignat}
using the \textbf{3BS} setup. This can be generalised to any $\beta$ as 
	\begin{align}\label{equ:pinc_3bs}
		\begin{split}
			\Prob(0 \,|\, \rho_\beta, \text{concl.}) &= \frac{1 + \beta^2}{3 - \beta^2}, \\
			\Prob(1 \,|\, \rho_\beta, \text{concl.}) &= 1- \frac{1 + \beta^2}{3 - \beta^2}, \\
			\Prob(\varnothing \,|\, \rho_\beta) &= \frac{1 + \beta^2}{4}.
		\end{split}
	\end{align}
Since an ideal \textsf{P} would produce this with the proposed setup it makes sense to consider this one as ``the ideal distribution'' for an experiment instead of the usual SWAP-test distribution. Nevertheless, the essential quantum interference happens in the first beam splitter $\mathsf{BS}_1$ implementing the SWAP-test.
	
\subsection{Statistical testing}
The SWAP-test is a probabilistic measurement and in a realistic scenario with errors it also won't give a deterministic answer on identical inputs of the form $\ket{\psi}\otimes \ket{\psi}$. In order to distinguish the honest prover from attackers the verifiers therefore need to test between the hypotheses ``the sample received comes from \textsf{P}'' and ``the sample received comes from attackers'', considering that \textsf{P} will make some (predictable) errors. As each round is run independently the samples generated over many rounds will be samples from a binomial distribution. In the problem there are three involved distributions: the ideal one, the imperfect honest one, and the attacker one. To distinguish them, we will perform a binomial test. Similar to the ideal scenario, we will define acceptance regions around the ideal distributions for each $\beta$ in such a way that we still capture the imperfect \textsf{P} with high probability. These acceptance regions depend on the experimental conditions $\Omega_{\beta}$. If the conditions are too bad, these regions are forced to be wide, possibly even largely overlapping for different $\beta$. Then the test will be impaired and, for example, if all acceptance regions overlap, the protocol can be broken. Or else, the conditions could be bad enough so that we have to run an infeasibly large amount of rounds. In such cases we say \textit{the experimental conditions are too weak for QPV}. Intuitively, if the experimental conditions are good enough, the more rounds we run, the narrower the acceptance regions will become\footnote{While still accepting \textsf{P} with high probability, because we will define the acceptance region accordingly, see further below in the main text.} and the lower the probability that attackers produce a sample which reaches all acceptance regions simultaneously. This behaviour is depicted in figure~\ref{fig:stattest_squeezeR}. Then we can see how many rounds we need to run in order to achieve enough confidence in distinguishing attackers from \textsf{P} as a function of the experimental parameters $\Omega_{\beta}$. The worse the conditions, the more rounds we will need to run and if the conditions are too weak for QPV the protocol can be broken.
	
	\begin{figure}[h]
		\centering
		\includegraphics[width=0.66\textwidth]{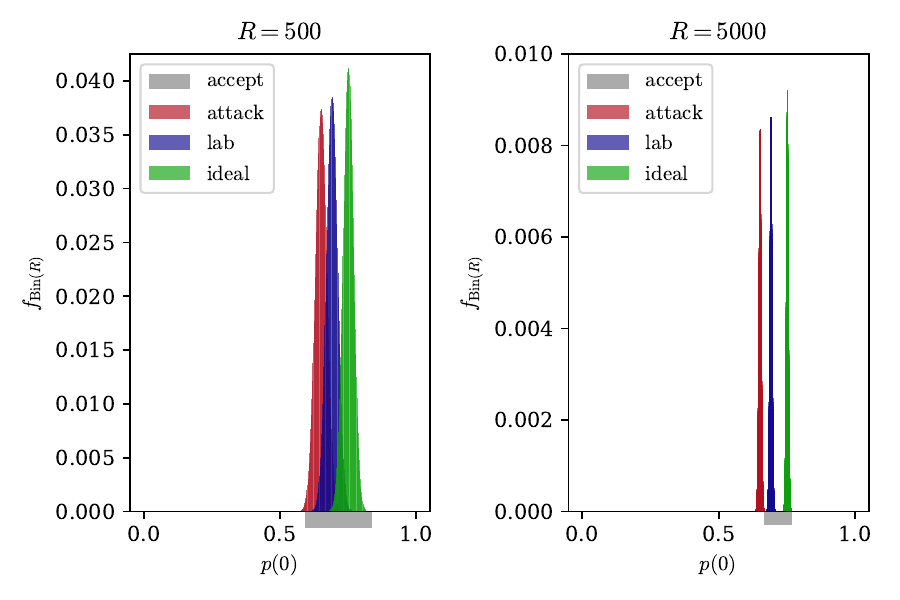}
		\caption{Sketch of ``squeezing'' the attackers out of an acceptance region. On the left, we have not run enough rounds yet and a large part of the attacker distribution (red) overlaps with the acceptance region (gray). Thus attackers would have a decently high probability of returning a sample that gets accepted. On the right, after many more rounds, the probability of returning a sample that lies in the acceptance region is negligibly small. These plots are for an example overlap of $\beta = 3/4$, $f_\text{Bin}(R)$ denotes the binomial probability density function for $R$ rounds and $p(0)$ the fraction of ``0'' results.}
		\label{fig:stattest_squeezeR}
	\end{figure}

\noindent In order to still capture the honest prover with high probability even in the presence of errors, we need to widen the acceptance regions of the errorless protocol as given by \eqref{equ:accreg_ideal}. The new lower bound will be the smaller of the two $\alpha$-quantiles of the ideal and the imperfect distribution. The new upper bound will be the larger of the two $(1-\alpha)$-quantiles. In other words,
\begin{align}\label{equ:accreg_exp}
	\begin{split}
		L_{\alpha, \Omega_{\beta}} &= \min \left\{ z_{\alpha}(\beta, R_\beta), z_{\alpha}(\Omega_{\beta}, R_\beta) \right\} / R_\beta \\
		U_{\alpha, \Omega_{\beta}} &= \max \left\{ z_{1-\alpha}(\beta, R_\beta), z_{1-\alpha}(\Omega_{\beta}, R_\beta) \right\} / R_\beta .
	\end{split}
\end{align}
Here again the values of $z_{q}(\beta,R_\beta)$ and $z_{q}(\Omega_\beta,R_\beta)$ can be obtained via the inverse of the cumulative distribution function $F_{\text{Bin}(R_\beta, p_\beta(0))}$ and $F_{\text{Bin}(R_\beta, p_{\Omega_\beta}(0))}$, respectively. This defines the round-dependent acceptance regions $\mathsf{acc}_{\Omega_\beta}(\alpha, R_\beta) = [L_{\alpha, \Omega_{\beta}}, U_{\alpha, \Omega_{\beta}}]$. Defining these region in this way ensures that we still capture \textsf{P} with high probability $\geq 1 - O(k\alpha)$, with $k$ the number of different overlaps used in the protocol and $\alpha$ can be set very small, like $10^{-6}$. Meanwhile, attackers need to get $\hat{p}_\beta(0) \in \mathsf{acc}_{\Omega_\beta}(\alpha, R_\beta)$ for all $\beta$ in order to succeed. If the experimental conditions $\Omega_\beta$ are so bad that all $\mathsf{acc}_{\Omega_\beta}(\alpha, R_\beta)$ overlap, attackers can succeed by choosing to answer with some fixed list producing $\hat{p}(0) \in \bigcap_\beta \mathsf{acc}_{\Omega_\beta}(\alpha, R_\beta)$. Indeed, all acceptance regions overlap when one tries to implement the SWAP-test with just one beam splitter and two click/no-click detectors. We hence \textit{demand} that not all acceptance regions overlap, that is $\bigcap_\beta \mathsf{acc}_{\Omega_\beta}(\alpha, R_\beta) = \emptyset$. \\
Finally, one subtlety is that with the proposed $\textbf{3BS}$ setup the rate of inconclusive answers is overlap dependent, cf. equation \eqref{equ:pinc_3bs}. We still want to keep a uniform distribution over the input overlaps $\{ \beta_1, \dots, \beta_k \}$, though. To do so, we send uniformly random input states $\rho_\beta$ until the number of conclusive rounds reaches $R_\beta \geq R_\text{threshold}$ for all $\beta$.  Eventually, with sufficiently nice $\Omega_\beta$ and sufficiently high $R_\text{threshold}$, we may achieve
\begin{align}\label{equ:p_acc_att}
	\Prob_{\Omega_{\beta}}(\mathsf{acc} | \mathsf{att}) = \prod_\beta \Prob_{\Omega_{\beta}}(\mathsf{acc}_\beta | \mathsf{att}) = \prod_\beta \Prob( \hat{p}_0(\beta) \in \mathsf{acc}_{\Omega_\beta}(\alpha, R_\beta) ) \leq \eps,
\end{align}
with any desired $\eps$. For example, we could set $\eps = \alpha$ and, in the end, choose $\alpha$ very small, say $\alpha \sim 10^{-6}$. Then we would accept \textsf{P} with high probability at least $1-O(k\alpha)$ and accept attackers with vanishing probability at most $\alpha$.

	\begin{figure}[h]
		\centering
		\includegraphics[width=0.66\textwidth]{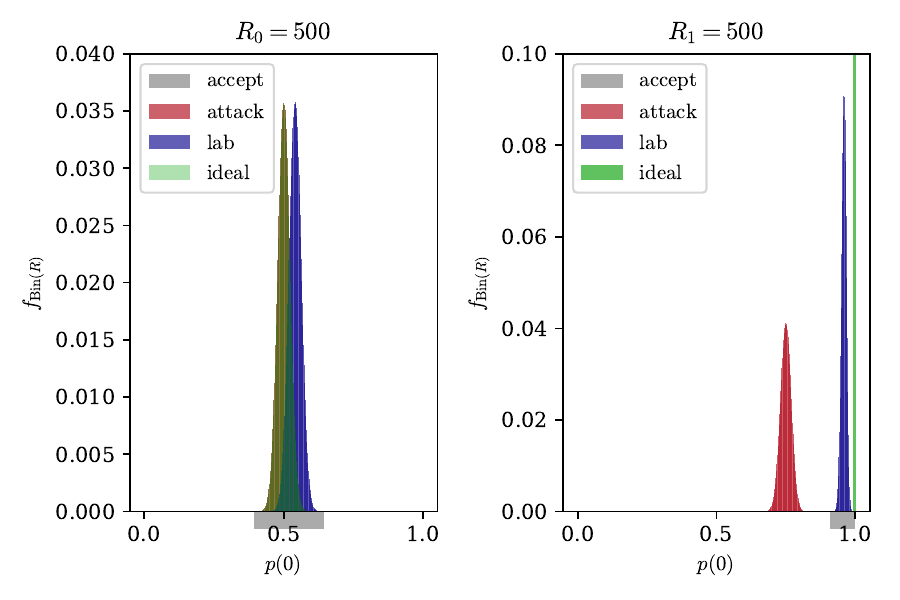}
		\caption{Illustrating the test via the example of the protocol with overlaps $\{ 0,1 \}$, so that $p_{\beta=0}(0) = 1/2$ and $p_{\beta=1}(0) = 1$. We define acceptance regions as in \eqref{equ:accreg_exp} and the difficulty for the attackers is that they have to get into the acceptance regions for all overlaps $\beta$ to pass the test. If the experimental conditions are good enough, so that the acceptance regions can be made sufficiently narrow by increasing the number of rounds $R_\beta$, the probability that attackers return a successful sample is negligibly small, because their probability density function has essentially zero overlap with at least one acceptance region. Note that in the left plot the attack distribution is the same as the ideal one (hence overlaying red and green), but they perform badly on identical inputs.}
		\label{fig:stattest_overlaps01}
	\end{figure}
	
\subsection{Optimal attacker strategy}
	We now go on to describe optimal attack strategies restricted to classical communication and no pre-shared entanglement in the presence of experimental errors. Our result on parallel repetition implies that the optimal strategy for multiple rounds is to use the optimal single-round strategy many times. Attackers will therefore maximise their chance to get accepted by trying to bring their $\hat{p}_\beta(0)$ as close as possible to $\max \{ p_\beta(0), p_{\Omega_{\beta}}(0) \}$ for all $\beta$. This is because generally attackers want to answer ``0'' as often as they can\footnote{while performing as well as possible overall} in order to perform better on high overlap inputs. We take this into account by allowing them to adaptively optimise towards $\max \{ p_\beta(0), p_{\Omega_{\beta}}(0) \}$. The relevant parameter is therefore\footnote{Since, conditioned on a conclusive answer, we have $\Delta_\beta(0) = \Delta_\beta(1)$ we can just use the ``0'' answers and write $\Delta_\beta$.} $\Delta_\beta = \lvert \max \{ p_\beta(0), p_{\Omega_{\beta}}(0) \} - \hat{p}_\beta(0) \rvert$. Because attackers have to minimize $\Delta_\beta$ for all $\beta$ simultaneously to have a chance to win, we choose to minimize the one-norm $\lVert \Delta \rVert_1$ of the vector
	\begin{align}
		\Delta = \begin{pmatrix}
			\Delta_{\beta_1} \\ \vdots \\ \Delta_{\beta_k}
		\end{pmatrix}.
	\end{align}
	As before the attackers are restricted to PPT-measurements $\{ \Pi_0, \Pi_1, \Pi_{\varnothing} \}$ to capture attacks using classical communication. In section~\ref{sec:qpvswaploss} we proved full loss tolerance of our protocol. Adding two more beam splitters in the \textbf{3BS} setup does not change that, as the quantum interference that is hard to simulate for attackers happens in $\mathsf{BS}_1$ and the effect of the extra beam splitters can be classically calculated by each attacker. We can hence focus on the case of $\eta = 1$ from now on. In the \textbf{3BS} setup there is a non-zero inconclusive-answer rate even in the $\eta = 1$ case and it is overlap dependent, which adds extra difficulty compared to the original setting. We assume they have some way of getting the honest loss pattern \eqref{equ:pinc_3bs} right by adding the constraints\footnote{If they don't get it right, they are caught right away.} 
	\begin{align}\label{equ:lossconstraints}
		\Tr[\Pi_\varnothing \rho_\beta] = \frac{1 + \beta^2}{4} \qquad \forall \beta.
	\end{align}
	In total, this leaves us with the following optimisation problem:
	\begin{equation}
		\label{equ:Opt}
		\begin{alignedat}{2}
			&\texttt{minimize:} \qquad && \lVert \Delta \rVert_1 \\
			&\texttt{subject to:} \qquad &&\Pi_{0}, \Pi_{1}, \Pi_{\varnothing} \succeq 0 \\
			&{} &&\Pi_{0}^{T_\mathsf{B}}, \Pi_{1}^{T_\mathsf{B}}, \Pi_{\varnothing}^{T_\mathsf{B}} \succeq 0 \\
			&{} &&\Pi_0 + \Pi_{1} + \Pi_{\varnothing} = \mathds{1} \\
			&{} &&\Tr[\Pi_\varnothing \rho_\beta] = \left(1 + \beta^2\right)/4 \qquad \forall \beta,
		\end{alignedat}
	\end{equation}
	with
	\begin{align}
		\Delta_i = \begin{cases}
			p_{\beta_i}(0) - \frac{\Tr[\Pi_0 \rho_{\beta_i}]}{1-\left(1 + \beta_i^2\right)/4} & \text{if } p_{\beta_i}(0) \geq p_{\Omega_{\beta_i}}(0) \\
			p_{\Omega_{\beta_i}}(0) - \frac{\Tr[\Pi_0 \rho_{\beta_i}]}{1-\left(1 + \beta_i^2\right)/4} & \text{if } p_{\beta_i}(0) \leq p_{\Omega_{\beta_i}}(0)
		\end{cases}.
	\end{align}
	The solution will give us the optimal PPT-measurement attackers can apply to do as well as possible on all $\beta$ for the statistical test which determines the final success probability in the end.\footnote{Note that minimizing $\Delta = p-\hat{p}$ also minimizes the Kullback-Leibler divergence $D_\text{KL}(P \parallel Q)$ for $P \sim \text{Bin}(n,p)$ and $Q \sim \text{Bin}(n,\hat{p})$.} Since practically one will have $p_\text{pair} > 0$, also attackers will have access to three or four photons sometimes and possibly they can do better with these extra resources. Therefore, we will also solve the above optimization problem for higher photon numbers by adjusting the dimensions of the involved operators. In particular, for $k$ photons attackers will apply a POVM $\left\{ \Pi_0^{(k)}, \Pi_1^{(k)}, \Pi_{\varnothing}^{(k)} \right\}$ on $2^k$ dimensional states $\rho^{(k)}_\beta$. In general the state prepared by the verifiers takes the form
	\begin{align}\label{equ:Wernerk}
		\rho_\beta^{(k)} = \int_{\text{U}(2)} U^{\otimes k} P_{\psi \phi}^{(k)} \left( U^\dag \right)^{\otimes k} d\mu(U),
	\end{align}
	where $P_{\psi \phi}^{(k)}$ is some pure $k$-qubit state describing two states $\ket{\psi}, \ket{\phi}$ with $\lvert \braket{\psi|\phi} \rvert = \beta$ making up a $k$-photon state\footnote{For example, if one source produces a pair we'd have \smash{$P_{\psi \phi}^{(3)} = \ketbra{\psi \psi \phi}{\psi \psi \phi}$}}. Here $\mu$ is the Haar measure on the unitary group U$(2)$. Integrals of the form \eqref{equ:Wernerk} can be explicitly calculated using Weingarten calculus \cite{weingarten1978asymptotic,collins2006integration}. We used the Mathematica package \textsf{IntU} \cite{puchala2011symbolic} to calculate $\rho_\beta^{(3)}$ and $\rho_\beta^{(4)}$.

	The above optimization then gives $\Delta_\text{min}^{(k)}$. Clearly, it is beneficial for attackers to choose to answer as much as possible in rounds with more photons. The overall $\Delta_{\beta, \text{min}}$ will then be composed as 
	\begin{align}
		\Delta_{\beta, \text{min}} = p_{(2)} \Delta_{\beta, \text{min}}^{(2)} + p_{(3)}\Delta_{\beta, \text{min}}^{(3)} + p_{(4)}\Delta_{\beta, \text{min}}^{(4)},
	\end{align}
	where $p_{(m)}$ is the fraction of rounds of $m$ photons among the attackers' conclusive answer rounds. In particular, the attackers control $p_{(m)}$ and it could be that $p_{(4)}=1$ and $p_{(2)}=p_{(3)}=0$ for example. This is constrained by \eqref{equ:lossconstraints}. Say, for example, the verifiers expect a conclusive-answer rate of $10^{-9}$ and the rate of 4 photons (double pair-production) is $p_4 \sim 10^{-8}$. Then indeed attackers can choose $p_{(4)}=1$ and only answer on 4 photon rounds (and even among those not answer on all). If $p_\text{concl.} \sim 10^{-6}$ in this example, then attackers will also need to answer on some rounds with 3 photons, thus $p_{(4)} < 1$ and $p_{(3)} > 0$, and possibly also $p_{(2)} > 0$ in order to be able to answer conclusively often enough. All this will affect the $\Delta_{\beta, \text{min}}$ and in turn the total $\Delta_{\text{min}}$, which will then affect the statistical test in the end, which affects how we have to set $R_\text{threshold}$.
	
	\section{Simulation under realistic conditions}\label{sec:simulationres}
	We have done all these simulations for the prime example of overlaps $\{ 0, 1 \}$, that is, sending either equal or orthogonal states. First of all, in that case the optimisation \eqref{equ:Opt} gives 
	\begin{align}
		\begin{split}
			\lVert \Delta_{\text{min}}^{(2)} \rVert_1 &= \frac{1}{2}, \\
			\lVert \Delta_{\text{min}}^{(3)} \rVert_1 &= \frac{1}{3}, \\
			\lVert \Delta_{\text{min}}^{(4)} \rVert_1 &= \frac{1}{6}.
		\end{split}
	\end{align}
	The portions $p_{(m)}$ are chosen adaptively, depending on what the overall expected experimental conclusive-rate $p_\text{concl.} = 1 - \frac{1}{k} \sum_\beta p_\varnothing(\Omega_\beta)$ is\footnote{Basically, the attackers use higher $m$ as often as possible before going on to use $m-1$, as mentioned above.}. Then everything is fed into codes calculating all the $\mathsf{acc}_{\Omega_\beta}(\alpha, R_\beta)$, which depend on the experimental conditions, the number of conclusive rounds we run and how small we set $\alpha$. Finally, we increase the number of (conclusive) rounds and plot $\Prob_{\Omega_{\beta}}(\mathsf{acc} \,|\, \mathsf{att})$ as a function of $R_\text{threshold}$. Realistic experimental parameters could be close to \cite{trivedi2020generation, migdall2013single}
	\begin{equation}\label{equ:exp_params}
		\begin{alignedat}{3}
				\eta_\text{source} &= 0.12 \hspace{5.2cm}\eta_\text{BS} &&= 0.20  \hspace{2.1cm} \eta_\text{det} &&= 0.20\\
				B &= 0.1197\dots  \hspace{4.2cm}\lvert R \rvert^2 &&= 0.45 \hspace{1.5cm} p_\text{dark} &&= 10^{-7}\\
				g^{(2)} &= 0.04 \hspace{5.2cm}\lvert T \rvert^2 &&= 0.55 &{}\\
				p_\text{pair} &= \frac{g^{(2)}}{2} (2 \eta_\text{source} - B)^2 \approx 3 \cdot 10^{-4} &{} &{}
		\end{alignedat}
	\end{equation}
	We wrote code that calculates $\Prob_{\Omega_{\beta}}(\mathsf{acc} \,|\, \mathsf{att})$ as a function of $R_\text{threshold}$. The results for both the original \textbf{BS} setup and the proposed \textbf{3BS} setup are depicted in figures~\ref{fig:p_acc_att_diff_polerr_swap} and ~\ref{fig:p_acc_att_diff_polerr}, respectively.

	\begin{figure}
		\centering
		\begin{subfigure}{.375\textwidth}
			\centering
			\captionsetup{width=.3\linewidth}
			\includegraphics[width=\linewidth]{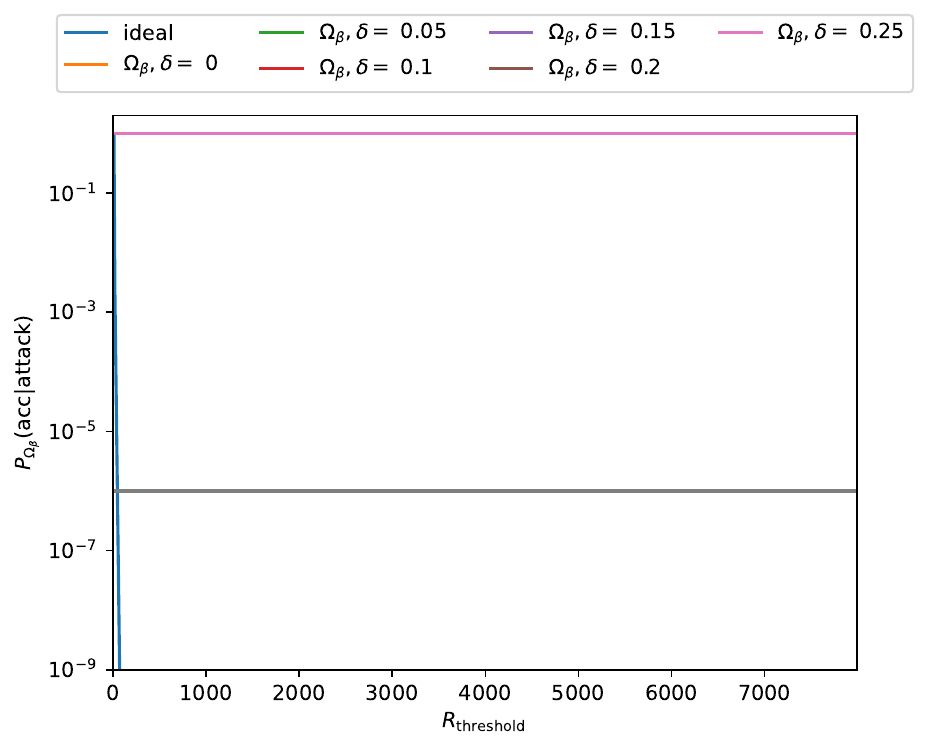}
			\caption{\textbf{BS.}}
			\label{fig:p_acc_att_diff_polerr_swap}
		\end{subfigure}%
		\begin{subfigure}{.375\textwidth}
			\centering
			\captionsetup{width=.3\linewidth}
			\includegraphics[width=\linewidth]{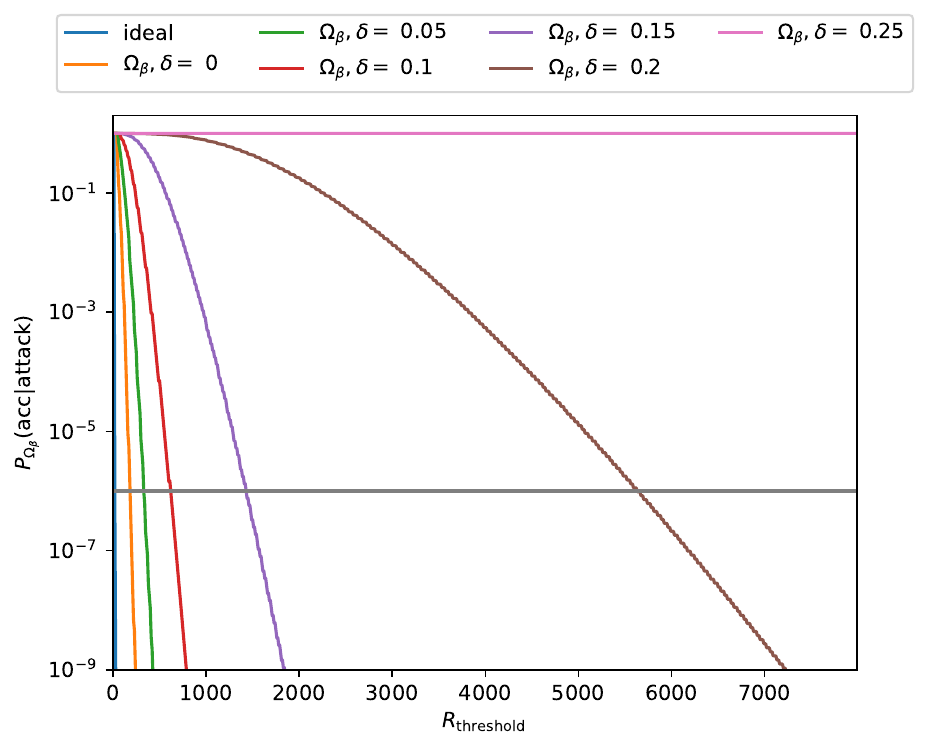}
			\caption{\textbf{3BS.}}
			\label{fig:p_acc_att_diff_polerr}
		\end{subfigure}
	\caption{The success probability of attackers, $\Prob_{\Omega_{\beta}}(\mathsf{acc} | \mathsf{att})$ given the experimental conditions ~\eqref{equ:exp_params} and different values of the overlap error $\delta$ as a function of $R_\text{threshold}$ for the different setups. Here $\beta \in \{ 0, 1 \}$. The grey horizontal line marks $\Prob_{\Omega_{\beta}}(\mathsf{acc} | \mathsf{att}) = 10^{-6}$. As mentioned in the main text, for the original \textbf{BS} setup without NR detectors high loss (from source to detection) forces $p_{\Omega_{\beta}}(0) \approx 1$ for all $\beta$ such that all acceptance regions will overlap, making the protocol insecure (note that all imperfect lines lie on top of each other). The \textbf{3BS} setup removes this insecurity by introducing probabilistic NR using just simple click/no-click detectors.}
	\end{figure}

	Note that the $R_\text{threshold}$ in these plots correspond to \textit{conclusive rounds per overlap} $\beta \in \{ \beta_1, \dots, \beta_k \}$. If, for example, $p_\text{concl.} \sim 10^{-6}$ and $R_\text{threshold} \sim 10^3$, then we'll need to run $\sim k \cdot 10^9$ rounds in total in order to build up enough conclusive rounds (per $\beta$). In general, if $p_\text{concl.} \sim 10^{-a}$ and $R_\text{threshold} \sim 10^b$, then we expect to run at least $\sim k \cdot 10^{a+b}$ rounds in total. So if in principle security can be achieved, meaning that we can make $\Prob_{\Omega_{\beta}}(\mathsf{acc} | \mathsf{att})$ arbitrarily small by increasing the number of rounds we run, the time needed per round will then determine if QPV is practically feasible under the experimental conditions $\Omega_{\beta}$.

\section{Conclusion}
We constructed and analysed a new quantum position verification protocol, QPV$_{\textsf{SWAP}}$, and showed that it possesses several desirable properties. It was shown that it is fully loss tolerant against LOCC attackers with no pre-shared entanglement, that it can be attacked with $\tilde{O}(n)$ pre-shared EPR pairs and that at least $\sim 0.103 n$ pre-shared EPR pairs are necessary in the $\beta \in \{ 0, 1\}$ case. Moreover, it fulfills strong parallel repetition and retains the loss tolerance even if all rounds are run in parallel. QPV$_{\textsf{SWAP}}$ even remains loss tolerant and secure if attackers are allowed to use quantum communication between them (without pre-shared entanglement), making it the first QPV protocol with this property. However, we were not able to show a finite gap in the loss+quantum communication scenario between the attacker and the honest success probability. We suspect that there is a finite gap and that this also holds for the $n$ round parallel repetition, but leave the proofs to future work.

In addition, the flexibility and simplicity of the SWAP test, both theoretically and experimentally, make it an excellent candidate for practical QPV. To that end, we undertook a detailed analysis of our protocol under realistic experimental conditions, in which we quantify the entire experimental setup in terms of possible imperfections and take these into account in the attack model. In the end a binomial test determines (with high probability) if the returned list of answers originated from \textsf{P} or from attackers. We identified a condition indicating if the experimental conditions $\Omega_\beta$ allow for security in principle and if so, calculated the ``figure of merit'' $R_\text{threshold}(\Omega_\beta)$, which is the number of conclusive rounds (per $\beta$) we need to collect in order to be able to guarantee a sufficiently low attack success probability $\Prob_{\Omega_\beta}(\mathsf{acc} | \mathsf{attack})$. For the prime example of sending either identical or orthogonal states and realistic conditions already $R_\text{threshold}(\Omega_\beta) \sim 10^2$-$10^3$ suffices to achieve $\Prob_{\Omega_\beta}(\mathsf{acc} | \mathsf{attack}) \leq 10^{-6}$. Our protocol therefore remains fairly robust in the presence of experimental imperfections and the challenge for implementation is to run many rounds fast enough in order to be able to collect $R_\text{threshold}(\Omega_\beta)$ conclusive rounds for each overlap.

\paragraph{Acknowledgments}
We would like to thank Wolfgang L\"offler, Kirsten Kanneworff and Norbert L\"utkenhaus for many useful discussions. RA and HB were supported by the Dutch Research Council (NWO/OCW), as part of the Quantum Software Consortium programme (project number 024.003.037). PVL and HB were supported by the Dutch Research Council (NWO/OCW), as part of the NWO Gravitation Programme Networks (project number 024.002.003).

\pagebreak

\bibliographystyle{alpha}
\bibliography{ZoteroReferences,QPV_folder,HOM_folder}

\newcommand{\etalchar}[1]{$^{#1}$}
\begin{thebibliography}{BCWdW01}

\bibitem[ABSL22]{AllBuhSpeVer22}
Rene Allerstorfer, Harry Buhrman, Florian Speelman, and Philip~Verduyn Lunel.
\newblock {O}n the {R}ole of {Q}uantum {C}ommunication and {L}oss in {A}ttacks
  on {Q}uantum {P}osition {V}erification, 2022.
\newblock arXiv:2208.04341v1.

\bibitem[ApS20]{aps2020mosek}
MOSEK ApS.
\newblock Mosek optimizer api for python.
\newblock {\em Software Package, Ver}, 9, 2020.

\bibitem[BCF{\etalchar{+}}11]{buhrman_position-based_2011}
Harry Buhrman, Nishanth Chandran, Serge Fehr, Ran Gelles, Vipul Goyal, Rafail
  Ostrovsky, and Christian Schaffner.
\newblock Position-{Based} {Quantum} {Cryptography}: {Impossibility} and
  {Constructions}.
\newblock In Phillip Rogaway, editor, {\em Advances in {Cryptology} –
  {CRYPTO} 2011}, pages 429--446, Berlin, Heidelberg, 2011. Springer Berlin
  Heidelberg.

\bibitem[BCS21]{bluhm2021position}
Andreas Bluhm, Matthias Christandl, and Florian Speelman.
\newblock Position-based cryptography: Single-qubit protocol secure against
  multi-qubit attacks.
\newblock {\em arXiv preprint arXiv:2104.06301}, 2021.

\bibitem[BCWdW01]{buhrman_quantum_2001}
Harry Buhrman, Richard Cleve, John Watrous, and Ronald de~Wolf.
\newblock Quantum fingerprinting.
\newblock {\em Physical Review Letters}, 87(16):167902, September 2001.
\newblock arXiv: quant-ph/0102001.

\bibitem[BFSS13]{buhrman2013garden}
Harry Buhrman, Serge Fehr, Christian Schaffner, and Florian Speelman.
\newblock The garden-hose model.
\newblock In {\em Proceedings of the 4th conference on Innovations in
  Theoretical Computer Science}, pages 145--158, 2013.

\bibitem[BK11]{beigi_simplified_2011}
Salman Beigi and Robert Koenig.
\newblock Simplified instantaneous non-local quantum computation with
  applications to position-based cryptography.
\newblock {\em New Journal of Physics}, 13(9):093036, September 2011.
\newblock arXiv: 1101.1065.

\bibitem[BKMS06]{KentPatent2006}
Raymond~G. Beausoleil, Adrian Kent, William~J. Munro, and Timothy~P. Spiller.
\newblock Tagging systems, US patent 7075438, 2006.

\bibitem[CGMO09]{chandran_position_2009}
Nishanth Chandran, Vipul Goyal, Ryan Moriarty, and Rafail Ostrovsky.
\newblock Position {Based} {Cryptography}.
\newblock In Shai Halevi, editor, {\em Advances in {Cryptology} - {CRYPTO}
  2009}, Lecture {Notes} in {Computer} {Science}, pages 391--407, Berlin,
  Heidelberg, 2009. Springer.

\bibitem[CHE{\etalchar{+}}21]{curtis2021single}
Jacob~C Curtis, Connor~T Hann, Salvatore~S Elder, Christopher~S Wang, Luigi
  Frunzio, Liang Jiang, and Robert~J Schoelkopf.
\newblock Single-shot number-resolved detection of microwave photons with error
  mitigation.
\newblock {\em Physical Review A}, 103(2):023705, 2021.

\bibitem[CL15]{chakraborty_practical_2015}
Kaushik Chakraborty and Anthony Leverrier.
\newblock Practical {Position}-{Based} {Quantum} {Cryptography}.
\newblock {\em Physical Review A}, 92(5):052304, November 2015.
\newblock arXiv: 1507.00626.

\bibitem[CLM{\etalchar{+}}14]{chitambar2014everything}
Eric Chitambar, Debbie Leung, Laura Man{\v{c}}inska, Maris Ozols, and Andreas
  Winter.
\newblock Everything you always wanted to know about locc (but were afraid to
  ask).
\newblock {\em Communications in Mathematical Physics}, 328(1):303--326, 2014.

\bibitem[Cos13]{cosentino_ppt-indistinguishable_2013}
Alessandro Cosentino.
\newblock {PPT}-indistinguishable states via semidefinite programming.
\newblock {\em Physical Review A}, 87(1):012321, January 2013.
\newblock arXiv: 1205.1031.

\bibitem[C{\'S}06]{collins2006integration}
Beno{\^\i}t Collins and Piotr {\'S}niady.
\newblock Integration with respect to the haar measure on unitary, orthogonal
  and symplectic group.
\newblock {\em Communications in Mathematical Physics}, 264(3):773--795, 2006.

\bibitem[Dol19]{dolev2019}
Kfir Dolev.
\newblock Constraining the doability of relativistic quantum tasks.
\newblock {\em arXiv preprint arXiv:1909.05403}, 2019.

\bibitem[ESM{\etalchar{+}}21]{Endo_2021}
Mamoru Endo, Tatsuki Sonoyama, Mikihisa Matsuyama, Fumiya Okamoto, Shigehito
  Miki, Masahiro Yabuno, Fumihiro China, Hirotaka Terai, and Akira Furusawa.
\newblock Quantum detector tomography of a superconducting nanostrip
  photon-number-resolving detector.
\newblock {\em Optics Express}, 29(8):11728, mar 2021.

\bibitem[GECP13]{garcia-escartin_swap_2013}
Juan~Carlos Garcia-Escartin and Pedro Chamorro-Posada.
\newblock The {SWAP} test and the {Hong}-{Ou}-{Mandel} effect are equivalent.
\newblock {\em Physical Review A}, 87(5):052330, May 2013.
\newblock arXiv: 1303.6814.

\bibitem[GLW13]{gao2013}
Fei Gao, Bin Liu, and Qiao-Yan Wen.
\newblock Enhanced no-go theorem for quantum position verification.
\newblock {\em arXiv preprint arXiv:1305.4254}, 2013.

\bibitem[HOM87]{Hong-Ou-Mandel}
C.~K. Hong, Z.~Y. Ou, and L.~Mandel.
\newblock Measurement of subpicosecond time intervals between two photons by
  interference.
\newblock {\em Phys. Rev. Lett.}, 59:2044--2046, Nov 1987.

\bibitem[JAC04]{jex2004comparing}
Igor Jex, Erika Andersson, and Anthony Chefles.
\newblock Comparing the states of many quantum systems.
\newblock {\em Journal of Modern Optics}, 51(4):505--523, 2004.

\bibitem[JKPPG21]{junge2021geometry}
Marius Junge, Aleksander~M Kubicki, Carlos Palazuelos, and David
  P{\'e}rez-Garc{\'\i}a.
\newblock Geometry of banach spaces: a new route towards position based
  cryptography.
\newblock {\em arXiv preprint arXiv:2103.16357}, 2021.

\bibitem[KMS11]{kent_quantum_2011}
Adrian Kent, William~J. Munro, and Timothy~P. Spiller.
\newblock Quantum {Tagging}: {Authenticating} {Location} via {Quantum}
  {Information} and {Relativistic} {Signalling} {Constraints}.
\newblock {\em Physical Review A}, 84(1):012326, July 2011.
\newblock arXiv: 1008.2147.

\bibitem[LL11]{lau_insecurity_2011}
Hoi~Kwan Lau and Hoi~Kwong Lo.
\newblock Insecurity of position-based quantum cryptography protocols against
  entanglement attacks.
\newblock {\em Physical Review A}, 83(1):012322, January 2011.
\newblock arXiv: 1009.2256.

\bibitem[Lou00]{loudon2000quantum}
R.~Loudon.
\newblock {\em The Quantum Theory of Light}.
\newblock OUP Oxford, 2000.

\bibitem[LXS{\etalchar{+}}16]{lim_loss-tolerant_2016}
Charles Ci~Wen Lim, Feihu Xu, George Siopsis, Eric Chitambar, Philip~G. Evans,
  and Bing Qi.
\newblock Loss-tolerant quantum secure positioning with weak laser sources.
\newblock {\em Physical Review A}, 94(3):032315, September 2016.
\newblock arXiv: 1607.08193.

\bibitem[Mal10a]{Malaney2010}
Robert~A. Malaney.
\newblock Location-dependent communications using quantum entanglement.
\newblock {\em Phys. Rev. A}, 81:042319, Apr 2010.

\bibitem[Mal10b]{MalaneyNoisy2010}
Robert~A. Malaney.
\newblock Quantum location verification in noisy channels.
\newblock In {\em 2010 IEEE Global Telecommunications Conference GLOBECOM
  2010}, pages 1--6, 2010.

\bibitem[MdW18]{montanaro_survey_2018}
Ashley Montanaro and Ronald de~Wolf.
\newblock A {Survey} of {Quantum} {Property} {Testing}.
\newblock {\em arXiv:1310.2035 [quant-ph]}, March 2018.
\newblock arXiv: 1310.2035.

\bibitem[MPFB13]{migdall2013single}
Alan Migdall, Sergey~V Polyakov, Jingyun Fan, and Joshua~C Bienfang.
\newblock {\em Single-photon generation and detection: physics and
  applications}.
\newblock Academic Press, 2013.

\bibitem[OCCG20]{olivo_breaking_2020}
Andrea Olivo, Ulysse Chabaud, André Chailloux, and Frédéric Grosshans.
\newblock Breaking simple quantum position verification protocols with little
  entanglement.
\newblock {\em arXiv:2007.15808 [quant-ph]}, July 2020.
\newblock arXiv: 2007.15808.

\bibitem[PM11]{puchala2011symbolic}
Zbigniew Pucha{\l}a and Jaros{\l}aw~Adam Miszczak.
\newblock Symbolic integration with respect to the haar measure on the unitary
  group.
\newblock {\em arXiv preprint arXiv:1109.4244}, 2011.

\bibitem[QLL{\etalchar{+}}15]{qi_free-space_2015}
Bing Qi, Hoi-Kwong Lo, Charles Ci~Wen Lim, George Siopsis, Eric~A. Chitambar,
  Raphael Pooser, Philip~G. Evans, and Warren Grice.
\newblock Free-space reconfigurable quantum key distribution network.
\newblock {\em 2015 IEEE International Conference on Space Optical Systems and
  Applications (ICSOS)}, pages 1--6, October 2015.
\newblock arXiv: 1510.04891.

\bibitem[QS15]{qi_loss-tolerant_2015}
Bing Qi and George Siopsis.
\newblock Loss-tolerant position-based quantum cryptography.
\newblock {\em Physical Review A}, 91(4):042337, April 2015.
\newblock arXiv: 1502.02020.

\bibitem[RG15]{ribeiro_tight_2015}
Jérémy Ribeiro and Frédéric Grosshans.
\newblock A {Tight} {Lower} {Bound} for the {BB84}-states
  {Quantum}-{Position}-{Verification} {Protocol}.
\newblock {\em arXiv:1504.07171 [quant-ph]}, June 2015.
\newblock arXiv: 1504.07171.

\bibitem[Sim00]{simon2000peres}
Rajiah Simon.
\newblock Peres-horodecki separability criterion for continuous variable
  systems.
\newblock {\em Physical Review Letters}, 84(12):2726, 2000.

\bibitem[Spe16a]{speelman2016}
Florian Speelman.
\newblock {Instantaneous Non-Local Computation of Low T-Depth Quantum
  Circuits}.
\newblock In Anne Broadbent, editor, {\em 11th Conference on the Theory of
  Quantum Computation, Communication and Cryptography (TQC 2016)}, volume~61 of
  {\em Leibniz International Proceedings in Informatics (LIPIcs)}, pages
  9:1--9:24, Dagstuhl, Germany, 2016. Schloss Dagstuhl--Leibniz-Zentrum fuer
  Informatik.

\bibitem[Spe16b]{speelman_position-based_2016}
Florian Speelman.
\newblock {\em Position-based quantum cryptography and catalytic computation.}
\newblock PhD thesis, University of Amsterdam, Amsterdam, 2016.
\newblock OCLC: 964061686.

\bibitem[TFKW13]{tomamichel_monogamy--entanglement_2013}
Marco Tomamichel, Serge Fehr, Jędrzej Kaniewski, and Stephanie Wehner.
\newblock A {Monogamy}-of-{Entanglement} {Game} {With} {Applications} to
  {Device}-{Independent} {Quantum} {Cryptography}.
\newblock {\em New Journal of Physics}, 15(10):103002, October 2013.
\newblock arXiv: 1210.4359.

\bibitem[TFVM20]{trivedi2020generation}
Rahul Trivedi, Kevin~A Fischer, Jelena Vu{\v{c}}kovi{\'c}, and Kai M{\"u}ller.
\newblock Generation of non-classical light using semiconductor quantum dots.
\newblock {\em Advanced Quantum Technologies}, 3(1):1900007, 2020.

\bibitem[Unr14]{unruh_quantum_2014}
Dominique Unruh.
\newblock Quantum {Position} {Verification} in the {Random} {Oracle} {Model}.
\newblock In Juan~A. Garay and Rosario Gennaro, editors, {\em Advances in
  {Cryptology} – {CRYPTO} 2014}, pages 1--18, Berlin, Heidelberg, 2014.
  Springer Berlin Heidelberg.

\bibitem[VB96]{vandenberghe1996semidefinite}
Lieven Vandenberghe and Stephen Boyd.
\newblock Semidefinite programming.
\newblock {\em SIAM review}, 38(1):49--95, 1996.

\bibitem[Wat18]{watrous_theory_2018}
John Watrous.
\newblock {\em The {Theory} of {Quantum} {Information}}, page 418.
\newblock Cambridge University Press, 1 edition, April 2018.

\bibitem[Wei78]{weingarten1978asymptotic}
Don Weingarten.
\newblock Asymptotic behavior of group integrals in the limit of infinite rank.
\newblock {\em Journal of Mathematical Physics}, 19(5):999--1001, 1978.

\end{thebibliography}

\begin{appendices}
\section{\texorpdfstring{QPV$_{\textsf{SWAP}}$}{QPVswap}}
\subsection{Exponential suppression of attacker success probability}
\label{AppendixExpSuppr}
Here we provide the proof of \eqref{equ:psuccattgen}. Let $N_\beta$ the binomial distributed random variable  describing the number of ``0'' answers of attackers in $L_\beta$. Since $p_\beta, p^{\mathsf{AB}}_\beta \in \big[ \frac{1}{2}, 1 \big)$, we may approximate the binomial distribution with a normal distribution $\mathcal{N}(\mu, \sigma^2)$ with $\mu = R_\beta p$ and $\sigma^2 = R_\beta p (1-p)$ for $p \in \{ p_\beta, p^{\mathsf{AB}}_\beta \}$, respectively. This is valid as long as $R_\beta (1-p)$ is sufficiently large, which we can always achieve by making $R_\beta$ big enough. Then
\begin{align*}
    \Prob(\mathsf{acc}_\beta | \mathsf{attack}) &= \Prob\left(z_{\frac{\alpha}{2}}^\mathsf{P} \leq N_\beta \leq z_{1-\frac{\alpha}{2}}^\mathsf{P} \right) \\ &= F_{N_\beta}\left(z_{1-\frac{\alpha}{2}}^\mathsf{P}\right) - F_{N_\beta}\left(z_{\frac{\alpha}{2}}^\mathsf{P}-1\right)\\
    &\approx \frac{1}{2}\left[ 1+ \operatorname{erf}\left( \frac{z_{1-\frac{\alpha}{2}}^\mathsf{P} - \mu_\beta^{\mathsf{AB}}}{\sqrt{2}\sigma_\beta^{\mathsf{AB}}} \right) \right] - \frac{1}{2}\left[ 1+ \operatorname{erf}\left( \frac{z_{\frac{\alpha}{2}}^\mathsf{P} - \mu_\beta^{\mathsf{AB}}}{\sqrt{2}\sigma_\beta^{\mathsf{AB}}} \right) \right].
\end{align*}
Now for $\mathcal{N}(\mu, \sigma^2)$ one has $z_q = F^{-1}(q) = \mu + \sqrt{2} \sigma \operatorname{erf}^{-1}(2q-1)$. Replacing the $z_q$ values and defining $c_\alpha \coloneqq \operatorname{erf}^{-1}(1-\alpha)$ as well as $f_{\beta}^\mathsf{X} = \sqrt{2 p_\beta^\mathsf{X} (1-p_\beta^\mathsf{X})}$ gives
\begin{align*}
    \Prob(\mathsf{acc}_\beta | \mathsf{attack}) \approx \frac{1}{2} \operatorname{erf}\left( \frac{\sqrt{R_\beta} \Delta_\beta + f_{\beta}^\mathsf{P} c_\alpha}{f_{\beta}^\mathsf{AB}} \right) - \frac{1}{2} \operatorname{erf}\left( \frac{\sqrt{R_\beta} \Delta_\beta - f_{\beta}^\mathsf{P} c_\alpha}{f_{\beta}^\mathsf{AB}} \right).
\end{align*}
Using that $\operatorname{erf}(x) \approx 1 - \frac{e^{-x^2}}{\sqrt{\pi}x}$ for large $x$, we can write
\begin{align*}
    \Prob(\mathsf{acc}_\beta | \mathsf{attack}) \approx \sqrt{\frac{2}{\pi}}f_{\beta}^\mathsf{AB} \left[ \frac{e^{-\left(\sqrt{R_\beta}\Delta_\beta - f_{\beta}^\mathsf{P} c_\alpha \right)^2 / \left(f_{\beta}^\mathsf{AB}\right)^2}}{\sqrt{R_\beta}\Delta_\beta - f_{\beta}^\mathsf{AB} c_\alpha} - \frac{e^{-\left(\sqrt{R_\beta}\Delta_\beta + f_{\beta}^\mathsf{P} c_\alpha\right)^2/ \left(f_{\beta}^\mathsf{AB}\right)^2}}{\sqrt{R_\beta}\Delta_\beta + f_{\beta}^\mathsf{AB} c_\alpha} \right]
\end{align*}
As $p_\beta^\mathsf{AB} \neq 0$ and $p_\beta^\mathsf{AB} \neq 1$, we may neglect the terms $f_{\beta}^\mathsf{AB} c_\alpha$ in the denominators because we can make $R_\beta$ sufficiently large. Moreover, leaving out the second (positive) exponential term gives the approximate upper bound
\begin{align*}
    \Prob(\mathsf{acc}_\beta | \mathsf{attack}) \lesssim \frac{\sqrt{2}f_{\beta}^\mathsf{AB}}{\sqrt{\pi R_\beta} \Delta_\beta} e^{-\left(\sqrt{R_\beta}\Delta_\beta - f_{\beta}^\mathsf{P} c_\alpha \right)^2 / \left(f_{\beta}^\mathsf{AB}\right)^2} = O \left( \frac{2^{-\Delta_\beta^2 R_\beta}}{\Delta_\beta \sqrt{R_\beta}}\right).
\end{align*}

\subsection{Relating \texorpdfstring{$p_\text{succ}$ to $\lVert \Delta \rVert_1$ for QPV$_{\textsf{SWAP}}(0,1)$}{psucc to |Delta| for QPVswap(0,1)}}\label{AppendixPsuccDelta}
Relating these two quantities is fairly straigtforward and achieved by one application of the triangle inequality. Consider
\begin{align*}
    p_\text{succ} = \frac{1}{2}\frac{\Tr[\Pi_0 \rho_0]}{\eta} + \frac{1}{2}\frac{\Tr[\Pi_1 \rho_1]}{\eta} \leq u,
\end{align*}
with $u \leq 3/4$. We want to massage this in order to get $\Delta_0$ and $\Delta_1$ expressions into it. Doing so gives
\begin{align*}
     1-\frac{\Tr[\Pi_0 \rho_0]}{\eta} + \frac{1}{2}-\frac{\Tr[\Pi_1 \rho_1]}{\eta} \geq \frac{3}{2} - 2u.
\end{align*}
This implies
\begin{align*}
    \lVert \Delta \rVert_1 = \bigg| 1-\frac{\Tr[\Pi_0 \rho_0]}{\eta} \bigg| + \bigg| \frac{1}{2}-\frac{\Tr[\Pi_1 \rho_1]}{\eta} \bigg| \geq \bigg| 1-\frac{\Tr[\Pi_0 \rho_0]}{\eta} + \frac{1}{2}-\frac{\Tr[\Pi_1 \rho_1]}{\eta} \bigg| \geq \frac{3}{2} - 2u.
\end{align*}

\subsection{Optimal PPT Measurements for \texorpdfstring{QPV$_{\textsf{SWAP}}$}{QPVswap} Protocol} \label{AppendixSingleRound}

Here we shall prove the upper bound of the success probability of answering the protocol correctly for adversaries restricted to PPT operations in equation \eqref{eqref:LOCC_upperbound}. For simplification we will refer to the equal case as the $0$ case and unequal as the $1$ case. The idea of the proof is to find analytical feasible solutions to the primal and Dual programs of the SDP. In general a feasible solution to the primal program defines a lower bound to the maximization value, while a feasible solution to the dual program defines an upper bound. This is the property of $\textit{weak duality}$ which holds for any SDP \cite{vandenberghe1996semidefinite}. In all of our further proofs we find feasible primal values and dual values that coincide and thus our solutions are optimal and we have $\textit{strong duality}$. 

From the density matrices we see that there is no difference between picking two random equal states or picking two equal states in a random mutually unbiased basis, see $\rho_0$. Similarly picking two random orthogonal states or picking two orthogonal mutually unbiased basis (MUB) states is equal, see $\rho_1$. These become\footnote{Note that this is a slight change of notation with respect to the main text, where we used $\rho_\beta$ for overlap $\beta$. Here $\rho_0$ denotes the mixed state of sending identical states and $\rho_1$ denotes the one sending orthogonal states.}
\begin{align*}
    &\rho_0 = \frac{1}{6} \left( \begin{matrix} 2 & 0 & 0 & 0 \\ 0 & 1 & 1 & 0 \\ 0 & 1 & 1 & 0 \\ 0 & 0 & 0 & 2 \end{matrix} \right),  &\rho_1 = \frac{1}{6} \left( \begin{matrix} 1 & 0 & 0 & 0 \\ 0 & 2 & -1 & 0 \\ 0 & -1 & 2 & 0 \\ 0 & 0 & 0 & 1 \end{matrix} \right).
\end{align*}
It is useful to note that both density matrices $\rho_0, \rho_1$ are a mixture of unentangled states and thereby unentangled. Thus by the Peres-Horodecki separability criterion the partial transpose of $\rho_0$ and $\rho_1$ are positive semi-definite \cite{simon2000peres}. The optimization over all strategies of the single round protocol is written as follows in an SDP:

\begin{minipage}{0.48\textwidth}
\begin{align*}
    &\textbf{Primal Program}\\
    \textbf{maximize: } &\frac{1}{2} \Tr[ \Pi_0 \rho_0 + \Pi_1 \rho_1] \\
    \textbf{subject to: } &\Pi_0 + \Pi_1 = \mathbbm{1}_{2^2} \\
    & \Pi_k \in \text{PPT}(\mathsf{A}:\mathsf{B}), \ \ \ k \in \{0,1\} \\[1ex]
\end{align*}
\end{minipage}
\begin{minipage}{0.48\textwidth}
\begin{align*}
    &\textbf{Dual Program} \\
    \textbf{minimize: } &\Tr[Y] \\
    \textbf{subject to: } &Y - Q^{T_\mathsf{B}}_{i} - \rho_i / 2 \succeq 0, \ \ \ i \in \{0,1\} \\
    & Y\in \text{Herm}(\mathsf{A} \otimes \mathsf{B}) \\
    & Q_i \in \text{Pos}(\mathsf{A}, \mathsf{B}), \ \ \ i \in \{0,1\}. \\
\end{align*}
\end{minipage}
A feasible solution for the primal program is 
\begin{align*}
    &\Pi_0 = \frac{1}{3} \left( \begin{matrix} 2 & 0 & 0 & 0 \\ 0 & 1 & 1 & 0 \\ 0 & 1 & 1 & 0 \\ 0 & 0 & 0 & 2 \end{matrix} \right), &\Pi_1 = \frac{1}{3} \left( \begin{matrix} 1 & 0 & 0 & 0 \\ 0 & 2 & -1 & 0 \\ 0 & -1 & 2 & 0 \\ 0 & 0 & 0 & 1 \end{matrix} \right),
\end{align*}
with solution $ \frac{1}{2} \Tr[ \Pi_0 \rho_0 + \Pi_1 \rho_1] = 2/3$. Note that these measurement projectors correspond to attackers choosing a random MUB to measure in and returning 0 if the measurement outcomes were equal and 1 otherwise, which is also a single round LOCC strategy. This can be seen from the fact that 
\begin{align*}
    \frac{1}{3}( \ket{00}\bra{00} + \ket{11}\bra{11} + \ket{++}\bra{++} + \ket{--}\bra{--} + \ket{i^+i^+}\bra{i^+i^+} + \ket{i^-i^-}\bra{i^-i^-} )&= \Pi_0, \\
    \frac{1}{3}( \ket{10}\bra{10} + \ket{01}\bra{01} + \ket{-+}\bra{-+} + \ket{+-}\bra{+-} + \ket{i^-i^+}\bra{i^-i^+} + \ket{i^+i^-}\bra{i^+i^-} )&= \Pi_1.
\end{align*}
A feasible solution to the dual program is:
\begin{align*}
Y = \frac{\mathbbm{1}_{4}}{6}, && Q_0 = 0 \succeq 0, && Q_1 = \frac{\mathbbm{1}_{4}}{6}  - \frac{\rho_1^{T_\mathsf{B}}}{2} = \frac{1}{12 } \left( \begin{matrix} 1 & 0 & 0 & 1 \\ 0 & 0 & 0 & 0 \\ 0 & 0 & 0 & 0 \\ 1 & 0 & 0 & 1 \end{matrix} \right) = \frac{1}{6} \ket{\Phi^+}\bra{\Phi^+} \succeq 0.
\end{align*}
Which adhere to the constraints in the Dual Program:
\begin{align*}
    Y - Q^{T_\mathsf{B}}_{0} - \frac{\rho_0}{2} &= \frac{\mathbbm{1}_{4}}{6} - \frac{\rho_0}{2} = \frac{1}{12} \left( \begin{matrix} 0 & 0 & 0 & 0 \\ 0 & 1 & -1 & 0 \\ 0 & -1 & 1 & 0 \\ 0 & 0 & 0 & 0 \end{matrix} \right) = \frac{1}{6} \ket{\Psi^-} \bra{\Psi^-} \succeq 0\\
    Y - Q^{T_\mathsf{B}}_{1} - \frac{\rho_1}{2} &= \frac{\mathbbm{1}_{4}}{6} - \left( \frac{\mathbbm{1}_{4}}{6} - \frac{\rho_1}{2} \right) -\frac{\rho_1}{2} = 0 \succeq 0.
\end{align*}
Since $Y\in \text{Herm}(\mathsf{A} \otimes \mathsf{B})$ and we get a feasible solution for the dual with value $\Tr[Y] = \frac{2}{3}$. Thus we have a feasible solution of the primal and dual program that both give the same value so we conclude that the maximal probability of success for attackers under the PPT restriction is $2/3$.

\subsection{Optimal PPT Measurements for \texorpdfstring{QPV$^n_{\textsf{SWAP}}$}{repeated QPVswap} Protocol} \label{AppendixParallel}

We will prove that the optimal probability of succes for attackers in the $n$-round parallel repetition case is $(2/3)^n$. The SDP of the $n$-round parallel repetition protocol is given by:

\begin{minipage}{0.48\textwidth}
\begin{align*}
    &\textbf{Primal Program}\\
    \textbf{maximize: } &\frac{1}{2^n} \sum_{s \in \{0,1\}^n} \Tr[ \Pi_{s} \rho_{s}] \\
    \textbf{subject to: } &\sum_{s \in \{0,1\}^n} \Pi_{s} = \mathbbm{1}_{2^{2n}} \\
    & \Pi_{s} \in \text{PPT}(\mathsf{A}:\mathsf{B}), \ \ \ s \in \{0,1\}^n\\
\end{align*}
\end{minipage}
\begin{minipage}{0.48\textwidth}
\begin{align*}
    &\textbf{Dual Program}\\[1ex]
    \textbf{minimize: } &\Tr[Y] \\[3ex]
    \textbf{subject to: } &Y - Q^{T_\mathsf{B}}_{s} - \rho_{s} / 2^n \succeq 0, \ \ \ s \in \{0,1\}^n \\
    & Y \in \text{Herm}(\mathsf{A} \otimes \mathsf{B}) \\
    & Q_{s} \in \text{Pos}(\mathsf{A} \otimes \mathsf{B}). \\
\end{align*}
\end{minipage}

Repeating the strategy of the single round protocol gives a feasible solution for the primal program with success probability $(2/3)^n$. A feasible solution to the dual problem would yield an upper bound to the problem, but requires finding a general solution for the matrices $Y, Q_{s}$. \\

We start again with by setting $Y$ to be the identity matrix with some proper normalization 
\begin{equation}
    Y = \frac{\mathbbm{1}_{2^{2n}}}{2^{2n}} \left( \frac{2}{3} \right)^n = \frac{\mathbbm{1}_{2^{2n}}}{6^n}, \text{ such that } \Tr[Y] =  \left( \frac{2}{3} \right)^n. 
\end{equation}
We will construct a general feasible solution for $Q_{s}$ for any string $s \in \{0,1\}^n$ from  $Q_{T(s)}$ where $T(s)$ is the reversed sorted version of $s$. First we show a general solution for $s = 0^n$ and $s = 1^n$ string. Again a solution for the all-0 input case is $Q_{0^n} = 0 \succeq 0$. The first constraint for $s = 0^n$ in the dual program of the SDP then reduces to 
\begin{align}\label{allzeroconstraint}
     \frac{\mathbbm{1}_{2^{2n}}}{6^n} - \frac{\rho_0^{\otimes n}}{2^n}.
\end{align}
Note that the eigenvectors of $\rho_0$ are the four Bell states $\{\ket{\Phi^+}, \ket{\Phi^-}, \ket{\Psi^+}, \ket{\Psi^-}\}$, with respective eigenvalues $\{1/3,1/3,1/3,0\}$, then the eigenvalues of $\frac{\rho_0^{\otimes n}}{2^n}$ are $1/6^n$ or $0$. Thus the eigenvalues of \eqref{allzeroconstraint} are either $0$ or $1/6^n$ and \eqref{allzeroconstraint} is positive. Similar to the single round protocol we have the following solution for the $s=1^n$ case
\begin{align*}
    Q_{1^n} &= \frac{\mathbbm{1}_{2^{2n}}}{6^n} - \frac{(\rho_1^{T_\mathsf{B}})^{\otimes n}}{2^n},
\end{align*}
the eigenvectors of $\rho_1^{T_\mathsf{B}}$ are again the Bell states, with respective eigenvalues $\{0,1/3,1/3,1/3\}$. The eigenvectors of $Q_{1^n}$ are all the combinations of tensor products of these four Bell states. If one of these states is the $\ket{\Phi^+}$ state the corresponding eigenvalue of $Q_{1^n}$ is $0$, otherwise the corresponding eigenvalue is $(\frac{1}{6})^n$. Since $Q_{1^n}$ is Hermitian and has only non-negative eigenvalues $Q_{1^n} \succeq 0$, as desired. The corresponding constraint in the dual program of the SDP reduces to
\begin{align*}
    \frac{\mathbbm{1}_{2^{2n}}}{6^n} - \left( \frac{\mathbbm{1}_{2^{2n}}}{6^n} - \frac{\rho_1^{\otimes n}}{2^n} \right ) - \frac{\rho_1^{\otimes n}}{2^n} = 0 \succeq 0.
\end{align*}
We see that the all zero or one case are satisfied. Now suppose we have a valid solution $Q_{s}$ for some $s \in \{0,1\}^n$ and we add a single round of equal inputs, thus
\begin{equation}\label{eq:PrimalStepAssumedPositiveSDP}
    Y - Q^{T_\mathsf{B}}_{s} - \rho_{s} / 2^n \succeq 0.
\end{equation}
And to this $n$-round protocol we add an extra round of equal inputs, we will show that 
\begin{align}\label{eq:iterativesolSDP}
    Q_{s,0} = Q_{s} \otimes \rho_0^{T_\mathsf{B}} / 2,
\end{align} 
is a valid solution for the $(n+1)$-round SDP. We have already shown in Appendix \ref{AppendixSingleRound} that $\rho_0^{T_\mathsf{B}} \succeq 0$. Since the tensor product of positive semi-definite matrices is again positive semi-definite we have that $Q_{s,0} \succeq 0$. Rewriting the first dual constraint we get
\begin{align*}
    \frac{\mathbbm{1}_{2^{2n + 2}}}{6^{n+1}} - Q_{s,0}^{T_\mathsf{B}} - \frac{\rho_{s} \otimes \rho_0}{2^{n+1}} &= \frac{\mathbbm{1}_{2^{2n + 2}}}{6^{n+1}} - Q_{s}^{T_\mathsf{B}} \otimes \frac{\rho_0}{2} - \frac{\rho_{s} \otimes \rho_0}{2^{n+1}} \\
    &= \frac{\mathbbm{1}_{2^{2n}}}{6^{n}} \otimes \frac{\mathbbm{1}_4}{6} - Q_{s}^{T_\mathsf{B}} \otimes \frac{\rho_0}{2} - \frac{\rho_{s} \otimes \rho_0}{2^{n+1}} \\
    &= \frac{\mathbbm{1}_{2^{2n}}}{6^{n}} \otimes \frac{\rho_0 + \rho_1}{3} - Q_{s}^{T_\mathsf{B}} \otimes \frac{\rho_0}{2} - \frac{\rho_{s} \otimes \rho_0}{2^{n+1}} \\
    &= \underbrace{\left( \frac{\mathbbm{1}_{2^{2n}}}{6^{n}} - Q_{s}^{T_\mathsf{B}} - \frac{\rho_{s}}{2^n} \right ) \otimes \frac{\rho_0}{2}}_\textbf{A} + \underbrace{\frac{\mathbbm{1}_{2^{2n}}}{6^{n}} \otimes \left( \frac{2\rho_1 - \rho_0}{6} \right )}_\textbf{B}.
\end{align*}
We see that part $\textbf{A}$ is a tensor product of two positive semi-definite matrices \eqref{eq:PrimalStepAssumedPositiveSDP} and $\rho_0/2$, so $\textbf{A}$ is also positive semi-definite. For part $\textbf{B}$ note that the eigenvectors of $\frac{2\rho_1 - \rho_0}{6}$ are again the Bell states with respective eigenvalues $\{0,0,0,1/6\}$, so part $\textbf{B}$ is positive semi-definite. Since sums of positive semi-definite matrices are positive semi-definite the whole constraint is positive semi-definite. Since for any amount of rounds $n$ we have a feasible solution for the $s=1^n$ case, by repeatedly adding the equal case, we can repeat the previous steps to get a feasible solution for any  reversed sorted string $1^{n}0^k$ for all $n,k$, namely 
\begin{align} \label{eq:ReversedStringPositive}
    Q_{1^n0^k} = Q_{1^n} \otimes \frac{(\rho_0^{T_\mathsf{B}})^{\otimes k}}{2^k}.
\end{align} \\
Now take some string $s \in \{0,1\}^n$, and let $P_s$ be a unitary consisting only of 2-qubit SWAP operations that reverse sorts the $n$-rounds, such that $P_s \rho_{s} P_s^\dagger = \rho_{T(s)}$, and $P^\dagger_s = P_s$. \\

We can now write down the general solution of $Q_s$ using the corresponding map $P_s$ applied to the sorted version. Let $Q_{s} = (P_s Q_{T(s)}^{T_\mathsf{B}} P_s)^{T_\mathsf{B}}$, using the fact that $P$ is a unitary matrix we then get for the corresponding constraint in the dual SDP:
\begin{align*}
    Y - Q^{T_\mathsf{B}}_{s} - \rho_{s} / 2^n \succeq 0 &\Leftrightarrow P_s (Y - Q^{T_\mathsf{B}}_{s} - \rho_{s} / 2^n)P_s \succeq 0 \\
    &\Leftrightarrow Y - P_s Q^{T_\mathsf{B}}_{s} P_s - \rho_{T(s)} / 2^n \succeq 0 \\
    &\Leftrightarrow Y - P_s ((P_s Q^{T_\mathsf{B}}_{T(s)} P_s)^{T_\mathsf{B}})^{T_\mathsf{B}} P_s - \rho_{T(s)} / 2^n \succeq 0\\
    &\Leftrightarrow Y - P_s (P_s Q^{T_\mathsf{B}}_{T(s)} P_s) P_s - \rho_{T(s)} / 2^n \succeq 0 \\
    &\Leftrightarrow Y - Q^{T_\mathsf{B}}_{T(s)} - \rho_{T(s)} / 2^n \succeq 0.
\end{align*}
Where the last expression is positive semi-definite by \eqref{eq:ReversedStringPositive}. Thus we get that the first constraint in the dual program of the $n$-round SDP for any string $s$ is positive semi-definite for any combination of rounds. \\   

The final step is to show that $Q_{s} = (P_s Q_{T(s)}^{T_\mathsf{B}} P_s)^{T_\mathsf{B}}$ is positive. Note that $P_s$ permutes both registers held by $\mathsf{A}$ and $\mathsf{B}$, respectively, of the states together, since it consists only of $2$-qubits SWAP operations. The action is thus independent of the partial transpose on the second party $\mathsf{B}$. We therefore have $Q_{s} = P_s Q_{T(s)} P_s$. Now, since $P_s$ is unitary and $Q_{T(s)}$ is positive semi-definite we have that $Q_s$ is positive semi-definite. \\

We have shown that all the constraints in the dual program of the $n$-round SDP are satisfied by our constructed $Q_{s}$ matrices, thus we have a feasible solution to the dual program with value $\Tr[Y] = (2/3)^n$, which is equal to the primal value and is attainable by a LOCC strategy. This shows that the best attacking strategy for adversaries restricted to LOCC operations playing $n$ rounds in parallel is to simply apply the single round strategy $n$ times in parallel.

\subsection{Optimal PPT Measurements for loss-tolerant \texorpdfstring{$\text{QPV}^n_{\textsf{SWAP}}$}{qpvswapnlossy} Protocol} \label{AppendixLossTolerantParallelRepetition}

We shall now modify the solution to the parallel repetition case in Appendix \ref{AppendixParallel} to give a solution to the maximization of conditional success probability under LOCC restrictions. We will optimize the probability of being correct conditioned on answering. The SDP for the lossy $n$ round parallel repetition protocol in which attackers either answer on all rounds or on none is given as:

\begin{minipage}{0.45\textwidth}
\begin{align*}
    &\textbf{Primal Program}\\
    \textbf{maximize: } &\frac{1}{2^n \eta} \sum_{s \in \{0,1\}^n} \Tr[ \tilde{\Pi}_{s} \rho_{s}] \\
    \textbf{subject to: } & \left(\sum_{s \in \{0,1\}^n} \tilde{\Pi}_{s} \right) + \tilde{\Pi}_\varnothing = \mathbbm{1}_{2^{2n}} \\
    & \Tr[\tilde{\Pi}_\varnothing \rho_s] = 1 - \eta, \ \ \ s \in \{0,1\}^n \\
    & \tilde{\Pi}_{s} \in \text{PPT}(\mathsf{A}:\mathsf{B}), \ \ \ s \in \{0,1\}^n \cup \varnothing \\
\end{align*}
\end{minipage}
\begin{minipage}{0.45\textwidth}
\begin{align*}
    &\textbf{Dual Program}\\
    \textbf{minimize: } & \frac{\Tr[\tilde{Y}] - (1 - \eta) \gamma}{\eta} \\
    \textbf{subject to: } &\tilde{Y} - \tilde{Q}^{T_\mathsf{B}}_{s} - \rho_{s} / 2^n \succeq 0, \ \ \ s \in \{0,1\}^n \\
    & 2^{2n} ( \tilde{Y} - \tilde{Q}^{T_\mathsf{B}}_\varnothing ) - \gamma \mathbbm{1}_{2^{2n}} \succeq 0 \\
    & \tilde{Y} \in \text{Herm}(\mathsf{A} \otimes \mathsf{B}) \\
    & \tilde{Q}_{s} \in \text{Pos}(\mathsf{A} \otimes \mathsf{B}), \ \ \ s \in \{0,1\}^n \cup \varnothing \\
    & \gamma \in \mathbb{R}. \\
\end{align*}
\end{minipage}

\noindent Here $\eta$ is the transmission rate and $\Tr[\tilde{\Pi}_\varnothing \rho_s] = 1 - \eta$ is the condition that attackers can only say loss with equal probability on every input. We suspect our protocol is loss-tolerant, thus we want the solution to be independent of $\eta$. It turns out multiplying the POVM elements by $\eta$ and picking $\tilde{\Pi}_\varnothing$ accordingly, i.e.\ $\tilde{\Pi}_s = \eta \Pi_s$ for every $s \in \{0,1\}^n$ and $\tilde{\Pi}_\varnothing = (1 - \eta) \mathbbm{1}_{2^{2n}}$ gives a feasible solution for the primal program with solution $(2/3)^n$.

For the dual program, we pick
\begin{align}
    \tilde{Y} = \frac{\mathbbm{1}_{2^{2n}}}{6^n}, &&\tilde{Q_s} = Q_s &&\tilde{Q}_\varnothing = 0, &&\gamma = (2/3)^n,
\end{align}
then trivially $Y \in \text{Herm}(\mathsf{A} \otimes \mathsf{B}), \tilde{Q}_{s} \in \text{Pos}(\mathsf{A} \otimes \mathsf{B}), \gamma \in \mathbb{R}$ and the first condition remains satisfied since we have not changed $Y, Q_s$ in Appendix \ref{AppendixParallel}. The second constraint becomes 
\begin{align}
    2^{2n} ( \tilde{Y} - \tilde{Q}^{T_\mathsf{B}}_\varnothing ) - \gamma \mathbbm{1}_{2^{2n}} &= \mathbbm{1}_{2^{2n}}\frac{2^n}{3^n} - (2/3)^n \mathbbm{1}_{2^{2n}} = 0 \succeq 0.
\end{align}
So all constraints in the dual are satisfied. We thus get an upper bound of
\begin{align}
    \frac{\Tr[\tilde{Y}] - (1 - \eta) \gamma}{\eta} = \frac{(2/3)^n - (1-\eta) (2/3)^n}{\eta} = \frac{\eta (2/3)^n}{\eta} = (2/3)^n.
\end{align}
Thus we finally have $p_{\text{succ},n}^{\text{max}}(\eta) = (2/3)^n$ for any $\eta \in (0, 1]$. Together with Proposition \ref{prop:lossy_swap_allornothing} in the main text this gives full loss tolerance for the $n$-round parallel repetition of our protocol.

\section{Explicit descriptions for \texorpdfstring{$\Prob_{\Omega_\beta}(\mathsf{D}_1, \mathsf{D}_2)$}{Pd1d2} and \texorpdfstring{$\Prob_{\Omega_\beta}(\mathsf{D}_1, \mathsf{D}_4)$}{Pd1d4}}
\label{app:Pd1d2_Pd1d4}

As argued in the main text, we only need to find expressions for $\Prob_{\Omega_\beta}(\mathsf{D}_1, \mathsf{D}_2)$ and $\Prob_{\Omega_\beta}(\mathsf{D}_1, \mathsf{D}_4)$ in terms of experimental parameters. Then we can recover $\Prob_{\Omega_\beta}(0)$ as well as $\Prob_{\Omega_\beta}(1)$ and therefore also $\Prob_{\Omega_\beta}(0 \,|\, \text{concl.})$ and $\Prob_{\Omega_\beta}(1 \,|\, \text{concl.})$, which are the probabilities of interest for security analysis. In what follows we will find explicit expressions of each term in our expansion

\begin{equation}
	\begin{alignedat}{3}\label{appequ: Pd1d2}
		&\Prob_{\Omega_\beta}(\mathsf{D}_1, \mathsf{D}_2) &&= \,\,&&(1-p_\text{dark})^2 \sum_{k} \Prob_{\Omega_\beta}(\mathsf{D}_1, \mathsf{D}_2 \,|\, k \text{ photons at } \mathsf{BS}_1) \Prob_{\Omega_\beta}(k \text{ photons at } \mathsf{BS}_1) \\
		&{}&&= \,\,&&(1-p_\text{dark})^2 \sum_k \bigg[ \Prob_{\Omega_\beta}(\mathsf{D}_1, \mathsf{D}_2 \,|\,\text{bunch}, k) \frac{\Prob_{\Omega_\beta}(\text{bunch} \,|\, k )}{2} \\
		&{}&&{}&&+ \Prob_{\Omega_\beta}(\mathsf{D}_1, \mathsf{D}_2 \,|\,\text{anti-bunch}, k)\Prob_{\Omega_\beta}(\text{anti-bunch} \,|\, k) \bigg] \Prob_{\Omega_\beta}(k \text{ photons at } \mathsf{BS}_1),
	\end{alignedat}
\end{equation}
and the analogous formula for $\Prob_{\Omega_\beta}(\mathsf{D}_1, \mathsf{D}_4)$. We will first treat the terms that are part of both probabilities $\Prob_{\Omega_\beta}(\mathsf{D}_1, \mathsf{D}_2)$ and $\Prob_{\Omega_\beta}(\mathsf{D}_1, \mathsf{D}_4)$. Note that the sources produce $\ell \leq 3$ photons with the following probabilities $p_\ell$:
	\begin{align*}
		p_0 &= (1- \eta_\text{source})^2, \\
		p_1 &= 2B(1-\eta_\text{source}), \\
		p_2 &= B^2 + 2p_\text{pair}(1-\eta_\text{source}), \\
		p_3 &= 2 p_\text{pair}B, \\
		p_4 &= p_\text{pair}^2
	\end{align*}
Then the probabilities that $k$ photons arrive at $\mathsf{BS}_1$ given $\ell \leq 3$ photons produced is
	\begin{align}
		\Prob_{\Omega_\beta}(k \text{ photons at } \mathsf{BS}_1) &= \sum_{\ell = k}^{3} \Prob_{\Omega_\beta}(k \text{ photons at } \mathsf{BS}_1 \,|\, \ell \text{ produced}) \Prob_{\Omega_\beta}(\ell \text{ produced}) \\
		&= \sum_{\ell = k}^{3} \binom{\ell}{k} \eta_\text{BS}^k (1-\eta_\text{BS})^{\ell-k} p_\ell.
	\end{align}
Next, we consider the probabilities to bunch or anti-bunch given $k$ photons interfering at the beam splitter. To that end, we first had to derive the output port distribution for 3 incoming photons (2 from one side, 1 from the other)\footnote{This will be a generalisation of the Hong-Ou-Mandel output port distribution $\Prob((0,2) \text{ or } (2,0)) = \frac{1 + \lvert \braket{\psi|\phi}\rvert^2}{2}$ to 3 photons.}, see lemma \ref{HOM3} in the main text and appendix \ref{app:3HOM_proof} for the proof. From lemma \ref{HOM3} we gather that for $\ket{\psi} = \ket{\phi}$ being identical and overlap $\beta = \lvert \braket{\psi|\chi} \rvert$ we have
\begin{align}
	\Prob_\text{ideal}(\text{bunch} \,|\, 3) &= 4 \lvert R \rvert^2 \lvert T \rvert^2 \frac{1 + 2 \beta^2}{4}, \\
	\Prob_\text{ideal}(\text{anti-bunch} \,|\, 3) &= 1 - \Prob_\text{ideal}(\text{bunch} \,|\, 3).
\end{align}
In the imperfect case we have to consider all cases that can appear with $k$ incoming photons, such as 2 photons in one input port, or 1 in one and 2 in the other input port. First, the case of $k=0$ does not matter because we neglect terms proportional to $p_\text{dark}^2$. The case of $k=1$ is trivial as one could say the photon always ``bunches''. Hence we can set
\begin{align}
	\Prob_{\Omega_{\beta}}(\text{bunch} \,|\, 1) &= 1/\Prob_{\Omega_{\beta}}(1 \text{ photon at } \mathsf{BS}_1).
\end{align}
For two photons we need to distinguish between the cases of both photons coming into the same input port (no interference) or one photon in each input port (interference). The respective probabilities are
\begin{align}
	\Prob_{\Omega_{\beta}}(\text{all in one mode} \,|\, 2) &= \frac{2p_\text{pair}(1-\eta_\text{source})\eta_\text{BS}^2+2p_\text{pair}B\eta_\text{BS}^2(1-\eta_\text{BS})+p_\text{pair}^2\eta_\text{BS}^2(1-\eta_\text{BS})^2}{\Prob_{\Omega_{\beta}}(2 \text{ photons at } \mathsf{BS}_1)}, \\
	\Prob_{\Omega_{\beta}}(\text{one in each mode} \,|\, 2) &= 1 - \Prob(\text{all in one mode} \,|\, 2).
\end{align}
Then the overall probability to bunch given 2 incoming photons is
\begin{align}
	\Prob_{\Omega_{\beta}}(\text{bunch} \,|\, 2) &= \left(\lvert R \rvert^4 + \lvert T \rvert^4\right)\Prob_{\Omega_{\beta}}(\text{all in one mode} \,|\, 2) + 4 \lvert R \rvert^2 \lvert T \rvert^2 \frac{1 + \beta^2}{2} \Prob_{\Omega_{\beta}}(\text{one in each mode} \,|\, 2).
\end{align}
For 3 photons we get
\begin{align}
	\Prob_{\Omega_{\beta}}(\text{bunch} \,|\, 3) = 4 \lvert R \rvert^2 \lvert T \rvert^2 \frac{1 + 2 \beta^2}{4}.
\end{align}
Finally, note that a single photon click/no-click detector clicks if at least one photon triggers it \cite{migdall2013single}. Hence the probability for a click gets higher if more than one photon reach the detector. To account for that we define $p_\text{click}(m) = \Prob(\text{photon 1 detected } \cup \dots \cup \text{ photon } m \text{ detected})$, describing the probability that a detector clicks if $m$ photons go into it. This can be expanded via the inclusion-exclusion principle and the independence of the events of each photon being detected. Fundamentally we parametrize $\Prob(\text{photon } x \text{ detected}) = \eta_\text{det}$ so that $p_\text{click}(m)$ is a function of $\eta_\text{det}$ only. For completeness we give them here:
\begin{align}
	p_\text{click}(1) &= \eta_\text{det}, \\
	p_\text{click}(2) &= 2\eta_\text{det}-\eta_\text{det}^2,\\
	p_\text{click}(3) &= 3\eta_\text{det}-3\eta_\text{det}^2+\eta_\text{det}^3.
\end{align}
We now continue with the separate expressions for detectors $(\mathsf{D_1}, \mathsf{D_2})$ and $(\mathsf{D_1}, \mathsf{D_4})$ respectively.

\subsection{Clicking probabilities for \texorpdfstring{$(\mathsf{D_1}, \mathsf{D_2})$}{Probsd1d2}}
Again, we will distinguish the cases for different numbers of $k\leq3$ interfering photons. Since we condition on $k$ photons having bunched or anti-bunched, we just need to go through the next beam splitter (where no interference happens) and add up the events for which we get a $(\mathsf{D_1}, \mathsf{D_2})$ click pattern. In the bunching case we may assume that the photons bunched into the $(\mathsf{D_1}, \mathsf{D_2})$ arm, because otherwise two dark counts would be needed, dominated by a factor $p_\text{dark}^2$. The results are
\begin{align}
	\Prob_{\Omega_{\beta}}(\mathsf{D_1}, \mathsf{D_2} \,|\, \text{bunch}, 0) &= O\left(p_\text{dark}^2\right) \\
	\Prob_{\Omega_{\beta}}(\mathsf{D_1}, \mathsf{D_2} \,|\, \text{bunch}, 1) &= p_\text{click}(1)p_\text{dark} + O\left(p_\text{dark}^2\right), \\
	\Prob_{\Omega_{\beta}}(\mathsf{D_1}, \mathsf{D_2} \,|\, \text{bunch}, 2) &= 2\lvert R \rvert^2 \lvert T \rvert^2 p_\text{click}(1)^2 + \left( \lvert R \rvert^4 + \lvert T \rvert^4 \right) p_\text{click}(2) p_\text{dark} + O\left(p_\text{dark}^2\right), \\
	\Prob_{\Omega_{\beta}}(\mathsf{D_1}, \mathsf{D_2} \,|\, \text{bunch}, 3) &= 3 \left( \lvert R \rvert^4 \lvert T \rvert^2 + \lvert R \rvert^2 \lvert T \rvert^4 \right) p_\text{click}(1) p_\text{click}(2) + \left( \lvert R \rvert^6 + \lvert T \rvert^6 \right) p_\text{click}(3) p_\text{dark} \\
	& \quad + O\left(p_\text{dark}^2\right).
\end{align}
Similarly, one gets\footnote{For $\Prob_{\Omega_{\beta}}(\mathsf{D_1}, \mathsf{D_2} \,|\, \text{anti-bunch}, 1)$ we consider the photon to be leaving into the $(\mathsf{D_3}, \mathsf{D_4})$ arm because we already have the $(\mathsf{D_1}, \mathsf{D_2})$ case in $\Prob_{\Omega_{\beta}}(\mathsf{D_1}, \mathsf{D_2} \,|\, \text{bunch}, 1)$. This again incurs a factor $p_\text{dark}^2$.}
\begin{align}
	\Prob_{\Omega_{\beta}}(\mathsf{D_1}, \mathsf{D_2} \,|\, \text{anti-bunch}, 0) &= O\left(p_\text{dark}^2\right) \\
	\Prob_{\Omega_{\beta}}(\mathsf{D_1}, \mathsf{D_2} \,|\, \text{anti-bunch}, 1) &= O\left(p_\text{dark}^2\right), \\
	\Prob_{\Omega_{\beta}}(\mathsf{D_1}, \mathsf{D_2} \, | \, \text{anti-bunch}, 2) &= (1-p_\text{click}(1))p_\text{click}(1)p_\text{dark} + O\left(p_\text{dark}^2\right), \\
	\Prob_{\Omega_{\beta}}(\mathsf{D_1}, \mathsf{D_2} \, | \, \text{anti-bunch}, 3) &= \frac{1}{2} \left( 2 \lvert R \rvert^2 \lvert T \rvert^2 \left( 1- p_\text{click}(1) \right)^2 + \left( \lvert R \rvert^4 + \lvert T \rvert^4 \right) \left( 1- p_\text{click}(2) \right) \right) \\
	& \quad \cdot p_\text{click}(1)p_\text{dark} + \frac{1}{2} \left( 1 - p_\text{click}(1) \right) \Prob_{\Omega_{\beta}}(\mathsf{D_1}, \mathsf{D_2} \,|\, \text{bunch}, 2) + O\left(p_\text{dark}^2\right).
\end{align}

\subsection{Clicking probabilities for \texorpdfstring{$(\mathsf{D_1}, \mathsf{D_4})$}{Probsd1d4}}
Analogously we treat the case for the $(\mathsf{D_1}, \mathsf{D_4})$ click pattern. This yields
\begin{align}
	\Prob_{\Omega_{\beta}}(\mathsf{D_1}, \mathsf{D_4} \,|\, \text{bunch}, 1) &= \lvert T \rvert^2 p_\text{click}(1)p_\text{dark} + O\left(p_\text{dark}^2\right), \\
	\Prob_{\Omega_{\beta}}(\mathsf{D_1}, \mathsf{D_4} \,|\, \text{bunch}, 2) &= \lvert T \rvert^4 p_\text{click}(2) p_\text{dark} + 2 \lvert R \rvert^2 \lvert T \rvert^2 p_\text{click}(1) \left(1 - p_\text{click}(1) \right) p_\text{dark} + O\left(p_\text{dark}^2\right), \\
	\Prob_{\Omega_{\beta}}(\mathsf{D_1}, \mathsf{D_4} \,|\, \text{bunch}, 3) &= \lvert T \rvert^6 p_\text{click}(3) p_\text{dark} + 3 \lvert R \rvert^2 \lvert T \rvert^4 p_\text{click}(2) \left( 1 - p_\text{click}(1)\right)p_\text{dark} \\
	&\quad + 3 \lvert R \rvert^4 \lvert T \rvert^2 p_\text{click}(1) \left( 1 - p_\text{click}(2)\right)p_\text{dark} + O\left(p_\text{dark}^2\right).
\end{align}
And for the anti-bunching cases,
\begin{align}
	\Prob_{\Omega_{\beta}}(\mathsf{D_1}, \mathsf{D_4} \,|\, \text{anti-bunch}, 0) &= O\left(p_\text{dark}^2\right) \\ \Prob_{\Omega_{\beta}}(\mathsf{D_1}, \mathsf{D_4} \,|\, \text{anti-bunch}, 1) &= O\left(p_\text{dark}^2\right), \\
	\Prob_{\Omega_{\beta}}(\mathsf{D_1}, \mathsf{D_4} \, | \, \text{anti-bunch}, 2) &= \lvert T \rvert^4 p_\text{click}(1)^2 + 2 \lvert R \rvert^2 \lvert T \rvert^2 p_\text{click}(1) \left(1 - p_\text{click}(1)\right) p_\text{dark} + O\left(p_\text{dark}^2\right), \\
	\Prob_{\Omega_{\beta}}(\mathsf{D_1}, \mathsf{D_4} \, | \, \text{anti-bunch}, 3) &= \lvert T \rvert^6 p_\text{click}(2) + 2 \lvert R \rvert^2 \lvert T \rvert^4 p_\text{click}(1)^2 \left( 1 - p_\text{click}(1) \right) \\
	&\quad+ \left( \lvert R \rvert^4 \lvert T \rvert^2 + \lvert T \rvert^6 \right) p_\text{click}(1) \left( 1 - p_\text{click}(2) \right) p_\text{dark} \\
	&\quad+ \left( 2 \lvert R \rvert^2 \lvert T \rvert^4 + 2 \lvert R \rvert^4 \lvert T \rvert^2 \right) p_\text{click}(1) \left(1 - p_\text{click}(1)\right)^2 p_\text{dark} \\
	&\quad+\lvert R \rvert^2 \lvert T \rvert^4 p_\text{click}(2) \left( 1 - p_\text{click}(1) \right) p_\text{dark} + O\left(p_\text{dark}^2\right).
\end{align}
Now we have expanded all parts of equations \eqref{equ: Pd1d2} and \eqref{equ: Pd1d4}.

\section{Original SWAP-test setup with one beam splitter}
\label{app:Pd1_Pd1d2_swap}
We have done the analogous expansions as in the previous appendix \ref{app:Pd1d2_Pd1d4} also for the original setup of the SWAP-test with just one $(R,T)$ beam splitter and two detectors $\mathsf{D}_1$, $\mathsf{D}_2$. For brevity we won't include all the formulae here, but they look very similar to the ones in \ref{app:Pd1d2_Pd1d4}. In this case we break the problem down to finding $\Prob_{\Omega_{\beta}}(\mathsf{D_1})$ and $\Prob_{\Omega_{\beta}}(\mathsf{D_1}, \mathsf{D_2})$, which are similarly expanded as in \eqref{equ: Pd1d2} and \eqref{equ: Pd1d4}. 

\section{Proof of 3-photon output port distribution}
\label{app:3HOM_proof}
We generalise the well known output probability distribution of the HOM effect after 2 photons entered a (symmetric) $(R,T)$ beam splitter in the two different input ports and are detected in the output ports. We denote detector clicks as $(c_1, c_2)$ with $c_k$ indicating the number of photons registered at detector $k$. Then, if a photonic qu$d$it in state $\ket{\psi}$ enters the beam splitter from one input port and $\ket{\phi}$ does so from the other, one gets \cite{loudon2000quantum}
\begin{align}
	\Prob((2,0) \text{ or } (0,2)) = 4 \lvert R \rvert^2 \lvert T \rvert^2 \frac{1+\lvert \braket{\psi|\phi}\rvert^2}{2} \\
	\Prob((1,1)) = 1 - \Prob((2,0) \text{ or } (0,2)).
\end{align}

\noindent Here we generalise this to the 3-photon case, yielding the following lemma.

\begin{lemma}
	Consider photonic qubits $\ket{\psi}, \ket{\phi}$ arriving at one input port of a (symmetric) $(R,T)$ beam splitter and $\ket{\chi}$ at the other input port. Then the output port distribution is given by
	\begin{align}\label{HOM3_appendix}
		p_\emph{bunch} = \Prob((3,0) \emph{ or } (0,3)) &= 4 \lvert R \rvert^2 \lvert T \rvert^2 \frac{\lvert \braket{\psi|\phi} \rvert^2 + \lvert\braket{\psi|\chi}\rvert^2 + \lvert\braket{\phi|\chi}\rvert^2}{2 \cdot (1 + \lvert\braket{\psi|\phi}\rvert^2)}, \\
		p_\emph{anti-bunch} = \Prob((2,1) \emph{ or } (1,2)) &= 1-\Prob((3,0) \emph{ or } (0,3)).
	\end{align} 
\end{lemma}

\begin{proof}
	For notational simplicity we give the proof for the 50/50 beam splitter case, for which $\lvert R \rvert^2 = \lvert T \rvert^2 = 1/2$. The same calculation can be done with general coefficients $(R,T)$. Let there be 3 incoming photonic qubits in the states
	\begin{align}
		\ket{\psi} &= \alpha_0 \ket{0} + \alpha_1 \ket{1} \\
		\ket{\phi} &= \beta_0 \ket{0} + \beta_1 \ket{1} \\
		\ket{\chi} &= \gamma_0 \ket{0} + \gamma_1 \ket{1},
	\end{align}
	with $\ket{\psi}$ and $\ket{\phi}$ entering one input port, and $\ket{\chi}$ entering the other. For simplicity, we consider photons with $H/V$ polarization as the qubit basis states. The spatial modes going into the input ports of the beam splitter are denoted by $a, b$. We first need to write down the (normalised by $\mathcal{N}$) input Fock state, which is
	\begin{align*}
		\ket{\text{in}} &= &&\mathcal{N} (\alpha_0  \mathtt{a_H}^\dag + \alpha_1 \mathtt{a_V}^\dag)(\beta_0 \mathtt{a_H}^\dag + \beta_1 \mathtt{a_V}^\dag)(\gamma_0 \mathtt{b_H}^\dag + \gamma_1 \mathtt{b_V}^\dag)\ket{0_a, 0_b} \\
		&= &&\mathcal{N} \Big[ \sqrt{2} \alpha_0 \beta_0 \gamma_0 \ket{2_H, 1_H} +  \sqrt{2} \alpha_0 \beta_0 \gamma_1 \ket{2_H, 1_V} + ( \alpha_0 \beta_1 + \alpha_1 \beta_0 ) \gamma_0 \ket{1_H 1_V, 1_H} + \\
		&{}&&( \alpha_0 \beta_1 + \alpha_1 \beta_0 ) \gamma_1 \ket{1_H 1_V, 1_V} + \sqrt{2} \alpha_1 \beta_1 \gamma_0 \ket{2_V, 1_H} +  \sqrt{2} \alpha_1 \beta_1 \gamma_1 \ket{2_V, 1_V} \Big].
	\end{align*}
	Requiring $\lVert \ket{\text{in}} \rVert = 1$ gives an expression for the normalisation constant in terms of amplitudes that can be rewritten as $\mathcal{N} = \frac{1}{\sqrt{1+\lvert \braket{\psi|\phi} \rvert^2}}$. We needed $\ket{\text{in}}$ in terms of actual Fock states, as in the second line above, in order to be able to find the normalisation factor $\mathcal{N}$. When passing through it, the $50/50$ beam splitter acts as a unitary $U$ on the photonic creation operators as
	\begin{align}
		\mathtt{a_H}^\dag &\mapsto \frac{\mathtt{c_H}^\dag + i \mathtt{d_H}^\dag }{\sqrt{2}} &&\mathtt{b_H}^\dag \mapsto \frac{i\mathtt{c_H}^\dag + \mathtt{d_H}^\dag }{\sqrt{2}} \\
		\mathtt{a_V}^\dag &\mapsto \frac{\mathtt{c_V}^\dag + i \mathtt{d_V}^\dag }{\sqrt{2}} &&\mathtt{b_V}^\dag \mapsto \frac{i\mathtt{c_V}^\dag + \mathtt{d_V}^\dag }{\sqrt{2}}.
	\end{align}
	Therefore, after a considerable amount of algebra, we arrive at
	\begin{align*}
		\ket{\text{in}} \mapsto \ket{\text{out}} = \frac{\mathcal{N}}{2 \sqrt{2}} \bigg[ & \textcolor{QuSoft}{i \sqrt{6} \alpha_0 \beta_0 \gamma_0 \ket{3_H, 0} + i \sqrt{2} (\alpha_1 \beta_0 \gamma_0 + \alpha_0 \beta_1 \gamma_0 + \alpha_0 \beta_0 \gamma_1) \ket{2_H 1_V, 0} +} \\
		& \textcolor{QuSoft}{i \sqrt{2} (\alpha_1 \beta_1 \gamma_0 + \alpha_1 \beta_0 \gamma_1 + \alpha_0 \beta_1 \gamma_1) \ket{1_H 2_V, 0} + i \sqrt{6} \alpha_1 \beta_1 \gamma_1 \ket{3_V, 0}} \textcolor{UltramarineBlue}{-} \\
		& \textcolor{UltramarineBlue}{\sqrt{2} \alpha_0 \beta_0 \gamma_0 \ket{2_H, 1_H} - \sqrt{2} \alpha_0 \beta_0 \gamma_1 \ket{1_H 1_V, 1_H} +} \\
		& \textcolor{UltramarineBlue}{\sqrt{2}(\alpha_1 \beta_1 \gamma_0 - \alpha_1 \beta_0 \gamma_1 - \alpha_0 \beta_1 \gamma_1) \ket{2_V, 1_H} + i \sqrt{2} \alpha_0 \beta_0 \gamma_0 \ket{1_H, 2_H} +} \\
		& \textcolor{UltramarineBlue}{i \sqrt{2} (\alpha_1 \beta_0 \gamma_0 + \alpha_0 \beta_1 \gamma_0 - \alpha_0 \beta_0 \gamma_1) \ket{1_V, 2_H}} \textcolor{QuSoft}{- \sqrt{6} \alpha_0 \beta_0 \gamma_0 \ket{0, 3_H}} \textcolor{UltramarineBlue}{+} \\
		& \textcolor{UltramarineBlue}{\sqrt{2}(\alpha_0 \beta_0 \gamma_1 - \alpha_1 \beta_0 \gamma_0 - \alpha_0 \beta_1 \gamma_0) \ket{2_H, 1_V} - 2i \alpha_1 \beta_1 \gamma_0 \ket{1_H 1_V, 1_V} -} \\
		& \textcolor{UltramarineBlue}{\sqrt{2} \alpha_1 \beta_1 \gamma_1 \ket{2_V, 1_V} + 2i \alpha_0 \beta_0 \gamma_1 \ket{1_H, 1_H 1_V} + 2i \alpha_1 \beta_1 \gamma_0 \ket{1_V, 1_H 1_V}} \textcolor{QuSoft}{-} \\
		& \textcolor{QuSoft}{\sqrt{2}(\alpha_1 \beta_0 \gamma_0 + \alpha_0 \beta_1 \gamma_0 + \alpha_0 \beta_0 \gamma_1) \ket{0, 2_H 1_V}} \textcolor{UltramarineBlue}{+} \\
		& \textcolor{UltramarineBlue}{i \sqrt{2} (\alpha_1 \beta_0 \gamma_1 + \alpha_0 \beta_1 \gamma_1 - \alpha_1 \beta_1 \gamma_0) \ket{1_H, 2_V} + i \sqrt{2} \alpha_1 \beta_1 \gamma_1 \ket{1_V, 2_V}} \textcolor{QuSoft}{-} \\
		& \textcolor{QuSoft}{\sqrt{2}(\alpha_1 \beta_1 \gamma_0 + \alpha_1 \beta_0 \gamma_1 + \alpha_0 \beta_1 \gamma_1) \ket{0, 1_H 2_V} - \sqrt{6} \alpha_1 \beta_1 \gamma_1 \ket{0, 3_V}} \bigg],
	\end{align*}
	where the \textcolor{QuSoft}{red terms} indicate states with all 3 photons in one detector arm, and the \textcolor{UltramarineBlue}{blue terms} states with photons in both detector arms. One can check that $\ket{\text{out}}$ is normalised and thus indeed a valid quantum state. This yields
	\begin{align*}
		\Prob((3,0) \text{ or } (0,3)) = \frac{\mathcal{N}^2}{2} \Big[ &3 |\alpha_0|^2 |\beta_0|^2 |\gamma_0|^2 + 3 |\alpha_1|^2 |\beta_1|^2 |\gamma_1|^2 + \\
		& | (\alpha_0 \beta_1 + \alpha_1 \beta_0)\gamma_0 + \alpha_0 \beta_0 \gamma_1 |^2 + | (\alpha_0 \beta_1 + \alpha_1 \beta_0)\gamma_1 + \alpha_1 \beta_1 \gamma_0 |^2 \Big].
	\end{align*}
	Writing the coefficients $\alpha_k$, $\beta_k$, $\gamma_k$ in terms of their respective Bloch angles $(\varphi, \theta)$ simplifies this, after some algebra, to
	\begin{align*}
		\Prob((3,0) \text{ or } (0,3)) = \mathcal{N}^2 \bigg[ \frac{3}{4} + \frac{1}{4} ( \mathbf{r}_\psi \cdot \mathbf{r}_\phi + \mathbf{r}_\psi \cdot \mathbf{r}_\chi + \mathbf{r}_\phi \cdot \mathbf{r}_\chi) \bigg],
	\end{align*}
	with the Bloch vectors $\mathbf{r}$ corresponding to their respective states. We now use the correspondence between the dot product between Bloch vectors and the inner product between Hilbert space states, namely
	\begin{align}
		\lvert \braket{\alpha|\beta} \rvert^2 = \frac{1}{2} ( 1 + \mathbf{r}_\alpha \cdot \mathbf{r}_\beta).
	\end{align}
	Inserting everything into the above probability and simplifying finally yields
	\begin{align}
		\Prob((3,0) \text{ or } (0,3)) = \frac{\lvert \braket{\psi|\phi} \rvert^2 + \lvert\braket{\psi|\chi}\rvert^2 + \lvert\braket{\phi|\chi}\rvert^2}{2 \cdot (1 + \lvert\braket{\psi|\phi}\rvert^2)},
	\end{align}
	where we also inserted $\mathcal{N}$. Accordingly, this also gives us
	\begin{align}
		\Prob((2,1) \text{ or } (1,2)) &= 1 - \Prob((3,0) \text{ or } (0,3))
	\end{align}
\end{proof}

\end{appendices}
\end{document}